\PassOptionsToPackage{unicode}{hyperref}
\PassOptionsToPackage{hyphens}{url}
\PassOptionsToPackage{dvipsnames,svgnames,x11names}{xcolor}
\documentclass[12pt]{article}

\usepackage{amsmath,amssymb}
\usepackage{iftex}
\ifPDFTeX
  \usepackage[T1]{fontenc}
  \usepackage[utf8]{inputenc}
  \usepackage{textcomp}
\else
  \usepackage{unicode-math}
  \defaultfontfeatures{Scale=MatchLowercase}
  \defaultfontfeatures[\rmfamily]{Ligatures=TeX,Scale=1}
\fi
\usepackage{lmodern}
\IfFileExists{upquote.sty}{\usepackage{upquote}}{}
\IfFileExists{microtype.sty}{%
  \usepackage[]{microtype}
  \UseMicrotypeSet[protrusion]{basicmath}
}{}

\usepackage{graphicx}
\usepackage{subcaption}
\usepackage{booktabs}
\usepackage{longtable,array}
\usepackage{multirow}
\usepackage{multicol}
\usepackage{makecell}
\usepackage{xcolor}
\usepackage[normalem]{ulem}
\usepackage{wrapfig}
\usepackage{enumerate}
\usepackage{bbm,bm}
\usepackage{times,enumitem}
\usepackage{upgreek}
\usepackage{caption}
\usepackage{amsfonts}
\usepackage{nicefrac}
\usepackage{mathtools}
\usepackage{amsthm}
\usepackage{url}

\usepackage[]{natbib}

\usepackage{hyperref}
\usepackage{bookmark}
\IfFileExists{xurl.sty}{\usepackage{xurl}}{}
\hypersetup{
  pdftitle={Diffusion-Guided Counterfactual Distribution Learning with Intrinsic-Dimension Adaptivity},
  pdfauthor={Anonymous},
  pdfkeywords={3 to 6 keywords, that do not appear in the title},
  colorlinks=true,
  linkcolor={blue},
  filecolor={Maroon},
  citecolor={Blue},
  urlcolor={Blue},
  pdfcreator={LaTeX}
}

\usepackage[capitalize,noabbrev]{cleveref}

\theoremstyle{plain}
\newtheorem{theorem}{Theorem}[section]
\newtheorem{proposition}[theorem]{Proposition}
\newtheorem{lemma}[theorem]{Lemma}
\newtheorem{corollary}[theorem]{Corollary}
\theoremstyle{definition}

\newtheorem{assumption}[theorem]{Assumption}
\theoremstyle{remark}

\newcommand{\E}{\mathbb{E}}

\newcommand{\Pn}{\mathbb{P}_n}

\newcommand{\Pb}{\mathbb{P}}

\newcommand{\var}{\text{var}}

\DeclareMathOperator*{\argmin}{arg\,min}

\makeatletter
\newcommand{\customlabel}[2]{%
   \protected@write \@auxout {}{\string \newlabel {#1}{{#2}{\thepage}{#2}{#1}{}} }%
   \hypertarget{#1}{#2}
}
\makeatother

\newcommand{\anon}{1}

\begin{document}

\def\spacingset#1{\renewcommand{\baselinestretch}%
{#1}\small\normalsize} \spacingset{1}


\if1\anon
{
  \title{\bf Geometry Adaptive Counterfactual Distribution Learning with Diffusion-Guided Smoothing}
  \author{
    Kwangho Kim\thanks{The author gratefully acknowledges support from the Samsung Science and Technology Foundation under Project Number SSTF-BA2502-01, and thanks Edward Kennedy for many insightful comments and suggestions.}\hspace{.2cm}\\
    Department of Statistics, Korea University
  }
  \maketitle
} \fi

\if0\anon
{
  \bigskip
  \bigskip
  \bigskip
  \begin{center}
    {\LARGE\bf Geometry Adaptive Counterfactual Distribution Learning with Diffusion-Guided Smoothing}
  \end{center}
  \medskip
} \fi

\bigskip
\begin{abstract}
We study counterfactual distribution learning for high-dimensional outcomes
whose laws may concentrate near lower-dimensional structure. Standard isotropic
smoothing ignores this geometry, leading to unfavorable scaling and unstable
local inference. We propose semiparametrically debiased, diffusion-guided
estimators for smoothed counterfactual densities and their scores. These
estimators combine causal nuisance adjustment with geometry-adaptive
localization driven by a learned diffusion score, yielding second-order
nuisance remainders while aligning smoothing with local outcome geometry. We
derive asymptotic expansions, integrated risk bounds, and simultaneous
inference for smoothed densities and Stein functionals, with extensions to
ambient density and score targets under additional approximation conditions.
The variance term in the risk bounds is governed by the concentration of the smoothing
operator: suitable geometric conditions yield intrinsic rather than ambient
scaling, while an explicit drift term quantifies the cost of estimating the
geometry. CelebA-based semi-synthetic experiments show faster error decay and improved stability for geometry-adaptive one-step methods, illustrating their applicability to high-dimensional embeddings.
\end{abstract}

\noindent%
{\it Keywords:} causal inference; counterfactual density; influence functions; effective dimension; score estimation
\vfill

\newpage
\spacingset{1.5} 

\section{Introduction}

Counterfactuals ask what would have happened under a specified intervention,
whether or not it was actually taken, and are foundational for scientific
explanation and data driven decision making. In many applications, however, a mean or risk contrast does not fully describe the scientific target; policy relevant features may instead appear in tail
probabilities, dispersion, multimodality, subgroup heterogeneity, or local
changes in the counterfactual distribution.

Recent work has begun to learn unobserved counterfactual distributions and
general functionals directly, but existing approaches leave a gap between
flexible generative models and semiparametric tools. Structural or graphical
generative methods can model counterfactual outcomes mechanistically, often
under a specified directed acyclic graph, but are not primarily designed for
statistical efficiency, robustness, or confidence bands
\citep[e.g.,][]{pawlowski2020deep,sanchez2022diffusion}. Other generative
methods provide flexible counterfactual sampling, but typically emphasize sample
quality or global distributional discrepancies rather than local density and
score inference
\citep[e.g.,][]{wu2024counterfactual,ma2024diffpo}. Semiparametric methods move
closer to inference by targeting smoothed counterfactual functionals and using
influence-function correction to reduce first-order sensitivity to nuisance
estimation
\citep[e.g.,][]{kim2018causal,kennedy2023semiparametric,martinezcounterfactual23}.
However, these methods generally rely on smoothing operators whose local shape
is fixed in ambient coordinates. The difficulty is most pronounced when outcomes are ultrahigh dimensional but the relevant counterfactual distribution concentrates near lower dimensional structure, as in
images, trajectories, or rich clinical measurements. In such settings,
isotropic smoothing can allocate bandwidth to directions along which the law has
little local variation, leading to a severe ambient dimensional bias variance
tradeoff.

In parallel, score-based diffusion models have become powerful tools for
high-dimensional distribution learning and sampling
\citep{vincent2011connection,ho2020denoising,song2019generative,song2021sde,song2021maximum}. Score fields provide local geometric information useful for data-adaptive neighborhoods and, under low-dimensional structure, can exhibit intrinsic rather than ambient complexity
\citep[e.g.,][]{hyvarinen2005,pidstrigach2022score,tang2024adaptivity,liang2024diffusion}. These advances in generative modeling have also begun to influence
counterfactual learning; for example, \citet{luedtke2025doublegen} use
debiasing corrections to train generative models under confounding. Yet using diffusion geometry for counterfactual inference is nontrivial. The
learned geometry changes the representation of the counterfactual distribution, and
thus the target being estimated. Standard one-step corrections remove
first-order bias from causal nuisance estimation, but do not automatically
account for this geometry-induced drift.

This paper develops diffusion-guided methods for counterfactual distribution
learning and inference. The core idea is to use diffusion geometry to smooth the
counterfactual distribution along directions of effective variation, rather than treating
all ambient coordinates equally. When outcomes concentrate near lower-dimensional
structure or exhibit anisotropic curvature, this can reduce the effective
smoothing complexity. We combine this geometry-guided smoothing with
semiparametric one-step estimation, so causal nuisance errors enter through
second-order product remainders, while geometry-learning error appears as an
explicit target drift. The resulting question is when the intrinsic gain from
geometry outweighs the cost of learning it.

Our contributions address this question in three steps. First, we formalize
well-posed smoothed counterfactual targets, including density and score targets,
that remain meaningful for ultrahigh-dimensional or nearly singular outcome
laws. Second, we develop two influence-function-based estimators that use
diffusion-guided smoothing to encode outcome geometry and one-step corrections
to remove first-order nuisance bias. Third, we derive expansions, risk bounds,
and confidence procedures that separate oracle stochastic error, smoothing bias,
causal nuisance error, and geometry-learning drift. This decomposition clarifies
when intrinsic rather than ambient scaling is achievable, and what additional
control is needed when the geometry is learned.

\section{Counterfactual Laws and Smoothed Targets}
\label{sec:prelim}

We adopt the potential outcomes framework used in
counterfactual density, distributional, and generative learning
\citep[e.g.,][]{kim2018causal,kennedy2023semiparametric,luedtke2025doublegen}.
Let \(Z=(X,A,Y)\sim\Pb\) be an independent observation, where
\(A\in\mathcal A\) is a discrete intervention,
\(X\in\mathcal X\subseteq\mathbb R^k\) is a covariate vector, and
\(Y\in\mathcal Y\subseteq\mathbb R^d\) is a potentially high dimensional
outcome. For each \(a\in\mathcal A\), let \(Y^a\) denote the potential outcome
under intervention \(A=a\), and define the marginal counterfactual law
\[
\Pb_a(B)=\Pb(Y^a\in B),\quad B\in\mathcal B(\mathbb R^d),
\]
where \(\mathcal B(\mathbb R^d)\) is the Borel sigma-field. This law is the primitive causal object from which all targets below are derived.

Throughout, we assume the standard identification conditions of consistency, conditional exchangeability, and positivity \citep[][Chapter 12]{imbens2015causal}:
$
Y=Y^A,
\,
Y^a\perp A\mid X,
\,
\pi_a(X)\equiv \Pb(A=a\mid X)\ge \pi_{\min}>0
\, \text{ a.s.}, \forall a \in \mathcal{A}.
$
Then, for any integrable
\(f:\mathbb R^d\to\mathbb R\), $\E\{f(Y^a)\}
=
\E\left[
\frac{\mathbbm 1\{A=a\}}{\pi_a(X)}f(Y)
\right]$, so the marginal counterfactual law is identified as
\[
\Pb_a(B)
=
\E\left[
\frac{\mathbbm 1\{A=a\}}{\pi_a(X)}
\mathbbm 1\{Y\in B\}
\right].
\]

Because \(\Pb_a\) may concentrate near, or on, a lower-dimensional set, it need
not admit an ordinary density with respect to \(d\)-dimensional Lebesgue
measure. We therefore target smoothed functionals of \(\Pb_a\). Let \(h>0\) be a smoothing scale and let
\(K_h:\mathbb R^d\times\mathbb R^d\to[0,\infty)\) be measurable and satisfy $\int_{\mathbb R^d}K_h(y\mid u)dy=1$ for every \(u\in\mathbb R^d\). For \(B\in\mathcal B(\mathbb R^d)\), the kernel induces the smoothed counterfactual law
\[
\Pb_{a,h}(B)
=
\int
\left\{
\int_B K_h(y\mid u)dy
\right\}
d\Pb_a(u),
\]
with ambient smoothed density
\begin{align}
\label{eqn:kernel-smoothing}
p_{a,h}(y)
=
\int K_h(y\mid u)d\Pb_a(u)
=
\E\{K_h(y\mid Y^a)\}.
\end{align}

The target \(p_{a,h}\) should not be confused with an ordinary ambient density
of \(\Pb_a\). If \(\Pb_a\) concentrates on a lower-dimensional structure, for
example an \(m\)-dimensional submanifold of \(\mathbb R^d\), then \(\Pb_a\) may
be singular in the ambient space even though it has meaningful intrinsic
structure. The smoothed density \(p_{a,h}\) is instead an ambient representation
of \(\Pb_a\), well defined for every fixed \(h>0\). Its statistical complexity
is governed by the smoothing kernel: isotropic kernels treat all \(d\)
coordinates symmetrically and can inherit ambient-dimensional scaling
\citep[e.g.,][]{kim2018causal,kennedy2023semiparametric}, whereas the
geometry-adaptive kernels studied below can concentrate along lower-dimensional
directions of variation.

We also consider the score of the smoothed counterfactual law. Define
\[
g_{a,h}(y)
=
\E\{\nabla_y K_h(y\mid Y^a)\}.
\]
Whenever differentiation under the integral sign is valid and \(p_{a,h}(y)>0\),
\begin{align}
\label{eqn:score-smoothing}
s_{a,h}(y)
=
\nabla_y\log p_{a,h}(y)
=
\frac{g_{a,h}(y)}{p_{a,h}(y)}.
\end{align}
The score provides local directional information for downstream tasks such as
Stein functionals, transport, and sampling. Under additional structure, its
statistical complexity may also depend on intrinsic rather than ambient
dimension \citep{pidstrigach2022score,chen2023score,tang2024adaptivity}.

Sections~\ref{sec:dis} and~\ref{sec:dss} specialize \(K_h\) to diffusion-guided smoothing operators, develop debiased estimators for the resulting density and score targets, and derive error decompositions that isolate causal nuisance remainders from geometry-induced drift.

\section{Diffusion-informed Density Smoothing}
\label{sec:dis}

\subsection{Estimation}
\label{subsec:dis-estimation}

We now define a diffusion-guided choice of \(K_h\) and construct the
corresponding one-step estimator for \(p_{a,h}\).
Classical smoothed counterfactual density estimators use ambient-coordinate
isotropic kernels, such as \(K_h(y\mid u)=h^{-d}K\{(y-u)/h\}\)
\citep{robins2001comment,kim2018causal,kennedy2023semiparametric}. Another valid choice is the analytic Gaussian perturbation induced by a forward
diffusion transition,
\(K_h(y\mid u)=q_{\varepsilon_h}(y\mid u)
=\mathcal N\{y;\bar m_{\varepsilon_h}(u),\bar\Sigma_{\varepsilon_h}\}\),
where \(\varepsilon_h\) is the diffusion time associated with scale \(h\), and
\(\bar m_{\varepsilon_h}\) and \(\bar\Sigma_{\varepsilon_h}\) are determined by
the forward drift and diffusion scale
\citep{song2021sde}. These choices define valid smoothed targets, but their local shape is fixed by the ambient coordinates or by the forward diffusion, rather than by the learned geometry of the counterfactual law. Consequently, their stochastic behavior can
still be governed by ambient-dimensional smoothing.

Our construction instead uses the diffusion score to warp the smoothing kernel.
Let \(q_{\varepsilon_h}(\cdot\mid u)\) be the transition density of a forward
diffusion \(dY_t=b(Y_t,t)dt+\sigma(t)dW_t\), with diffusion time
\(\varepsilon_h\) associated with scale \(h\). Let \(p_{\theta_0,t}\) denote the time-\(t\) density obtained by
forward-diffusing a common reference law for shared geometry or \(\Pb_a\) for
intervention-specific geometry. Define
\(s_{\theta_0}^{\rm diff}(z,t)=\nabla_z\log p_{\theta_0,t}(z)\), distinct from
\(s_{a,h}\) in \eqref{eqn:score-smoothing}. Inspired by the
probability-flow construction of score-based diffusion models \citep{song2021sde}, let
\(\Phi_{\varepsilon_h,\theta_0}\) be the reverse probability-flow map generated
by \(b(z,t)-\sigma(t)^2s_{\theta_0}^{\rm diff}(z,t)/2\), from time \(\varepsilon_h\) to
time \(0\), and write
\(J_{\varepsilon_h,\theta_0}(y)
=
|\det\{\nabla_y\Phi_{\varepsilon_h,\theta_0}^{-1}(y)\}|\). We define
\begin{equation}
\label{eqn:diff-informed-bump}
\kappa_{h,\theta_0}(y;u)
=
q_{\varepsilon_h}\!\left(
\Phi_{\varepsilon_h,\theta_0}^{-1}(y)
\mid u
\right)
J_{\varepsilon_h,\theta_0}(y).
\end{equation}
Thus \(\Phi_{\varepsilon_h,\theta_0}^{-1}(y)\) evaluates the forward density at
the noisy counterpart of \(y\), while \(J_{\varepsilon_h,\theta_0}(y)\) corrects
volume. Since \(\Phi_{\varepsilon_h,\theta_0}\) is driven by the diffusion score
\(s_{\theta_0}^{\rm diff}\), it can reshape local neighborhoods according to
the geometry of the reference law. Hence
\(y\mapsto\kappa_{h,\theta_0}(y;u)\) remains an ambient smoothing density, but
its local shape can align with lower-dimensional directions of variation
rather than fixed ambient coordinates.

The resulting diffusion-guided smoothed counterfactual density is
\begin{align} \label{eqn:pop-smoothed-density}
p^{\mathrm{geo}}_{a,h}(y)
=
\E\{\kappa_{h,\theta_0}(y;Y^a)\}
=
\int \kappa_{h,\theta_0}(y;u)\,d\Pb_a(u),
\end{align}
and is identified as
\[
p^{\mathrm{geo}}_{a,h}(y)
=
\E\left[
\E\{\kappa_{h,\theta_0}(y;Y)\mid X,A=a\}
\right].
\]

For fixed geometry \(\theta_0\) and smoothing scale \(h\), define
\(\pi_a(x)=\Pb(A=a\mid X=x)\),
\(\mu_{a,h,\theta_0}(x;y)=
\E\{\kappa_{h,\theta_0}(y;Y)\mid X=x,A=a\}\), and let
\(\eta_0=\{\pi_a,\mu_{a,h,\theta_0},\kappa_{h,\theta_0}\}\). The uncentered
influence-function map for \(p^{\mathrm{geo}}_{a,h}(y)\) is
\[
\varphi^{\mathrm{geo}}_h(Z;y,\eta_0)
=
\frac{\mathbbm 1\{A=a\}}{\pi_a(X)}
\{\kappa_{h,\theta_0}(y;Y)-\mu_{a,h,\theta_0}(X;y)\}
+
\mu_{a,h,\theta_0}(X;y),
\]
with the efficient influence function
\(\phi^{\mathrm{geo}}_h(Z;y,\eta_0)
=
\varphi^{\mathrm{geo}}_h(Z;y,\eta_0)-p^{\mathrm{geo}}_{a,h}(y)\).

We propose the diffusion-informed smoothing (DIS) estimator as
\begin{align}
\widehat p^{\mathrm{geo}}_{a,h}(y)
&=
\Pn\!\left[
\frac{\mathbbm 1\{A=a\}}{\widehat\pi_a(X)}
\left\{
\kappa_{h,\widehat\theta}(y;Y)
-
\widehat\mu_{a,h,\widehat\theta}(X;y)
\right\}
+
\widehat\mu_{a,h,\widehat\theta}(X;y)
\right],
\label{eqn:estimator-one-step-diff-smoothing}
\end{align}
where \(\widehat\theta\) indexes the estimated score geometry used to construct
the kernel, while \(\widehat\pi_a\) and \(\widehat\mu_{a,h,\widehat\theta}\) estimate the nuisance functions \(\pi_a(x)\) and \(\mu_{a,h,\widehat\theta}(x;y)\), respectively.

We implement DIS using \(K\)-fold cross-fitting. For each fold \(k\), we
estimate \(\widehat\theta^{(-k)}\), \(\widehat\pi_a^{(-k)}\), and
\(\widehat\mu_{a,h,\widehat\theta^{(-k)}}^{(-k)}\) on the complementary sample,
evaluate the one-step map on fold \(k\), and aggregate the fold estimates with
weights \(n_k/n\)
\citep[e.g.,][]{chernozhukov2018dml,kennedy2024semiparametric}.
Alternatively, the geometry may be pretrained or learned from an independent
auxiliary sample. Under cross-fitting, \(\widehat\theta\) denotes the collection of fold-specific fits \(\{\widehat\theta^{(-k)}\}_{k=1}^K\). Any \(\widehat\theta\)-indexed target or remainder is the \(n_k/n\)-weighted average of its fold-specific counterparts, while fitted functions, norms, and rate conditions are interpreted foldwise and uniformly over fixed \(K\). Fold indices are suppressed when unambiguous, and all conditioning is fold specific.

In \eqref{eqn:estimator-one-step-diff-smoothing}, \(\widehat\theta\) enters
through the estimated diffusion score \(s_{\widehat\theta}^{\rm diff}\) and the
induced smoothing density \(\kappa_{h,\widehat\theta}\). Thus learning the geometry amounts to
replacing \(\kappa_{h,\theta_0}\) by
\(\kappa_{h,\widehat\theta}\). In practice, \(s_{\widehat\theta}^{\rm diff}\) may be learned via denoising score matching \citep{song2019generative}. The geometry may be shared across interventions or tailored to \(\Pb_a\), for example through inverse-propensity-weighted score matching
\citep{wu2024counterfactual}. The theory below is
modular with respect to this learning step: it depends only on the perturbation
from replacing \(\theta_0\) by \(\widehat\theta\), not on the details of a
specific score-training algorithm.

\subsection{Theoretical Properties}
\label{subsec:DIS-theory}

We analyze DIS through two ingredients: the concentration of the oracle
diffusion-guided kernel and the risk decomposition of the resulting one-step
estimator.

\subsubsection{Geometry Adaptive Localization}
\label{subsec:localization}

We first provide localization properties of \(\kappa_{h,\theta_0}\): normalization
as a smoothing density and local concentration controlling stochastic error. Thus any dimension reduction comes from the geometry of the smoothing operator itself: variance improves when the kernel localizes along the
active directions of counterfactual variation rather than spreading across
all ambient coordinates.

Let \(m_h(Y)\) and \(\Sigma_h(Y)\) denote the mean and covariance matrix of the
density \(u\mapsto\kappa_{h,\theta_0}(u;Y)\). Assume \(\Sigma_h(Y)\) is
invertible, and write \(G_h(Y)=\Sigma_h(Y)^{-1}\) and
\(\|v\|_{G_h(Y)}=\{v^\top G_h(Y)v\}^{1/2}\). The matrix \(G_h(Y)\) describes the
local shape of the smoothing kernel centered at \(Y\).

\begin{assumption}[Sub-Gaussian localization]
\label{assump:subgauss-kappa}
There exist constants \(c_0,C_0>0\) such that, for all \(h>0\), all
\(y\in\mathbb R^d\), and almost every \(Y\),
\[
\kappa_{h,\theta_0}(y;Y)
\le
C_0\det\{G_h(Y)\}^{1/2}
\exp\{-c_0\|y-m_h(Y)\|_{G_h(Y)}^2\}.
\]
\end{assumption}

Assumption~\ref{assump:subgauss-kappa} is an anisotropic localization condition:
the kernel is dominated by a Gaussian envelope whose shape is determined by the
local precision matrix. Similar conditions appear in heat-kernel and
variable-bandwidth diffusion-kernel analyses
\citep[e.g.,][]{berry2016variable,de2022riemannian}. The following lemma summarizes the properties needed for the subsequent risk analysis.

\begin{lemma}
\label{lem:kappa-basic}
For the fixed \(h>0\), assume that, for each \(Y\),
\(\Phi_{\varepsilon_h,\theta_0}\) in \eqref{eqn:diff-informed-bump} is a
\(C^1\) diffeomorphism of \(\mathbb R^d\) onto itself, and that
\(q_{\varepsilon_h}(\cdot\mid Y)\) is a probability density. Then
\[
\int_{\mathbb R^d}\kappa_{h,\theta_0}(u;Y)du=1.
\]
Moreover, if
\(E_h(Y;c)=\{u:\{u-m_h(Y)\}^\top G_h(Y)\{u-m_h(Y)\}\le c\}\), then
\[
\mathrm{Vol}\{E_h(Y;c)\}
=
\mathrm{Vol}(B_d)c^{d/2}\det\{G_h(Y)\}^{-1/2},
\]
where \(B_d\) is the unit Euclidean ball in \(\mathbb R^d\). If
Assumption~\ref{assump:subgauss-kappa} also holds, then, for a constant
\(C<\infty\) depending only on \(c_0,C_0\), and \(d\),
\[
\int_{\mathbb R^d}\kappa_{h,\theta_0}(u;Y)^2du
\le
C\det\{G_h(Y)\}^{1/2}.
\]
\end{lemma}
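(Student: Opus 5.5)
The plan treats the three claims separately, since each reduces to an elementary computation once the right change of variables is in place.

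\emph{Normalization.} By \eqref{eqn:diff-informed-bump}, $\kappa_{h,\theta_0}(u;Y)=q_{\varepsilon_h}(\Phi^{-1}_{\varepsilon_h,\theta_0}(u)\mid Y)\,J_{\varepsilon_h,\theta_0}(u)$, where $J_{\varepsilon_h,\theta_0}$ is exactly the absolute Jacobian determinant of the map $u\mapsto z=\Phi^{-1}_{\varepsilon_h,\theta_0}(u)$. Because $\Phi_{\varepsilon_h,\theta_0}$ is a $C^1$ diffeomorphism of $\mathbb R^d$ onto itself, so is its inverse. The change-of-variables formula for nonnegative measurable integrands (Tonelli-type, so no integrability issue arises) then gives
\[
\int\kappa_{h,\theta_0}(u;Y)\,du=\int q_{\varepsilon_h}(z\mid Y)\,dz=1,
\]
since $q_{\varepsilon_h}(\cdot\mid Y)$ is a probability density. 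I would note that the same argument shows $\kappa_{h,\theta_0}(\cdot\,;Y)$ is the density of the pushforward $\Phi_{\varepsilon_h,\theta_0}\#q_{\varepsilon_h}(\cdot\mid Y)$. This makes $m_h(Y)$ and $\Sigma_h(Y)$ meaningful, provided second moments exist, which is implicitly assumed.

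\emph{Ellipsoid volume.} Write $G=G_h(Y)$, which is symmetric positive definite because $\Sigma_h(Y)$ is. Apply the affine substitution $u=m_h(Y)+c^{1/2}G^{-1/2}v$. It maps the unit ball $B_d$ bijectively onto $E_h(Y;c)$, since $(u-m)^\top G(u-m)=c\|v\|^2$. Its Jacobian is $c^{d/2}\det(G)^{-1/2}$, and the stated volume formula follows immediately.

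\emph{$L^2$ bound.} Under Assumption~\ref{assump:subgauss-kappa}, I would bound $\kappa^2\le \kappa\cdot\sup_u\kappa$ or, more directly, square the envelope:
\[
\int\kappa_{h,\theta_0}(u;Y)^2\,du\le C_0^2\det(G)\int \exp\{-2c_0\|u-m_h(Y)\|_G^2\}\,du.
\]
The same affine substitution $u=m+G^{-1/2}v$ turns the last integral into $\det(G)^{-1/2}\int e^{-2c_0\|v\|^2}\,dv=\det(G)^{-1/2}(\pi/(2c_0))^{d/2}$. This yields the bound with $C=C_0^2(\pi/2c_0)^{d/2}$, which depends only on $c_0$, $C_0$ and $d$.

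Alternatively, using $\int\kappa=1$, one has $\int\kappa^2\le\sup_u\kappa\le C_0\det(G)^{1/2}$ with $C=C_0$. This route is cleaner, and I would present it as the main argument, keeping the Gaussian computation as a remark. There is no real obstacle here. The only points requiring care are that the diffeomorphism is global, so the change of variables applies on all of $\mathbb R^d$, and that $G$ is positive definite, so that $G^{-1/2}$ exists.
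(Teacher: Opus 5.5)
Your proposal is correct. The normalization and ellipsoid-volume steps are the same change-of-variables computations the paper uses; the paper substitutes \(u=\Phi(v)\) and \(v=G_h(Y)^{1/2}\{u-m_h(Y)\}\), which differs from your parametrizations only cosmetically.

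The genuine difference is in the \(L_2\) bound. The paper's proof is exactly what you keep as a remark: it squares the envelope and evaluates the Gaussian integral, giving \(C=C_0^2\{\pi/(2c_0)\}^{d/2}\). Your main argument instead uses the interpolation inequality \(\int\kappa^2\le(\sup_u\kappa)\int\kappa\) together with normalization, giving \(C=C_0\).

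Your route is shorter and uses only the peak height of the envelope. However, it depends on the first claim, \(\int\kappa_{h,\theta_0}(u;Y)\,du=1\). The paper's route needs nothing beyond Assumption~\ref{assump:subgauss-kappa}, which is why the proof of Theorem~\ref{thm:covariance-concentration} can rederive the bound from the envelope alone.

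The paper's computation is also sharp for Gaussian kernels. For \(\kappa_{h,\theta_0}(\cdot\,;Y)=\phi_d\{\cdot-m_h(Y);\Sigma_h(Y)\}\) one may take \(C_0=(2\pi)^{-d/2}\) and \(c_0=1/2\). Then \(C_0^2\{\pi/(2c_0)\}^{d/2}=(4\pi)^{-d/2}\), which is the exact Gaussian constant used in the proof of Proposition~\ref{prop:manifold-concentration}. By contrast, \(C=C_0\) overshoots it by a factor \(2^{d/2}\).

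Finally, squaring the envelope is the template the paper reuses for the gradient term in Theorem~\ref{thm:dss-covariance-concentration}, where no normalization identity is available. For the lemma as stated, either version suffices.
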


Lemma~\ref{lem:kappa-basic} shows that \(\kappa_{h,\theta_0}\) is normalized,
that \(\det\{G_h(Y)\}^{-1/2}\) is the local smoothing volume, and that
\(\det\{G_h(Y)\}^{1/2}\) controls the squared \(L_2\) concentration of the
oracle kernel. These are oracle-geometry properties; the learned-geometry error from replacing
\(\theta_0\) by \(\widehat\theta\) is analyzed in the next subsection.

\subsubsection{Asymptotics and Risk Bounds}
\label{subsec:asymptotics-risk}

We now derive pointwise expansions and integrated risk bounds for the DIS estimator.
Throughout this subsection, \(h>0\) is fixed, but we keep its dependence explicit
because it controls the stochastic, smoothing, and geometry-learning terms when
\(h=h_n\). The analysis separates causal nuisance error from the target drift
induced by replacing \(\theta_0\) with \(\widehat\theta\). We impose two regularity conditions.

\begin{assumption}[Integrability]
\label{assumption:boundedness}
For each fixed \(y\in\mathbb R^d\), $\E\{\kappa_{h,\theta_0}(y;Y)^2\}<\infty$ and
$\E\{\mu_{a,h,\theta_0}(X;y)^2\}<\infty$. Moreover, the corresponding foldwise conditional square-integrability conditions hold with probability tending to one.
\end{assumption}

\begin{assumption}[Empirical process control]
\label{assumption:sample-splitting}
Either the estimator uses \(K\)-fold cross fitting, with each evaluation fold
independent of the corresponding training sample, or the relevant
influence-function class satisfies suitable empirical-process conditions. 
\end{assumption}

Assumption~\ref{assumption:boundedness} is a mild integrability condition ensuring that the influence function is well behaved. Assumption~\ref{assumption:sample-splitting} controls empirical process terms involving estimated nuisance functions and the learned geometry \citep[e.g.,][]{kennedy2016semiparametric}. 

Let \(\mathcal I_1,\ldots,\mathcal I_K\) be the evaluation folds, with
\(n_k=|\mathcal I_k|\asymp n\), and let \(\mathcal F_{-k}\) be the
sigma-field generated by the corresponding training sample. For any
\(\mathcal F_{-k}\)-measurable fold-specific function \(f^{(-k)}\), write
\(\Pb f^{(-k)}=\E\{f^{(-k)}(Z)\mid\mathcal F_{-k}\}\), where
\(Z\sim\Pb\) is an independent draw. Expectations under \(\Pb_a\), conditional
moments given \(X\) and \(A\), and \(L_2(\Pb)\) and \(L_2(\Pb_a)\) norms
involving fitted functions are interpreted analogously. All sample-splitting
arguments are applied conditional on \(\mathcal F_{-k}\) within fold \(k\)
and then aggregated unconditionally over fixed \(K\), with fold indices
suppressed when unambiguous.

For any fixed geometry parameter \(\theta\), write
\[
p^{\mathrm{geo}}_{a,h,\theta}(y)
=
\E_{\Pb_a}\{\kappa_{h,\theta}(y;Y^a)\},
\quad
\mu_{a,h,\theta}(x;y)
=
\E\{\kappa_{h,\theta}(y;Y)\mid X=x,A=a\}.
\]
In particular,
\(p^{\mathrm{geo}}_{a,h}=p^{\mathrm{geo}}_{a,h,\theta_0}\).
Under the preceding convention,
\(p^{\mathrm{geo}}_{a,h,\widehat\theta}\) denotes the fold-weighted aggregate
of the fitted-geometry targets
\(p^{\mathrm{geo}}_{a,h,\widehat\theta^{(-k)}}\). The geometry-induced target
drift is
\[
B_{\theta,h}(y)
=
p^{\mathrm{geo}}_{a,h,\widehat\theta}(y)
-
p^{\mathrm{geo}}_{a,h,\theta_0}(y).
\]
For fixed or independently trained geometry, the notation has its ordinary
single-geometry meaning. Likewise,
\(R_{\pi,\mu}^{\widehat\theta}(y)\) denotes the fold-weighted aggregate of the
fold-specific product remainders.

We begin with a pointwise one-step expansion for the DIS estimator.

\begin{theorem}
\label{thm:onestep-expansion}
Fix \(h>0\) and \(y\in\mathbb R^d\). Suppose
Assumptions~\ref{assumption:boundedness} and~\ref{assumption:sample-splitting}
hold, and
\[
\|\widehat\pi_a-\pi_a\|_{L_2(\Pb)}=o_\Pb(1),
\quad
\|\widehat\mu_{a,h,\widehat\theta}(\cdot;y)
-
\mu_{a,h,\widehat\theta}(\cdot;y)\|_{L_2(\Pb)}
=o_\Pb(1),
\]
\[
\|\kappa_{h,\widehat\theta}(y;Y^a)
-
\kappa_{h,\theta_0}(y;Y^a)\|_{L_2(\Pb_a)}
=o_\Pb(1),
\quad
\Pb\{\inf_x\widehat\pi_a(x)\ge\pi_{\min}/2\}\to1 .
\]
Then the DIS estimator in \eqref{eqn:estimator-one-step-diff-smoothing} satisfies
\[
\widehat p^{\mathrm{geo}}_{a,h}(y)
-
p^{\mathrm{geo}}_{a,h}(y)
=
(\Pn-\Pb)\{\varphi^{\mathrm{geo}}_h(Z;y,\eta_0)\}
+
B_{\theta,h}(y)
+
R_{\pi,\mu}^{\widehat\theta}(y)
+
o_\Pb(n^{-1/2}),
\]
where $|R_{\pi,\mu}^{\widehat\theta}(y)|
\lesssim
\|\widehat\pi_a-\pi_a\|_{L_2(\Pb)}
\|\widehat\mu_{a,h,\widehat\theta}(\cdot;y)
-
\mu_{a,h,\widehat\theta}(\cdot;y)\|_{L_2(\Pb)}$.
\end{theorem}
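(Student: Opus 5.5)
We work fold by fold, conditional on the training sigma-field \(\mathcal F_{-k}\), and then aggregate with weights \(n_k/n\). The plan is the standard von Mises decomposition, with one twist: the estimator is compared to the fitted-geometry target \(p^{\mathrm{geo}}_{a,h,\widehat\theta}(y)\) rather than directly to \(p^{\mathrm{geo}}_{a,h}(y)\).

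Write \(\widehat\eta=\{\widehat\pi_a,\widehat\mu_{a,h,\widehat\theta},\kappa_{h,\widehat\theta}\}\) and \(\eta_{\widehat\theta}=\{\pi_a,\mu_{a,h,\widehat\theta},\kappa_{h,\widehat\theta}\}\). Note that \(\Pb\varphi^{\mathrm{geo}}_h(Z;y,\eta_{\widehat\theta})=p^{\mathrm{geo}}_{a,h,\widehat\theta}(y)\) by identification. On the fold we have \(\widehat p^{\mathrm{geo}}_{a,h}(y)=\Pn\varphi^{\mathrm{geo}}_h(Z;y,\widehat\eta)\), which gives the exact decomposition
\[
\widehat p^{\mathrm{geo}}_{a,h}(y)-p^{\mathrm{geo}}_{a,h}(y)
=(\Pn-\Pb)\varphi^{\mathrm{geo}}_h(Z;y,\eta_0)
+(\Pn-\Pb)\{\varphi^{\mathrm{geo}}_h(Z;y,\widehat\eta)-\varphi^{\mathrm{geo}}_h(Z;y,\eta_0)\}
+\{\Pb\varphi^{\mathrm{geo}}_h(Z;y,\widehat\eta)-p^{\mathrm{geo}}_{a,h,\widehat\theta}(y)\}
+B_{\theta,h}(y).
\]
The third term is defined to be \(R^{\widehat\theta}_{\pi,\mu}(y)\), and the fourth is the drift by definition. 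It remains to bound the remainder and to show the second term is \(o_\Pb(n^{-1/2})\).

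\textbf{Remainder.} Iterate expectations given \(X\), using \(\E\{\mathbbm 1\{A=a\}\kappa_{h,\widehat\theta}(y;Y)\mid X\}=\pi_a(X)\mu_{a,h,\widehat\theta}(X;y)\). This gives
\[
R^{\widehat\theta}_{\pi,\mu}(y)=\Pb\Big[\frac{\pi_a-\widehat\pi_a}{\widehat\pi_a}(X)\{\mu_{a,h,\widehat\theta}-\widehat\mu_{a,h,\widehat\theta}\}(X;y)\Big].
\]
On the event \(\inf_x\widehat\pi_a\ge\pi_{\min}/2\), whose probability tends to one, Cauchy--Schwarz yields the stated product bound with constant \(2/\pi_{\min}\).

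\textbf{Empirical process term.} Conditional on \(\mathcal F_{-k}\), the difference \(f=\varphi(\cdot;\widehat\eta)-\varphi(\cdot;\eta_0)\) is a fixed function. Chebyshev's inequality then gives \((\Pn-\Pb)f=O_\Pb(\|f\|_{L_2(\Pb)}/\sqrt{n})\), so it suffices to show \(\|f\|_{L_2(\Pb)}=o_\Pb(1)\). We split \(f\) into three pieces:

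1. The propensity piece, \(\mathbbm 1\{A=a\}(1/\widehat\pi_a-1/\pi_a)(\kappa_{h,\widehat\theta}-\widehat\mu)\).
2. The kernel piece, \((\mathbbm 1\{A=a\}/\pi_a)(\kappa_{h,\widehat\theta}-\kappa_{h,\theta_0})\). Its norm is bounded by \(\pi_{\min}^{-1/2}\|\kappa_{h,\widehat\theta}-\kappa_{h,\theta_0}\|_{L_2(\Pb_a)}\), since \(\E[\mathbbm 1\{A=a\}g(Y)^2/\pi_a^2]\le\pi_{\min}^{-1}\E_{\Pb_a}g^2\).
3. The outcome-model piece, \((1-\mathbbm 1\{A=a\}/\pi_a)(\widehat\mu-\mu_{a,h,\theta_0})\). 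Here we further split \(\widehat\mu-\mu_{\theta_0}=(\widehat\mu-\mu_{\widehat\theta})+(\mu_{\widehat\theta}-\mu_{\theta_0})\). The second part is controlled by Jensen's inequality via the kernel \(L_2(\Pb_a)\) error, using the same change of measure.

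\textbf{Main obstacle.} The propensity piece is the hardest, since \(1/\widehat\pi_a-1/\pi_a\) converges only in \(L_2\) and not uniformly, so it must be paired with the square-integrability of \(\kappa_{h,\widehat\theta}-\widehat\mu\). The first thing to try is the bound \(|1/\widehat\pi_a-1/\pi_a|\le 2\pi_{\min}^{-1}\cdot\pi_{\min}^{-1}|\widehat\pi_a-\pi_a|\le 2/\pi_{\min}^{2}\). Combined with the foldwise conditional square-integrability from Assumption~\ref{assumption:boundedness}, this lets us argue convergence in probability via a truncation or uniform-integrability step: on the set where \(|\widehat\pi_a-\pi_a|\) is large we use the \(L_2\) rate, and elsewhere the difference is small times an integrable envelope. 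If instead the alternative in Assumption~\ref{assumption:sample-splitting} (Donsker-type conditions) is used, the same \(L_2\) consistency gives asymptotic equicontinuity.

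Finally, the fold-specific expansions are averaged with weights \(n_k/n\) over fixed \(K\). Under the stated convention, the drift and remainder aggregate to \(B_{\theta,h}\) and \(R^{\widehat\theta}_{\pi,\mu}\), and the fold empirical processes of \(\varphi(\cdot;\eta_0)\) combine to \((\Pn-\Pb)\varphi^{\mathrm{geo}}_h(Z;y,\eta_0)\).
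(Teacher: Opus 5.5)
Your proposal is correct and follows the paper's proof. It uses the same add-and-subtract decomposition isolating $B_{\theta,h}(y)=p^{\mathrm{geo}}_{a,h,\widehat\theta}(y)-p^{\mathrm{geo}}_{a,h,\theta_0}(y)$, the same iterated-expectation and Cauchy--Schwarz bound for $R^{\widehat\theta}_{\pi,\mu}(y)$, and the same foldwise sample-splitting lemma, and merging the paper's two empirical-process terms into one split into propensity, kernel, and outcome pieces is only cosmetic. Your truncation step for the propensity piece is more explicit than the paper's one-line appeal to square integrability, and it goes through because $\mathrm{var}\{\kappa_{h,\widehat\theta}(y;Y)\mid X,A=a\}\le 2\,\mathrm{var}\{\kappa_{h,\theta_0}(y;Y)\mid X,A=a\}+2\,\E[\{\kappa_{h,\widehat\theta}(y;Y)-\kappa_{h,\theta_0}(y;Y)\}^2\mid X,A=a]$: a fixed integrable envelope plus a term that vanishes in $L_1$ by the assumed kernel $L_2(\Pb_a)$ consistency.
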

The expansion in Theorem~\ref{thm:onestep-expansion} separates the
second-order causal nuisance remainder from the geometry-induced target drift
\(B_{\theta,h}(y)\). The latter is not removed by the one-step correction
because the correction is orthogonal to causal nuisance perturbations for a
fixed smoothing rule, not to perturbations of the geometry defining that rule.
Appendix~\ref{appsec:pointwise-dis} gives the corresponding fixed-geometry
oracle expansion obtained when \(\widehat\theta=\theta_0\). We next control
\(B_{\theta,h}(y)\) through a pointwise stability condition for
\(\kappa_{h,\widehat\theta}-\kappa_{h,\theta_0}\).

Let \(n_{\rm geom}\) denote the geometry-training sample size, with
\(n_{\rm geom}\asymp n\) under cross-fitting and equal to the auxiliary-sample
size otherwise. Let \(\rho_{\theta,n_{\rm geom}}\) denote the corresponding
diffusion-score or flow learning rate in the norm required for kernel
stability, with dependence on \(\varepsilon_h\) suppressed.
Suppose that, for the fixed \(y\) and some amplification
exponent \(\lambda_{\rm geom}(y)\ge0\), uniformly over
\(k=1,\ldots,K\),
\begin{align}
\label{eqn:pointwise-flow-to-kernel-stability}
\|\kappa_{h,\widehat\theta^{(-k)}}(y;Y^a)
-
\kappa_{h,\theta_0}(y;Y^a)\|_{L_2(\Pb_a)}
=
O_\Pb\{h^{-\lambda_{\rm geom}(y)}\rho_{\theta,n_{\rm geom}}\}.
\end{align}
This modular pointwise stability condition is the local analogue of
Assumption~\ref{assump:integrated-geometry-drift}. A mean-value expansion of the
probability-flow representation of \(\kappa_{h,\theta}\)
\citep{song2021sde} gives such a bound, with
\(\lambda_{\rm geom}(y)\) absorbing local kernel concentration,
spatial-derivative scaling, and Jacobian sensitivity. Appendix~\ref{appsec:geometry-drift}
details this connection and discusses structural regimes in which
\(\rho_{\theta,n_{\rm geom}}\) can scale with intrinsic rather than ambient
complexity.

Because \(B_{\theta,h}(y)\) is a fold-weighted average, the preceding uniform
bound gives
\(|B_{\theta,h}(y)|
=
O_\Pb\{h^{-\lambda_{\rm geom}(y)}\rho_{\theta,n_{\rm geom}}\}\). Therefore,
Theorem~\ref{thm:onestep-expansion} yields
\[
\widehat p^{\mathrm{geo}}_{a,h}(y)
-
p^{\mathrm{geo}}_{a,h}(y)
=
(\Pn-\Pb)\{\varphi^{\mathrm{geo}}_h(Z;y,\eta_0)\}
+
R_{\pi,\mu}^{\widehat\theta}(y)
+
O_\Pb\{h^{-\lambda_{\rm geom}(y)}\rho_{\theta,n_{\rm geom}}\}
+
o_\Pb(n^{-1/2}),
\]
thus learned geometry contributes a pointwise target drift governed by the score or flow learning error and its amplification through \(\kappa_{h,\theta}\).

We next turn to integrated error. For a measurable evaluation region
\(\mathcal Y_0\subset\mathbb R^d\) with finite Lebesgue measure, define
\[
H_h(\mathcal Y_0)
=
\E\left\{
\int_{\mathcal Y_0}
\kappa_{h,\theta_0}(y;Y^a)^2dy
\right\}.
\]
This concentration functional is the primitive variance complexity in the
integrated risk bounds below. It concerns the oracle smoothing rule; geometric
conditions under which \(H_h(\mathcal Y_0)\) scales intrinsically are given
after the risk theorem.

The cost of learning the geometry enters separately through the following
modular drift condition. Here \(\rho_{\theta,n_{\rm geom}}\) is the score or flow
learning rate as defined above, and \(\Lambda_{\rm geom}(\mathcal Y_0)\) denotes the integrated
amplification exponent induced by the smoothing construction.

\begin{assumption}[Integrated geometry drift control]
\label{assump:integrated-geometry-drift}
For \(\mathcal Y_0\), there exists a geometry
amplification exponent \(\Lambda_{\rm geom}(\mathcal Y_0)\ge0\) such that,
uniformly over \(k=1,\ldots,K\),
\[
\E_{\Pb_a}\left[
\int_{\mathcal Y_0}
\{\kappa_{h,\widehat\theta^{(-k)}}(y;Y^a)
-
\kappa_{h,\theta_0}(y;Y^a)\}^2dy
\right]
=
O_\Pb\!\left(
h^{-\Lambda_{\rm geom}(\mathcal Y_0)}
\rho_{\theta,n_{\rm geom}}^2
\right).
\]
\end{assumption}

Jensen's inequality within each fold and across the fold weights gives $\int_{\mathcal Y_0}B_{\theta,h}(y)^2dy
=
O_\Pb\!\left(
h^{-\Lambda_{\rm geom}(\mathcal Y_0)}
\rho_{\theta,n_{\rm geom}}^2
\right)$. Appendix~\ref{appsec:geometry-drift} gives sufficient foldwise stability
conditions for Assumption~\ref{assump:integrated-geometry-drift}. Let
\(\Delta_\Phi\) and \(\Delta_J\) denote uniform inverse-flow and
relative-Jacobian errors on \(\mathcal Y_0\), and let \(D_\kappa\) be the
root-integrated kernel concentration exponent along the flow path. Applying
Lemma~\ref{lem:flow-kernel-stability} uniformly over the fixed folds shows
that \(\Delta_\Phi+\Delta_J
=
O_\Pb(\rho_{\theta,n_{\rm geom}})\) yields
\(\Lambda_{\rm geom}(\mathcal Y_0)=D_\kappa+2\), while a purely multiplicative
Jacobian perturbation gives
\(\Lambda_{\rm geom}(\mathcal Y_0)=D_\kappa\). The appendix also relates these
flow errors to score and score-gradient errors through ordinary differential
equation stability. Appendix~\ref{appsec:intrinsic-score-learning} then gives
structural regimes under which \(\rho_{\theta,n_{\rm geom}}\) may scale with
intrinsic rather than ambient complexity.

The next result gives the integrated risk decomposition for both the
fixed-geometry and learned-geometry DIS estimators. For the fixed-geometry
version, let \(\widehat p^{\mathrm{geo},0}_{a,h}\) denote the DIS estimator 
\eqref{eqn:estimator-one-step-diff-smoothing} evaluated with
\(\widehat\theta=\theta_0\) and
\(\widehat\mu_{a,h,\widehat\theta}\) replaced by
\(\widehat\mu_{a,h,\theta_0}\). Let \(R_{\pi,\mu}^0(y)\) be the corresponding
product remainder, with \(\vert R_{\pi,\mu}^0(y)\vert \lesssim
\|\widehat\pi_a-\pi_a\|_{L_2(\Pb)}
\|\widehat\mu_{a,h,\theta_0}(\cdot;y)-\mu_{a,h,\theta_0}(\cdot;y)\|_{L_2(\Pb)}\).

\begin{theorem}
\label{thm:dis-risk-global}
For fixed \(h>0\), let \(\mathcal Y_0\subset\mathbb R^d\) be a measurable
evaluation region with finite Lebesgue measure. Suppose
Assumptions~\ref{assumption:boundedness} and~\ref{assumption:sample-splitting}
hold. Assume further that
\[
\|\widehat\pi_a-\pi_a\|_{L_2(\Pb)}=o_\Pb(1),\quad
\int_{\mathcal Y_0}
\|\widehat\mu_{a,h,\theta_0}(\cdot;y)
-\mu_{a,h,\theta_0}(\cdot;y)\|_{L_2(\Pb)}^2dy=o_\Pb(1),
\]
\[
\Pb\{\inf_{x\in\mathcal X}\widehat\pi_a(x)\ge \pi_{\min}/2\}\to1,
\quad
\sup_x
\E\!\left[
\int_{\mathcal Y_0}\kappa_{h,\theta_0}(y;Y)^2dy
\,\middle|\,
X=x,A=a
\right]<\infty .
\]
Then the fixed-geometry estimator satisfies
\[
\int_{\mathcal Y_0}
\{\widehat p^{\mathrm{geo},0}_{a,h}(y)-p^{\mathrm{geo}}_{a,h}(y)\}^2dy
=
O_\Pb\left(
\frac{H_h(\mathcal Y_0)}{n}
+
\operatorname*{ess\,sup}_{y\in\mathcal Y_0}\{R_{\pi,\mu}^0(y)\}^2
\right)
+
o_\Pb(n^{-1}).
\]

For the learned-geometry DIS estimator, suppose additionally that
\[
\int_{\mathcal Y_0}
\|\widehat\mu_{a,h,\widehat\theta}(\cdot;y)
-\mu_{a,h,\widehat\theta}(\cdot;y)\|_{L_2(\Pb)}^2dy=o_\Pb(1),
\]
\[
\|\widehat\pi_a-\pi_a\|_{L_2(\Pb)}^2
\sup_x
\E\!\left[
\int_{\mathcal Y_0}\kappa_{h,\widehat\theta}(y;Y)^2dy
\,\middle|\,
X=x,A=a
\right]
=o_\Pb(1),
\]
\[
\int_{\mathcal Y_0}
\|\kappa_{h,\widehat\theta}(y;Y^a)
-\kappa_{h,\theta_0}(y;Y^a)\|_{L_2(\Pb_a)}^2dy
=o_\Pb(1).
\]
Then
\[
\int_{\mathcal Y_0}
\{\widehat p^{\mathrm{geo}}_{a,h}(y)-p^{\mathrm{geo}}_{a,h}(y)\}^2dy
=
O_\Pb\left(
\frac{H_h(\mathcal Y_0)}{n}
+
\operatorname*{ess\,sup}_{y\in\mathcal Y_0}
\{R_{\pi,\mu}^{\widehat\theta}(y)\}^2
+
\int_{\mathcal Y_0}B_{\theta,h}(y)^2dy
\right)
+
o_\Pb(n^{-1}).
\]
\end{theorem}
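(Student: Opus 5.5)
The plan is to reuse, pointwise in \(y\), the algebraic decomposition that underlies Theorem~\ref{thm:onestep-expansion}, and then integrate each piece over \(\mathcal Y_0\). I start with the fixed-geometry estimator. Within fold \(k\), write \(\widehat\varphi^0(Z;y)\) for the one-step map built from \(\widehat\pi_a\) and \(\widehat\mu_{a,h,\theta_0}\), with kernel \(\kappa_{h,\theta_0}\). Then
\[
\widehat p^{\mathrm{geo},0}_{a,h}(y)-p^{\mathrm{geo}}_{a,h}(y)
=(\Pn-\Pb)\varphi^{\mathrm{geo}}_h(Z;y,\eta_0)
+(\Pn-\Pb)\{\widehat\varphi^0(Z;y)-\varphi^{\mathrm{geo}}_h(Z;y,\eta_0)\}
+R^0_{\pi,\mu}(y).
\]
The remainder comes from iterated expectations given \(X\):
\[
R^0_{\pi,\mu}(y)=\Pb\Big[\frac{\pi_a-\widehat\pi_a}{\widehat\pi_a}(X)\{\mu_{a,h,\theta_0}-\widehat\mu_{a,h,\theta_0}\}(X;y)\Big].
\]
On the event \(\inf\widehat\pi_a\ge\pi_{\min}/2\), Cauchy--Schwarz bounds this by the stated product of \(L_2(\Pb)\) norms. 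By \((u+v+w)^2\le 3(u^2+v^2+w^2)\), it then suffices to bound the integral of each squared term separately. The remainder contributes at most \(\mathrm{Leb}(\mathcal Y_0)\operatorname*{ess\,sup}_{y}R^0_{\pi,\mu}(y)^2\), and the finite measure is absorbed into the constant.

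For the oracle term, Fubini gives
\[
\E\int_{\mathcal Y_0}\{(\Pn-\Pb)\varphi^{\mathrm{geo}}_h\}^2dy=n^{-1}\int_{\mathcal Y_0}\var\{\varphi^{\mathrm{geo}}_h(Z;y,\eta_0)\}dy .
\]
Bounding the variance by the second moment and using positivity together with \((\sum_i x_i)^2\lesssim\sum_i x_i^2\), I get
\[
\E\{(\varphi^{\mathrm{geo}}_h)^2\}\lesssim \pi_{\min}^{-1}\E\{\kappa_{h,\theta_0}(y;Y)^2\mid A=a\text{-weighted}\}+\E\{\mu_{a,h,\theta_0}(X;y)^2\}.
\]
The first term is \(\pi_{\min}^{-1}\E\{\kappa_{h,\theta_0}(y;Y^a)^2\}\) by identification. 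By conditional Jensen, \(\mu^2\le\E(\kappa^2\mid X,A=a)\), so the second term is also bounded by \(\pi_{\min}^{-1}\E\{\kappa_{h,\theta_0}(y;Y^a)^2\}\). Integrating over \(y\) gives \(O(H_h(\mathcal Y_0)/n)\), and Markov's inequality turns this into an \(O_\Pb\) statement.

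For the empirical-process term I condition on \(\mathcal F_{-k}\), using Assumption~\ref{assumption:sample-splitting} with cross-fitting. The conditional mean of the squared \(L_2(\mathcal Y_0)\) norm is at most
\[
n_k^{-1}\int_{\mathcal Y_0}\Pb\{\widehat\varphi^0-\varphi^{\mathrm{geo}}_h\}^2dy .
\]
The difference expands as \((1/\widehat\pi_a-1/\pi_a)\mathbbm 1\{A=a\}(\kappa-\mu)\) plus \((1-\mathbbm 1\{A=a\}/\widehat\pi_a)(\widehat\mu-\mu)\). The second piece is bounded by \(\int\|\widehat\mu-\mu\|^2dy=o_\Pb(1)\). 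The first is at most
\[
\|\widehat\pi_a-\pi_a\|^2_{L_2}\cdot\sup_x\E\Big[\int_{\mathcal Y_0}\kappa^2\,dy\,\Big|\,X=x,A=a\Big],
\]
after conditioning on \(X\) and using the boundedness of \(1/(\widehat\pi_a\pi_a)\). This is \(o_\Pb(1)\) by the assumed sup-finiteness. A conditional Markov argument then gives \(o_\Pb(n^{-1})\).

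For the learned-geometry estimator I insert the fitted-geometry target and write
\[
\widehat p^{\mathrm{geo}}_{a,h}-p^{\mathrm{geo}}_{a,h}=\{\widehat p^{\mathrm{geo}}_{a,h}-p^{\mathrm{geo}}_{a,h,\widehat\theta}\}+B_{\theta,h}.
\]
Within each fold, \(\widehat\theta^{(-k)}\) is \(\mathcal F_{-k}\)-measurable, so the first bracket decomposes exactly as before, with \(\theta_0\) replaced by \(\widehat\theta\). This yields \((\Pn-\Pb)\varphi_{\widehat\theta}+(\Pn-\Pb)(\widehat\varphi-\varphi_{\widehat\theta})+R^{\widehat\theta}_{\pi,\mu}\). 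The nuisance empirical-process term is handled by the added hypotheses, which are precisely the \(\widehat\theta\) analogues of those used above, including the stated product condition on the \(\kappa_{h,\widehat\theta}\) moment.

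The main obstacle is the oracle-variance term. \(H_h\) is defined at \(\theta_0\), whereas \(\varphi_{\widehat\theta}\) involves \(\kappa_{h,\widehat\theta}\). I handle this by writing \((\Pn-\Pb)\varphi_{\widehat\theta}=(\Pn-\Pb)\varphi^{\mathrm{geo}}_h(\cdot;\eta_0)+(\Pn-\Pb)(\varphi_{\widehat\theta}-\varphi_{\theta_0})\), with both maps built from the true \(\pi_a\) and \(\mu\). The conditional second moment of the difference is controlled, via positivity and conditional Jensen for the \(\mu\)-part, by
\[
\pi_{\min}^{-1}\int_{\mathcal Y_0}\|\kappa_{h,\widehat\theta}(y;Y^a)-\kappa_{h,\theta_0}(y;Y^a)\|^2_{L_2(\Pb_a)}dy=o_\Pb(1).
\]
So this term is also \(o_\Pb(n^{-1})\) once integrated. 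Aggregating over the fixed \(K\) folds with weights \(n_k/n\), again by Jensen, and adding \(\int B_{\theta,h}^2\), gives the stated bound.
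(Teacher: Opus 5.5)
Your proposal is correct and follows the paper's proof essentially step for step. You use the same decomposition into the oracle term $(\Pn-\Pb)\varphi^{\mathrm{geo}}_h(Z;y,\eta_0)$, bounded by Fubini and the variance-to-$H_h(\mathcal Y_0)$ reduction, together with the product remainder and a foldwise empirical-process term handled by the conditional Markov argument; for the learned geometry you add $B_{\theta,h}$ and split the empirical term into the nuisance piece and the geometry piece $(\Pn-\Pb)\{\varphi_{\widehat\theta}-\varphi^{\mathrm{geo}}_h(\cdot;y,\eta_0)\}$, exactly as the paper does with $e_{n,2}^{\widehat\theta}$ and $e_{n,1}^{\widehat\theta}$.
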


Theorem~\ref{thm:dis-risk-global} keeps the geometry-induced target drift
\(\int_{\mathcal Y_0}B_{\theta,h}(y)^2dy\) explicit. If this term is controlled
by Assumption~\ref{assump:integrated-geometry-drift}, the following bound follows
immediately.

\begin{corollary}
\label{cor:risk-geometry-stability}
Suppose the conditions of Theorem~\ref{thm:dis-risk-global} and
Assumption~\ref{assump:integrated-geometry-drift} hold. Then
\[
\int_{\mathcal Y_0}
\{\widehat p^{\mathrm{geo}}_{a,h}(y)
-p^{\mathrm{geo}}_{a,h}(y)\}^2dy
=
O_\Pb\left(
\frac{H_h(\mathcal Y_0)}{n}
+
\operatorname*{ess\,sup}_{y\in\mathcal Y_0}
\{R_{\pi,\mu}^{\widehat\theta}(y)\}^2
+
h^{-\Lambda_{\rm geom}(\mathcal Y_0)}
\rho_{\theta,n_{\rm geom}}^2
\right)
+
o_\Pb(n^{-1}).
\]
\end{corollary}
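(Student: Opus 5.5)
The corollary follows by substituting the integrated drift bound into the learned-geometry part of Theorem~\ref{thm:dis-risk-global}. Since the conditions of that theorem are assumed, we start from
\[
\int_{\mathcal Y_0}\{\widehat p^{\mathrm{geo}}_{a,h}(y)-p^{\mathrm{geo}}_{a,h}(y)\}^2dy
=O_\Pb\Bigl(\tfrac{H_h(\mathcal Y_0)}{n}+\operatorname*{ess\,sup}_{y\in\mathcal Y_0}\{R_{\pi,\mu}^{\widehat\theta}(y)\}^2+\int_{\mathcal Y_0}B_{\theta,h}(y)^2dy\Bigr)+o_\Pb(n^{-1}).
\]
The only remaining task is to show that $\int_{\mathcal Y_0}B_{\theta,h}(y)^2dy=O_\Pb\{h^{-\Lambda_{\rm geom}(\mathcal Y_0)}\rho_{\theta,n_{\rm geom}}^2\}$. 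The conclusion then follows because $O_\Pb(a)+O_\Pb(b)=O_\Pb(a+b)$ for nonnegative $a,b$.

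For the drift bound, fix a fold $k$ and write $B^{(k)}(y)=p^{\mathrm{geo}}_{a,h,\widehat\theta^{(-k)}}(y)-p^{\mathrm{geo}}_{a,h,\theta_0}(y)$. Conditional on $\mathcal F_{-k}$, this equals $\E_{\Pb_a}\{\kappa_{h,\widehat\theta^{(-k)}}(y;Y^a)-\kappa_{h,\theta_0}(y;Y^a)\}$. Jensen's (or Cauchy--Schwarz) inequality gives
\[
B^{(k)}(y)^2\le\E_{\Pb_a}\bigl[\{\kappa_{h,\widehat\theta^{(-k)}}(y;Y^a)-\kappa_{h,\theta_0}(y;Y^a)\}^2\bigr].
\]
Integrate over $\mathcal Y_0$ and use Tonelli to exchange $\int_{\mathcal Y_0}$ and $\E_{\Pb_a}$; this is justified because the integrand is nonnegative and measurable. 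The left side is then bounded by the quantity in Assumption~\ref{assump:integrated-geometry-drift}, which is $O_\Pb(h^{-\Lambda_{\rm geom}(\mathcal Y_0)}\rho_{\theta,n_{\rm geom}}^2)$ uniformly in $k$.

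Since $B_{\theta,h}=\sum_k (n_k/n)B^{(k)}$ with weights summing to one, convexity of the square gives
\[
B_{\theta,h}(y)^2\le\sum_k (n_k/n)B^{(k)}(y)^2.
\]
Integrating over $\mathcal Y_0$ and using the uniform bound over the fixed number $K$ of folds yields the claim. The result is a maximum of $K$ terms, each $O_\Pb$ of the same rate, so the total is $O_\Pb$ of that rate as well. For fixed or independently trained geometry, $K=1$ and the same argument applies without aggregation.

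No step here is a real obstacle. The only points needing care are measure-theoretic: the conditional expectation given $\mathcal F_{-k}$ must be used consistently with the foldwise interpretation of norms, and the Tonelli exchange needs joint measurability of $(y,u)\mapsto\kappa_{h,\theta}(y;u)$. Both are implicit in the standing conventions of Section~\ref{subsec:asymptotics-risk} and in the paper's remark following Assumption~\ref{assump:integrated-geometry-drift}.
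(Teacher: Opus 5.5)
Your proposal is correct and follows the paper's route: it invokes the learned-geometry bound of Theorem~\ref{thm:dis-risk-global}. It then controls $\int_{\mathcal Y_0}B_{\theta,h}(y)^2dy$ by Jensen's inequality within each fold and convexity across the fold weights, which is exactly the one-line justification the paper gives after Assumption~\ref{assump:integrated-geometry-drift} and repeats in the proof of Corollary~\ref{cor:risk-ambient-density}.
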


Any external score- or flow-learning guarantee can be combined with
Corollary~\ref{cor:risk-geometry-stability}. For example, if
\(\rho_{\theta,n_{\rm geom}}=O_\Pb(n_{\rm geom}^{-a(k)})\) under a structural
complexity parameter \(k\), then the geometry-drift term becomes
\(h^{-\Lambda_{\rm geom}(\mathcal Y_0)}n_{\rm geom}^{-2a(k)}\). Thus external
diffusion-learning theory determines \(\rho_{\theta,n_{\rm geom}}\), while the
present semiparametric analysis determines how this rate affects counterfactual
density estimation; see Appendices~\ref{appsec:geometry-drift}
and~\ref{appsec:intrinsic-score-learning} for more details.

It remains to identify when the oracle stochastic term
\(H_h(\mathcal Y_0)/n\) scales intrinsically rather than ambiently. The
following proposition provides an oracle tangent-kernel benchmark: if
\(\Pb_a\) is supported on an \(m\)-dimensional manifold and the kernel contracts
at scale \(h\) only along tangent directions, then
\(H_h^{\rm tan}(\mathbb R^d)\) scales as \(h^{-m}\), rather than \(h^{-d}\).

\begin{proposition}
\label{prop:manifold-concentration}
Suppose \(\Pb_a\) is supported on a compact \(m\)-dimensional \(C^2\)
submanifold \(M\subset\mathbb R^d\), with \(m\ll d\). For \(u\in M\), let
\(P_T(u)\) and \(P_N(u)=I_d-P_T(u)\) be the orthogonal projections onto
\(T_uM\) and \(T_uM^\perp\), respectively. Fix \(\sigma_0>0\), and define
\(\kappa_h^{\rm tan}(y;u)
=
\phi_d\{y-u;\,h^2P_T(u)+\sigma_0^2P_N(u)\}\), where
\(\phi_d(\cdot;\Sigma)\) denotes the \(N(0,\Sigma)\) density. For measurable
\(\mathcal Y_0\subseteq\mathbb R^d\), let
\(H_h^{\rm tan}(\mathcal Y_0)
=
\E[\int_{\mathcal Y_0}\kappa_h^{\rm tan}(y;Y^a)^2dy]\). Then, with
\(C_{d,m,\sigma_0}=(4\pi)^{-d/2}\sigma_0^{-(d-m)}\),
\[
H_h^{\rm tan}(\mathbb R^d)
=
C_{d,m,\sigma_0}h^{-m},
\quad
H_h^{\rm tan}(\mathcal Y_0)
\le
C_{d,m,\sigma_0}h^{-m}.
\]
\end{proposition}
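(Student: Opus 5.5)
The plan is to compute \(\int_{\mathbb R^d}\kappa_h^{\rm tan}(y;u)^2dy\) exactly for each fixed \(u\in M\). Because the value will turn out not to depend on \(u\), taking the expectation over \(Y^a\sim\Pb_a\) (supported on \(M\)) will give \(H_h^{\rm tan}(\mathbb R^d)\) directly. The inequality for \(\mathcal Y_0\) then follows from monotonicity of the integral of a nonnegative function, \(\int_{\mathcal Y_0}\le\int_{\mathbb R^d}\), applied before taking the expectation. Measurability of \(u\mapsto\kappa_h^{\rm tan}(y;u)\) needs the projections \(P_T(u)\) to vary measurably on \(M\), which holds because \(M\) is \(C^2\). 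Beyond that, no manifold geometry such as curvature or reach is needed, since the integral is pointwise in \(u\).

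The key step is the Gaussian identity \(\int\phi_d(x;\Sigma)^2dx=(4\pi)^{-d/2}\det(\Sigma)^{-1/2}\), valid for positive definite \(\Sigma\). It follows from
\[
\phi_d(x;\Sigma)^2=(2\pi)^{-d}\det(\Sigma)^{-1}\exp(-x^\top\Sigma^{-1}x),
\]
whose integral is
\[
(2\pi)^{-d}\det(\Sigma)^{-1}\,\pi^{d/2}\det(\Sigma)^{1/2}=(4\pi)^{-d/2}\det(\Sigma)^{-1/2}.
\]
Equivalently, \(\phi_d(\cdot;\Sigma)^2\) is a constant multiple of \(\phi_d(\cdot;\Sigma/2)\).

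Apply this with \(\Sigma_u=h^2P_T(u)+\sigma_0^2P_N(u)\), after translating by \(u\). Since \(P_T(u)\) and \(P_N(u)\) are complementary orthogonal projections of ranks \(m\) and \(d-m\), we can choose an orthonormal basis adapted to \(T_uM\oplus T_uM^\perp\). In that basis \(\Sigma_u\) is diagonal with \(m\) entries \(h^2\) and \(d-m\) entries \(\sigma_0^2\). Hence \(\Sigma_u\) is positive definite and \(\det\Sigma_u=h^{2m}\sigma_0^{2(d-m)}\), which is independent of \(u\). Therefore
\[
\int\kappa_h^{\rm tan}(y;u)^2dy=(4\pi)^{-d/2}\sigma_0^{-(d-m)}h^{-m}=C_{d,m,\sigma_0}h^{-m}.
\]
By Tonelli's theorem, since the integrand is nonnegative, the expectation and the \(y\)-integral may be interchanged. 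This gives the equality for \(H_h^{\rm tan}(\mathbb R^d)\), and the bound for general \(\mathcal Y_0\) follows as noted above.

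There is no real obstacle here. The only points needing care are the rank count \(\dim T_uM=m\), which uses that \(M\) is an embedded \(m\)-dimensional submanifold, and the exact constant in the Gaussian \(L_2\) norm identity. I would check that constant carefully, because it determines \(C_{d,m,\sigma_0}\) exactly.
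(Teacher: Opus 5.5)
Your proposal is correct and follows the paper's proof essentially step for step. It uses the same eigenvalue count giving \(\det\Sigma_u=h^{2m}\sigma_0^{2(d-m)}\), the same Gaussian identity \(\int\phi_d(x;\Sigma)^2dx=(4\pi)^{-d/2}\det(\Sigma)^{-1/2}\), the same observation that the result is free of \(u\), and the same monotonicity step for \(\mathcal Y_0\); your explicit mention of Tonelli and measurability of \(u\mapsto P_T(u)\) adds a bit of rigor the paper leaves implicit.
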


The next theorem abstracts this mechanism through the local covariance: intrinsic
scaling follows when only a few active directions contract at scale \(h\).

\begin{theorem}
\label{thm:covariance-concentration}
Suppose Assumption~\ref{assump:subgauss-kappa} holds and that, for some
\(h_r>0\), deterministic function \(r(h)\ge0\), and \(C_r<\infty\), $\E[\det\{G_h(Y^a)\}^{1/2}]
\le
C_r h^{-r(h)}, \, 0<h\le h_r$. Then, for any measurable \(\mathcal Y_0\subseteq\mathbb R^d\),
\[
H_h(\mathcal Y_0)
\le
H_h(\mathbb R^d)
\lesssim
h^{-r(h)} .
\]
In particular, suppose that, for some \(d_\star\ll d\), the covariance
\(\Sigma_h(Y^a)=G_h(Y^a)^{-1}\) has \(d_\star\) eigenvalues of order \(h^2\)
and \(d-d_\star\) eigenvalues of order one for \(\Pb_a\)-almost every \(Y^a\),
uniformly over sufficiently small \(h\). Then
\[
H_h(\mathcal Y_0)
\le
H_h(\mathbb R^d)
\lesssim
h^{-d_\star}.
\]
If, additionally, $\int_{\mathbb R^d}\kappa_{h,\theta_0}(u;Y^a)^2du
\gtrsim
\det\{G_h(Y^a)\}^{1/2}$ for \(\Pb_a\)-almost every \(Y^a\), uniformly over sufficiently small \(h\), then
\[
H_h(\mathbb R^d)\asymp h^{-d_\star}.
\]
\end{theorem}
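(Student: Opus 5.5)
The plan is to reduce all three claims to the pointwise $L_2$ concentration bound of Lemma~\ref{lem:kappa-basic}, after which everything becomes an eigenvalue computation for $G_h(Y^a)$.

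\emph{First inequality.} Since the integrand $\kappa_{h,\theta_0}(y;Y^a)^2$ is nonnegative, enlarging $\mathcal Y_0$ to $\mathbb R^d$ can only increase $H_h$, so $H_h(\mathcal Y_0)\le H_h(\mathbb R^d)$. Tonelli's theorem then gives
\[
H_h(\mathbb R^d)=\E\Big\{\int_{\mathbb R^d}\kappa_{h,\theta_0}(u;Y^a)^2du\Big\}.
\]
Under Assumption~\ref{assump:subgauss-kappa}, Lemma~\ref{lem:kappa-basic} bounds the inner integral by $C\det\{G_h(Y^a)\}^{1/2}$, with $C$ depending only on $c_0$, $C_0$ and $d$. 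Two points need checking here. The lemma is stated for almost every $Y$, which transfers to $\Pb_a$-almost every $Y^a$ through the identification, or by reading the assumption under $\Pb_a$. Also, the $L_2$ part of the lemma uses only the Gaussian envelope: integrating $\det G\exp(-2c_0\|\cdot\|_G^2)$ gives $\det G^{1/2}(\pi/2c_0)^{d/2}$. Taking expectations and applying the hypothesis $\E[\det\{G_h(Y^a)\}^{1/2}]\le C_rh^{-r(h)}$ yields $H_h(\mathbb R^d)\lesssim h^{-r(h)}$ for $0<h\le h_r$.

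\emph{The $d_\star$ case.} Write $\lambda_1,\dots,\lambda_d$ for the eigenvalues of $\Sigma_h(Y^a)$. Then $\det\{G_h(Y^a)\}^{1/2}=\prod_j\lambda_j^{-1/2}$. By hypothesis there are constants $c>0$ and $h'>0$, uniform in $Y^a$, such that $d_\star$ eigenvalues are at least $ch^2$ and the remaining $d-d_\star$ are at least $c$ for all $h\le h'$. Hence $\det\{G_h(Y^a)\}^{1/2}\le c^{-d/2}h^{-d_\star}$ almost surely. This verifies the first part with $r(h)\equiv d_\star$, $h_r=h'$ and $C_r=c^{-d/2}$.

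\emph{Two-sided rate.} For the matching lower bound, use the additional hypothesis $\int\kappa^2\gtrsim\det G^{1/2}$, together with the upper bounds on the eigenvalues implied by ``order $h^2$'' and ``order one'': $\lambda_j\le C'h^2$ or $\lambda_j\le C'$ respectively. These give $\det G^{1/2}\ge C'^{-d/2}h^{-d_\star}$ almost surely. Taking expectations gives $H_h(\mathbb R^d)\gtrsim h^{-d_\star}$, and combined with the upper bound this yields $H_h(\mathbb R^d)\asymp h^{-d_\star}$.

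The main obstacle is only interpretive. ``Order'' must be read as two-sided bounds with constants uniform in $Y^a$ and in small $h$, so that the almost-sure bounds on the determinant integrate without further moment conditions. I would state this reading explicitly at the start of the proof.
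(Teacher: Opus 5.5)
Your proposal is correct and follows the paper's proof essentially step for step. Both use the envelope bound $\int\kappa_{h,\theta_0}(u;Y^a)^2du\le C\det\{G_h(Y^a)\}^{1/2}$ from Lemma~\ref{lem:kappa-basic}, monotonicity of $H_h$ in $\mathcal Y_0$, and two-sided uniform eigenvalue bounds giving $\det\{G_h(Y^a)\}^{1/2}\asymp h^{-d_\star}$ almost surely; this feeds the first part with $r(h)\equiv d_\star$ and, with the extra lower-bound hypothesis, yields the matching lower bound. Your remarks on reading ``order'' as uniform two-sided bounds, and on transferring the almost-sure statement from the law of $Y$ to $\Pb_a$, make explicit what the paper leaves implicit; the transfer is valid because positivity makes $\Pb_a$ absolutely continuous with respect to the law of $Y$.
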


The preceding results illustrate that kernel concentration is the primitive
complexity measure. If, for some \(C<\infty\) and \(D_H\ge0\),
\[
H_h(\mathcal Y_0)\le C h^{-D_H},
\]
then Theorem~\ref{thm:dis-risk-global} gives an oracle stochastic term of order
\(O_\Pb(n^{-1}h^{-D_H})\). Under Theorem~\ref{thm:covariance-concentration}, one may take
\(D_H=d_\star\), while the tangent-kernel benchmark in
Proposition~\ref{prop:manifold-concentration} has concentration exponent \(m\).
Thus intrinsic scaling corresponds to a concentration exponent much smaller
than \(d\), and the learned estimator inherits this gain when the geometry-drift
term in Corollary~\ref{cor:risk-geometry-stability} is controlled.

Although the primary target is the fixed-\(h\) smoothed law
\(p^{\mathrm{geo}}_{a,h}\), the same decomposition extends to an ordinary
ambient density when one exists. Suppose that \(\Pb_a\) admits a Lebesgue
density \(p_a\) on a neighborhood of \(\mathcal Y_0\), and that
\[
A_h^2(\mathcal Y_0)
\equiv
\int_{\mathcal Y_0}
\{p^{\mathrm{geo}}_{a,h}(y)-p_a(y)\}^2dy
\le
C_Ah^{2\beta}.
\]
This is an \(L_2\) smoothing-bias condition analogous to those used in kernel
density estimation under Hölder or Sobolev smoothness
\citep[e.g.,][]{tsybakov2009introduction,gine2016mathematical}, with the
additional requirement that the diffusion-guided smoothing rule retain
approximate-identity behavior. Combining this condition with
Theorem~\ref{thm:dis-risk-global} and
Assumption~\ref{assump:integrated-geometry-drift} gives
\[
\begin{aligned}
\int_{\mathcal Y_0}
\{\widehat p^{\mathrm{geo}}_{a,h}(y)-p_a(y)\}^2dy
&=
O_\Pb\!\left[
\frac{H_h(\mathcal Y_0)}{n}
+
\operatorname*{ess\,sup}_{y\in\mathcal Y_0}
\{R_{\pi,\mu}^{\widehat\theta}(y)\}^2 \right. \\
&\quad\left.
\quad +
h^{-\Lambda_{\rm geom}(\mathcal Y_0)}
\rho_{\theta,n_{\rm geom}}^2
+
h^{2\beta}
\right]
+
o_\Pb(n^{-1}).
\end{aligned}
\]
The fixed-geometry analogue follows by dropping the geometry-drift term and
replacing \(R_{\pi,\mu}^{\widehat\theta}\) by \(R_{\pi,\mu}^0\). Hence, if
\(H_h(\mathcal Y_0)\lesssim h^{-D_H}\), the oracle stochastic contribution is
\(n^{-1}h^{-D_H}\). If the preceding bounds hold uniformly along
\(h=h_n\to0\), and the causal nuisance and geometry-drift terms are negligible
relative to \(n^{-1}h_n^{-D_H}+h_n^{2\beta}\), then $h_n\asymp n^{-1/(2\beta+D_H)}$, and $\int_{\mathcal Y_0}
\{\widehat p^{\mathrm{geo}}_{a,h_n}(y)-p_a(y)\}^2dy
=
O_\Pb\!\left(
n^{-2\beta/(2\beta+D_H)}
\right)$. Formal statements and proofs are given in
Appendix~\ref{appsec:ambient-density}.

\section{Counterfactual Score Learning}
\label{sec:dss}

\subsection{Estimation}
\label{subsec:score-estimation}

We now apply the diffusion-guided smoothing construction of
Section~\ref{sec:dis} to the score target in \eqref{eqn:score-smoothing}. For a
fixed geometry parameter \(\theta\), write
\(\dot\kappa_{h,\theta}(y;u)=\nabla_y\kappa_{h,\theta}(y;u)\), and define
\[
P_{a,h,\theta}(y)
=
\E_{\Pb_a}\{\kappa_{h,\theta}(y;Y^a)\},
\quad
G_{a,h,\theta}(y)
=
\E_{\Pb_a}\{\dot\kappa_{h,\theta}(y;Y^a)\}.
\]
Whenever \(P_{a,h,\theta}(y)>0\), the corresponding smoothed score is
\[
s_{a,h,\theta}(y)
=
G_{a,h,\theta}(y)/P_{a,h,\theta}(y).
\]
The population-geometry target is
\(s^{\mathrm{geo}}_{a,h}(y)=s_{a,h,\theta_0}(y)\), with
\(P_{a,h,\theta_0}(y)=p^{\mathrm{geo}}_{a,h}(y)\). This fixed-\(h\) target
remains well defined even when \(\Pb_a\) has no ordinary ambient density. Now, define the localized regression functions
\[
\mu_{a,h,\theta}(x;y)
=
\E\{\kappa_{h,\theta}(y;Y)\mid X=x,A=a\},
\quad
\nu_{a,h,\theta}(x;y)
=
\E\{\dot\kappa_{h,\theta}(y;Y)\mid X=x,A=a\}.
\]
Then \(P_{a,h,\theta}(y)=\E\{\mu_{a,h,\theta}(X;y)\}\) and
\(G_{a,h,\theta}(y)=\E\{\nu_{a,h,\theta}(X;y)\}\).

For the nuisance collection $\eta^s_\theta
=
\{\pi_a,\mu_{a,h,\theta},\nu_{a,h,\theta},
\kappa_{h,\theta},\dot\kappa_{h,\theta}\}$, define the uncentered influence functions
\[
\varphi_P(Z;y,\eta^s_\theta)
=
\frac{\mathbbm 1\{A=a\}}{\pi_a(X)}
\{\kappa_{h,\theta}(y;Y)-\mu_{a,h,\theta}(X;y)\}
+
\mu_{a,h,\theta}(X;y),
\]
\[
\varphi_G(Z;y,\eta^s_\theta)
=
\frac{\mathbbm 1\{A=a\}}{\pi_a(X)}
\{\dot\kappa_{h,\theta}(y;Y)-\nu_{a,h,\theta}(X;y)\}
+
\nu_{a,h,\theta}(X;y).
\]
The corresponding centered influence functions are
\[
\phi_P(Z;y,\eta^s_\theta)
=
\varphi_P(Z;y,\eta^s_\theta)-P_{a,h,\theta}(y),
\quad
\phi_G(Z;y,\eta^s_\theta)
=
\varphi_G(Z;y,\eta^s_\theta)-G_{a,h,\theta}(y).
\]
Hence, for fixed \(\theta\), the efficient influence function for
\(s_{a,h,\theta}(y)\) is
\[
\phi_s(Z;y,\eta^s_\theta)
=
\frac{
\phi_G(Z;y,\eta^s_\theta)
-
s_{a,h,\theta}(y)\phi_P(Z;y,\eta^s_\theta)
}{
P_{a,h,\theta}(y)
}.
\]
At \(\theta=\theta_0\), this is the efficient influence function for
\(s^{\mathrm{geo}}_{a,h}(y)\). Let $\widehat\eta^s_{\widehat\theta}
=
\{\widehat\pi_a,\widehat\mu_{a,h,\widehat\theta},
\widehat\nu_{a,h,\widehat\theta},
\kappa_{h,\widehat\theta},\dot\kappa_{h,\widehat\theta}\}$, and define the one-step estimators
\[
\widehat P_{a,h}(y)
=
\Pn\{\varphi_P(Z;y,\widehat\eta^s_{\widehat\theta})\},
\quad
\widehat G_{a,h}(y)
=
\Pn\{\varphi_G(Z;y,\widehat\eta^s_{\widehat\theta})\}.
\]
Here \(\widehat P_{a,h}\) is exactly the DIS estimator in
\eqref{eqn:estimator-one-step-diff-smoothing}, and
\(\widehat G_{a,h}\) is its derivative analogue. We define the
diffusion-informed score smoothing (DSS) estimator by
\begin{align}
\label{eqn:dss-estimator}
\widehat s^{\mathrm{geo}}_{a,h}(y)
=
\widehat G_{a,h}(y)/\widehat P_{a,h}(y),
\end{align}
when \(\widehat P_{a,h}(y)>0\), and set it to zero otherwise.
A first-order ratio expansion shows that its influence function is
\(\phi_s\) above, so it is asymptotically equivalent to a direct one-step
estimator based on \(\phi_s\). Under cross fitting, \(\widehat P_{a,h}\) and
\(\widehat G_{a,h}\) are aggregated over held-out folds before taking the
ratio. 

\subsection{Theoretical Properties}
\label{subsec:DSS-theory}

We give the score analogues of the DIS pointwise expansion and integrated risk
bound. The results are stated for fixed \(h>0\), with the dependence on \(h\)
displayed because it governs both derivative concentration and geometry drift.

\begin{assumption}[Score regularity]
\label{assumption:score-regularity}
Let \(\mathcal I=\{y\}\) for pointwise results and
\(\mathcal I=\mathcal Y_0\) for integrated results. There exists
\(p'_{\min}>0\) such that
\[
\inf_{u\in\mathcal I}P_{a,h,\theta_0}(u)\ge p'_{\min},
\quad
\Pb\left\{
\inf_{u\in\mathcal I}
\big[
P_{a,h,\widehat\theta}(u)\wedge\widehat P_{a,h}(u)
\big]
\ge p'_{\min}/2
\right\}\to1.
\]
Moreover,
\(\sup_{u\in\mathcal I}
\{\|s^{\mathrm{geo}}_{a,h}(u)\|_2+
\|s_{a,h,\widehat\theta}(u)\|_2\}=O_\Pb(1)\), where the supremum is interpreted
as an essential supremum when \(\mathcal I=\mathcal Y_0\). Assume, uniformly over \(u\in\mathcal I\),
\(\varphi_P(\cdot;u,\eta^s_{\theta_0})\in L_2(\Pb)\) and
\(\varphi_G(\cdot;u,\eta^s_{\theta_0})\in L_2(\Pb)^d\), with foldwise
conditional analogues.
\end{assumption}

Let
\(\delta_\pi=\|\widehat\pi_a-\pi_a\|_{L_2(\Pb)}\), and define the
learned-geometry regression errors by
\[
\delta_\mu^{\widehat\theta}(y)
=
\|\widehat\mu_{a,h,\widehat\theta}(\cdot;y)-\mu_{a,h,\widehat\theta}(\cdot;y)\|_{L_2(\Pb)},
\quad
\delta_\nu^{\widehat\theta}(y)
=
\|\widehat\nu_{a,h,\widehat\theta}(\cdot;y)-\nu_{a,h,\widehat\theta}(\cdot;y)\|_{L_2(\Pb)}.
\]
Furthermore, define the joint kernel perturbation by
\[
\delta_{\kappa,s}^{\widehat\theta}(y)
=
\|\kappa_{h,\widehat\theta}(y;Y^a)-\kappa_{h,\theta_0}(y;Y^a)\|_{L_2(\Pb_a)}
+
\|\dot\kappa_{h,\widehat\theta}(y;Y^a)-\dot\kappa_{h,\theta_0}(y;Y^a)\|_{L_2(\Pb_a)}.
\]
Under cross fitting, let \(P_{a,h,\widehat\theta}\) and
\(G_{a,h,\widehat\theta}\) denote the fold-weighted population counterparts of
\(\widehat P_{a,h}\) and \(\widehat G_{a,h}\), set
\(s_{a,h,\widehat\theta}(y)
=
G_{a,h,\widehat\theta}(y)/P_{a,h,\widehat\theta}(y)\), and define
\(B_{s,\theta,h}(y)
=
s_{a,h,\widehat\theta}(y)-s_{a,h,\theta_0}(y)\)
as the score target drift induced by geometry learning. The next theorem gives
the pointwise expansion of the DSS estimator.

\begin{theorem}
\label{thm:dss-expansion}
Fix \(h>0\) and \(y\in\mathbb R^d\). Suppose
Assumptions~\ref{assumption:boundedness},
\ref{assumption:sample-splitting}, and
\ref{assumption:score-regularity} hold at \(y\). Suppose also that
\(\delta_\pi\), \(\delta_\mu^{\widehat\theta}(y)\),
\(\delta_\nu^{\widehat\theta}(y)\), and
\(\delta_{\kappa,s}^{\widehat\theta}(y)\) are all \(o_\Pb(1)\). Then the DSS
estimator in \eqref{eqn:dss-estimator} satisfies
\[
\widehat s^{\mathrm{geo}}_{a,h}(y)
-
s^{\mathrm{geo}}_{a,h}(y)
=
(\Pn-\Pb)\{\phi_s(Z;y,\eta^s_{\theta_0})\}
+
B_{s,\theta,h}(y)
+
R_{s,\pi,\mu,\nu}^{\widehat\theta}(y)
+
o_\Pb(n^{-1/2}),
\]
where $\|R_{s,\pi,\mu,\nu}^{\widehat\theta}(y)\|_2
=
O_\Pb\!\left[
\delta_\pi
\{\delta_\mu^{\widehat\theta}(y)
+
\delta_\nu^{\widehat\theta}(y)\}
\right]$.
\end{theorem}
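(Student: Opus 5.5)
The plan is to reduce the DSS expansion to two one-step expansions, one for the numerator and one for the denominator, and then linearize the ratio. First I would apply the argument of Theorem~\ref{thm:onestep-expansion} to $\widehat P_{a,h}(y)$ and, coordinatewise, to $\widehat G_{a,h}(y)$. The derivative kernel $\dot\kappa_{h,\theta}$ plays the role of $\kappa_{h,\theta}$, and $\nu_{a,h,\theta}$ plays the role of $\mu_{a,h,\theta}$. Working foldwise, conditional on $\mathcal F_{-k}$, I would write
\[
\widehat P_{a,h}(y)-P_{a,h,\theta_0}(y)=(\Pn-\Pb)\varphi_P(\cdot;y,\eta^s_{\theta_0})+\{P_{a,h,\widehat\theta}(y)-P_{a,h,\theta_0}(y)\}+R_P(y)+(\Pn-\Pb)\{\varphi_P(\cdot;y,\widehat\eta^s_{\widehat\theta})-\varphi_P(\cdot;y,\eta^s_{\theta_0})\}.
\]
The same decomposition holds for $G$, with remainder $R_G$.

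The remainders come from the usual conditional calculation:
\[
\Pb\varphi_P(\cdot;y,\widehat\eta^s_{\widehat\theta})-P_{a,h,\widehat\theta}(y)=\E\Big[\tfrac{\pi_a-\widehat\pi_a}{\widehat\pi_a}(X)\{\mu_{a,h,\widehat\theta}-\widehat\mu_{a,h,\widehat\theta}\}(X;y)\Big].
\]
By Cauchy--Schwarz and $\widehat\pi_a\ge\pi_{\min}/2$, this gives $|R_P|\lesssim\delta_\pi\delta_\mu^{\widehat\theta}(y)$ and $\|R_G\|_2\lesssim\delta_\pi\delta_\nu^{\widehat\theta}(y)$.

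The empirical-process terms are $o_\Pb(n^{-1/2})$ under cross-fitting (Assumption~\ref{assumption:sample-splitting}). It suffices that the $L_2(\Pb)$ distance between the estimated and oracle influence functions is $o_\Pb(1)$. Expanding this distance gives terms bounded by $\delta_\pi$, $\delta_\mu^{\widehat\theta}$, and $\delta_\nu^{\widehat\theta}$. It also gives the kernel differences, which are controlled by $\delta_{\kappa,s}^{\widehat\theta}(y)$ because $\E[\mathbbm 1\{A=a\}\pi_a^{-2}f(Y)^2]\le\pi_{\min}^{-1}\|f\|^2_{L_2(\Pb_a)}$. The remaining term is $\mu_{a,h,\widehat\theta}-\mu_{a,h,\theta_0}$, which Jensen bounds in $L_2(\Pb\mid A=a)$ by the same kernel perturbation; positivity then transfers the bound to $L_2(\Pb)$. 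The analogous bound holds for $\nu$. Conditional Chebyshev then gives the $o_\Pb(n^{-1/2})$ rate, using the square-integrability in Assumptions~\ref{assumption:boundedness} and~\ref{assumption:score-regularity}. After fold aggregation, $P_{a,h,\widehat\theta}$ and $G_{a,h,\widehat\theta}$ are the fold-weighted targets.

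Next I would use the exact ratio identity. Write $\widehat P=\widehat P_{a,h}(y)$, $\bar P=P_{a,h,\widehat\theta}(y)$, $\bar s=s_{a,h,\widehat\theta}(y)$, and $\bar G=G_{a,h,\widehat\theta}(y)$. Then
\[
\widehat s-\bar s=\frac{(\widehat G-\bar G)-\bar s(\widehat P-\bar P)}{\widehat P}.
\]
Adding $B_{s,\theta,h}(y)=\bar s-s^{\mathrm{geo}}_{a,h}(y)$ yields the target drift. Assumption~\ref{assumption:score-regularity} keeps $\widehat P\ge p'_{\min}/2$ with probability tending to one, so the thresholding at zero is asymptotically irrelevant. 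Substituting the two expansions, I would replace $\widehat P$ by $P_{a,h,\theta_0}$ and $\bar s$ by $s^{\mathrm{geo}}_{a,h}$ in front of the leading term $(\Pn-\Pb)\{\varphi_G-s\varphi_P\}$, recovering $(\Pn-\Pb)\phi_s(\cdot;y,\eta^s_{\theta_0})$. The nuisance remainder is $R^{\widehat\theta}_{s,\pi,\mu,\nu}=(R_G-\bar sR_P)/\widehat P$, which has the stated order because $\bar s=O_\Pb(1)$ and $\widehat P^{-1}=O_\Pb(1)$.

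The main obstacle is the cross terms generated by these substitutions. The first kind is $(\widehat P^{-1}-P_{a,h,\theta_0}^{-1})\times O_\Pb(n^{-1/2})$, which is $o_\Pb(n^{-1/2})$ provided $\widehat P-P_{a,h,\theta_0}=o_\Pb(1)$. That in turn requires the drift $\bar P-P_{a,h,\theta_0}$ to vanish; I would get this from $|\bar P-P_{a,h,\theta_0}|\le\delta_{\kappa,s}^{\widehat\theta}(y)=o_\Pb(1)$ by Cauchy--Schwarz, and similarly $\bar s-s^{\mathrm{geo}}_{a,h}=o_\Pb(1)$. The second kind is the drift contributions $(\bar G-G_0)$ and $(\bar P-P_0)$. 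These must not be double counted: they are absorbed exactly into $B_{s,\theta,h}$ because the ratio identity is centered at $\bar s$ rather than at $s^{\mathrm{geo}}_{a,h}$. I expect this bookkeeping, with the linearization of the denominator taken around the learned-geometry target, to need the most care.
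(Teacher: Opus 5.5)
Your proposal is correct and follows essentially the same route as the paper. Both arguments use foldwise one-step expansions of $\widehat P_{a,h}(y)$ and $\widehat G_{a,h}(y)$ centered at the learned-geometry targets, product bounds on $R_P$ and $R_G$, sample-splitting control of the empirical terms via $\delta_{\kappa,s}^{\widehat\theta}(y)$, and a ratio step centered at $s_{a,h,\widehat\theta}(y)$ so that the drift falls exactly into $B_{s,\theta,h}(y)$. The only variation is in the ratio step: you use the exact identity with $\widehat P_{a,h}(y)$ in the denominator, as the paper does in its DSS risk proof, whereas the paper's expansion proof linearizes around $P_{a,h,\widehat\theta}(y)$ and carries an explicit quadratic term $\mathcal Q_n$; your version is equally valid and slightly simpler.
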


Theorem~\ref{thm:dss-expansion} is the score analogue of
Theorem~\ref{thm:onestep-expansion}. It separates the second-order causal
nuisance remainder from the score target drift
\(B_{s,\theta,h}(y)\). Ratio algebra gives
\[
B_{s,\theta,h}(y)
=
\frac{
G_{a,h,\widehat\theta}(y)-G_{a,h,\theta_0}(y)
}
{P_{a,h,\widehat\theta}(y)}
-
s^{\mathrm{geo}}_{a,h}(y)
\frac{
P_{a,h,\widehat\theta}(y)-P_{a,h,\theta_0}(y)
}
{P_{a,h,\widehat\theta}(y)}.
\]
Therefore, under Assumption~\ref{assumption:score-regularity},
\[
\|B_{s,\theta,h}(y)\|_2
\lesssim
\delta_{\kappa,s}^{\widehat\theta}(y).
\]
In particular, if $\delta_{\kappa,s}^{\widehat\theta}(y)
=
O_\Pb\{h^{-\lambda_{s,\rm geom}(y)}
\rho_{\theta,n_{\rm geom}}\}$, then
\[
\|B_{s,\theta,h}(y)\|_2
=
O_\Pb\{h^{-\lambda_{s,\rm geom}(y)}
\rho_{\theta,n_{\rm geom}}\}.
\]
This stability argument is the derivative analogue of that for DIS, with
\(\rho_{\theta,n_{\rm geom}}\) interpreted in the stronger geometry-learning
norm required for DSS. Let \(\Delta_{\Phi,1}\) and \(\Delta_{J,1}\) denote the
combined level and first-spatial-derivative errors of the inverse flow and
Jacobian factor, respectively. The same mean-value argument gives
\[
\delta_{\kappa,s}^{\widehat\theta}(y)
\lesssim
h^{-\lambda_{s,\rm geom}(y)}
\{\Delta_{\Phi,1}+\Delta_{J,1}\}.
\]
Standard continuous-dependence bounds for the probability-flow ordinary
differential equation relate these errors to estimation errors in the diffusion
score and its first two spatial derivatives. Thus
\(\lambda_{s,\rm geom}(y)\) additionally captures the derivative sensitivity of
\(\dot\kappa_{h,\theta}\); see
Lemma~\ref{lem:flow-kernel-gradient-stability} in
Appendix~\ref{appsec:geometry-drift}.

For integrated error, define
\[
H_h^s(\mathcal Y_0)
=
\E\left[
\int_{\mathcal Y_0}
\left\{
\kappa_{h,\theta_0}(y;Y^a)^2
+
\|\dot\kappa_{h,\theta_0}(y;Y^a)\|_2^2
\right\}dy
\right].
\]
This is the score analogue of \(H_h(\mathcal Y_0)\), incorporating both kernel
and derivative concentration. The following DSS analogue of
Assumption~\ref{assump:integrated-geometry-drift} jointly controls
geometry-induced perturbations of the kernel and its gradient; sufficient
flow-based conditions are given in Appendix~\ref{appsec:geometry-drift}.

\begin{assumption}[Integrated score geometry drift control]
\label{assump:integrated-score-geometry-drift}
For \(\mathcal Y_0\), there exists an amplification exponent
\(\Lambda_{s,\rm geom}(\mathcal Y_0)\ge0\) such that, uniformly over
\(k=1,\ldots,K\),
\[
\begin{aligned}
&\E_{\Pb_a}\!\left[
\int_{\mathcal Y_0}
\left\{
\bigl(\kappa_{h,\widehat\theta^{(-k)}}-\kappa_{h,\theta_0}\bigr)^2
+
\bigl\|\dot\kappa_{h,\widehat\theta^{(-k)}}-\dot\kappa_{h,\theta_0}\bigr\|_2^2
\right\}(y;Y^a)\,dy
\right] \\
&\quad=
O_\Pb\!\left\{
h^{-\Lambda_{s,\rm geom}(\mathcal Y_0)}
\rho_{\theta,n_{\rm geom}}^2
\right\}.
\end{aligned}
\]
\end{assumption}

The next DSS risk theorem keeps the learned-geometry target drift explicit.
\begin{theorem}
\label{thm:dss-risk-global}
For fixed \(h>0\), let \(\mathcal Y_0\subset\mathbb R^d\) be measurable with
finite Lebesgue measure. Suppose
Assumptions~\ref{assumption:boundedness},
\ref{assumption:sample-splitting}, and
\ref{assumption:score-regularity} hold over \(\mathcal Y_0\). Assume that
\(\int_{\mathcal Y_0}
\|\widehat\mu_{a,h,\widehat\theta}(\cdot;y)
-\mu_{a,h,\widehat\theta}(\cdot;y)\|_{L_2(\Pb)}^2dy=o_\Pb(1)\) and
\(\int_{\mathcal Y_0}
\|\widehat\nu_{a,h,\widehat\theta}(\cdot;y)
-\nu_{a,h,\widehat\theta}(\cdot;y)\|_{L_2(\Pb)}^2dy=o_\Pb(1)\). Assume also that
\[
\begin{aligned}
&\|\widehat\pi_a-\pi_a\|_{L_2(\Pb)}^2
\sup_x
\E\!\left[
\int_{\mathcal Y_0}
\left\{
\kappa_{h,\widehat\theta}(y;Y)^2+
\|\dot\kappa_{h,\widehat\theta}(y;Y)\|_2^2
\right\}dy
\,\middle|\,
X=x,A=a
\right]
=o_\Pb(1),
\end{aligned}
\]
\[
\int_{\mathcal Y_0}
\left[
\|\kappa_{h,\widehat\theta}(y;Y^a)
-\kappa_{h,\theta_0}(y;Y^a)\|_{L_2(\Pb_a)}^2
+
\|\dot\kappa_{h,\widehat\theta}(y;Y^a)
-\dot\kappa_{h,\theta_0}(y;Y^a)\|_{L_2(\Pb_a)}^2
\right]dy
=o_\Pb(1).
\]
Then
\[
\begin{aligned}
\int_{\mathcal Y_0}
\|\widehat s^{\mathrm{geo}}_{a,h}(y)
-s^{\mathrm{geo}}_{a,h}(y)\|_2^2dy
&=
O_\Pb\!\left[
\frac{H_h^s(\mathcal Y_0)}{n}
+
\operatorname*{ess\,sup}_{y\in\mathcal Y_0}
\|R_{s,\pi,\mu,\nu}^{\widehat\theta}(y)\|_2^2 \right.\\
&\quad\left.
+
\int_{\mathcal Y_0}
\|B_{s,\theta,h}(y)\|_2^2dy
\right]
+
o_\Pb(n^{-1}).
\end{aligned}
\]
\end{theorem}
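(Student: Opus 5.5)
The plan is to reduce the integrated score risk to integrated risks for the numerator $\widehat G_{a,h}$ and denominator $\widehat P_{a,h}$ via the ratio identity, and then to treat each of those exactly as in the proof of Theorem~\ref{thm:dis-risk-global}.

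\textbf{Step 1: ratio decomposition.} On the event of Assumption~\ref{assumption:score-regularity}, where $\inf_{\mathcal Y_0}\widehat P_{a,h}\wedge P_{a,h,\widehat\theta}\ge p'_{\min}/2$ and the probability tends to one, the estimator is the ratio and not the zero default. For each $y$ I write
\[
\widehat s^{\mathrm{geo}}_{a,h}(y)-s_{a,h,\widehat\theta}(y)
=\frac{\{\widehat G_{a,h}(y)-G_{a,h,\widehat\theta}(y)\}-s_{a,h,\widehat\theta}(y)\{\widehat P_{a,h}(y)-P_{a,h,\widehat\theta}(y)\}}{\widehat P_{a,h}(y)} .
\]
The essential-sup bound on $\|s_{a,h,\widehat\theta}\|_2$ and the lower bound on $\widehat P_{a,h}$ then give the pointwise bound
\[
\|\widehat s^{\mathrm{geo}}_{a,h}(y)-s^{\mathrm{geo}}_{a,h}(y)\|_2^2\lesssim \|\widehat G_{a,h}(y)-G_{a,h,\widehat\theta}(y)\|_2^2+\{\widehat P_{a,h}(y)-P_{a,h,\widehat\theta}(y)\}^2+\|B_{s,\theta,h}(y)\|_2^2 .
\]
This bound holds uniformly in $y$. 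Integrating over $\mathcal Y_0$ reduces the problem to bounding the integrated squared errors of $\widehat P_{a,h}$ and $\widehat G_{a,h}$ around their learned-geometry targets. The drift term is carried through unchanged.

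\textbf{Step 2: per-component one-step decomposition.} Within fold $k$ and conditional on $\mathcal F_{-k}$, I use the standard split
\[
\widehat G_{a,h}-G_{a,h,\widehat\theta}=(\Pn-\Pb)\varphi_G(\cdot;y,\eta^s_{\theta_0})+(\Pn-\Pb)\{\varphi_G(\cdot;y,\widehat\eta^s_{\widehat\theta})-\varphi_G(\cdot;y,\eta^s_{\theta_0})\}+R_{G}(y),
\]
and the analogous split for $P$. Here $R_G(y)=\Pb\varphi_G(\widehat\eta^s_{\widehat\theta})-G_{a,h,\widehat\theta}(y)$, and iterated expectations give
\[
R_G(y)=\E\bigl[(\pi_a/\widehat\pi_a-1)(\nu_{a,h,\widehat\theta}-\widehat\nu_{a,h,\widehat\theta})\bigr].
\]
By Cauchy--Schwarz and $\widehat\pi_a\ge\pi_{\min}/2$, this is bounded by $\delta_\pi\delta_\nu^{\widehat\theta}(y)$, and the $P$ remainder similarly by $\delta_\pi\delta_\mu^{\widehat\theta}(y)$. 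After the ratio step, these combine into $R^{\widehat\theta}_{s,\pi,\mu,\nu}$, whose integral is at most $\mathrm{Vol}(\mathcal Y_0)$ times the essential supremum squared.

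The oracle term has conditional second moment $n^{-1}\E\int_{\mathcal Y_0}\{\varphi_P^2+\|\varphi_G\|_2^2\}dy$ by Fubini. Conditional Jensen and positivity bound this by a constant times $n^{-1}H^s_h(\mathcal Y_0)$, since the $\mu,\nu$ contributions are dominated by the $\kappa,\dot\kappa$ contributions. Markov's inequality then gives $O_\Pb(H^s_h/n)$.

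\textbf{Step 3: empirical-process term (the main obstacle).} The difficulty is showing that the integrated squared empirical-process term is $o_\Pb(n^{-1})$. Conditional on $\mathcal F_{-k}$, its expectation is at most
\[
n_k^{-1}\int_{\mathcal Y_0}\|\varphi(\widehat\eta^s_{\widehat\theta})-\varphi(\eta^s_{\theta_0})\|_{L_2(\Pb)}^2dy .
\]
I will expand the difference of influence functions into three pieces:
\begin{itemize}
\item[(i)] $(1/\widehat\pi_a-1/\pi_a)\,\mathbbm 1\{A=a\}(\dot\kappa_{h,\widehat\theta}-\widehat\nu_{a,h,\widehat\theta})$;
\item[(ii)] $(1-\mathbbm 1\{A=a\}/\pi_a)(\widehat\nu_{a,h,\widehat\theta}-\nu_{a,h,\theta_0})$;
\item[(iii)] $\mathbbm 1\{A=a\}\pi_a^{-1}(\dot\kappa_{h,\widehat\theta}-\dot\kappa_{h,\theta_0})$,
\end{itemize}
with the same three pieces for the $P$ component.

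Piece (i) is bounded by $\delta_\pi^2$ times the conditional sup of $\int\{\kappa_{h,\widehat\theta}^2+\|\dot\kappa_{h,\widehat\theta}\|^2\}$, which is $o_\Pb(1)$ by hypothesis. This step needs some care: $\widehat\nu$ can be controlled by its target plus $\delta_\nu$, and the cross terms are handled by Cauchy--Schwarz. Piece (iii) is controlled by the assumed integrated kernel-perturbation condition, after changing the measure from $\Pb$ restricted to $A=a$ into $\Pb_a$ via inverse weighting. Piece (ii) splits into the integrated $\delta_\nu^2$, which is $o_\Pb(1)$, plus $\|\nu_{a,h,\widehat\theta}-\nu_{a,h,\theta_0}\|^2$. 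The latter is bounded by Jensen and positivity through the same kernel-perturbation integral.

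Hence the conditional expectation is $o_\Pb(1)/n$, and conditional Markov's inequality followed by aggregation over the fixed $K$ folds gives $o_\Pb(n^{-1})$. Finally, on the complementary event, whose probability tends to zero, the contribution is negligible in probability. Combining Steps 1--3 yields the claimed bound.
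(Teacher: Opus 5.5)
Your proposal is correct and follows essentially the paper's proof: the same ratio identity on the high-probability event of Assumption~\ref{assumption:score-regularity}, and the same one-step split of $\widehat P_{a,h}$ and $\widehat G_{a,h}$ around their learned-geometry targets with product remainders bounded by $\delta_\pi\delta_\mu^{\widehat\theta}(y)$ and $\delta_\pi\delta_\nu^{\widehat\theta}(y)$; it likewise uses the Fubini--Jensen bound $H_h^s(\mathcal Y_0)/n$ for the oracle term and the foldwise conditional-variance argument for the empirical-process terms, where the paper passes through the intermediate $\varphi_{P,\widehat\theta},\varphi_{G,\widehat\theta}$ instead of your equivalent direct three-piece expansion. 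The only bookkeeping difference is that you apply the triangle inequality to the ratio numerator before substituting the expansions, so your remainder appears as $\operatorname*{ess\,sup}_{y\in\mathcal Y_0}\{R_P^{\widehat\theta}(y)^2+\|R_G^{\widehat\theta}(y)\|_2^2\}$ rather than the paper's grouped $R_{s,\pi,\mu,\nu}^{\widehat\theta}=(R_G^{\widehat\theta}-s_{a,h,\widehat\theta}R_P^{\widehat\theta})/\widehat P_{a,h}$; both are pointwise of order $\delta_\pi^2\{\delta_\mu^{\widehat\theta}(y)+\delta_\nu^{\widehat\theta}(y)\}^2$, and substituting before splitting, as your Step 2 remark intends, gives the statement verbatim.
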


Combining this with
Assumption~\ref{assump:integrated-score-geometry-drift} gives the following
result.

\begin{corollary}
\label{cor:dss-risk-geometry-stability}
Suppose the conditions of Theorem~\ref{thm:dss-risk-global} and
Assumption~\ref{assump:integrated-score-geometry-drift} hold. Then
\[
\begin{aligned}
\int_{\mathcal Y_0}
\|\widehat s^{\mathrm{geo}}_{a,h}(y)
-
s^{\mathrm{geo}}_{a,h}(y)\|_2^2dy
=
O_\Pb\!\left[
\frac{H_h^s(\mathcal Y_0)}{n}
+
\operatorname*{ess\,sup}_{y\in\mathcal Y_0}
\|R_{s,\pi,\mu,\nu}^{\widehat\theta}(y)\|_2^2 \right.\\
\left.
+
h^{-\Lambda_{s,\rm geom}(\mathcal Y_0)}
\rho_{\theta,n_{\rm geom}}^2
\right]
+
o_\Pb(n^{-1}).
\end{aligned}
\]
\end{corollary}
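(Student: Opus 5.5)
This corollary follows by substituting the drift bound of Assumption~\ref{assump:integrated-score-geometry-drift} into the last term of Theorem~\ref{thm:dss-risk-global}. Since the conditions of that theorem hold, its conclusion already gives the integrated risk bounded by $H_h^s(\mathcal Y_0)/n$, the squared product remainder, and $\int_{\mathcal Y_0}\|B_{s,\theta,h}(y)\|_2^2dy$, up to $o_\Pb(n^{-1})$. It therefore suffices to show
\[
\int_{\mathcal Y_0}\|B_{s,\theta,h}(y)\|_2^2dy=O_\Pb\{h^{-\Lambda_{s,\rm geom}(\mathcal Y_0)}\rho_{\theta,n_{\rm geom}}^2\}.
\]

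\textbf{Step 1: pointwise ratio bound.} Start from the ratio identity displayed after Theorem~\ref{thm:dss-expansion}. On the event from Assumption~\ref{assumption:score-regularity} where $\inf_{\mathcal Y_0}P_{a,h,\widehat\theta}\ge p'_{\min}/2$ and the score is essentially bounded (probability tending to one), we get, for almost every $y\in\mathcal Y_0$,
\[
\|B_{s,\theta,h}(y)\|_2^2\lesssim \|G_{a,h,\widehat\theta}(y)-G_{a,h,\theta_0}(y)\|_2^2+|P_{a,h,\widehat\theta}(y)-P_{a,h,\theta_0}(y)|^2 .
\]
The implicit constant is uniform in $y$: it depends only on $p'_{\min}$ and the $O_\Pb(1)$ bound on $\operatorname*{ess\,sup}_y\|s^{\mathrm{geo}}_{a,h}(y)\|_2$.

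\textbf{Step 2: Jensen within and across folds.} Both $P_{a,h,\widehat\theta}$ and $G_{a,h,\widehat\theta}$ are $n_k/n$-weighted averages of the fold targets $\E_{\Pb_a}\{\kappa_{h,\widehat\theta^{(-k)}}(y;Y^a)\}$ and their gradient analogues, each conditional on $\mathcal F_{-k}$. By convexity across the fold weights, and then Jensen for the conditional expectation within each fold, the squared differences are bounded by
\[
\sum_k\frac{n_k}{n}\E_{\Pb_a}\bigl\{(\kappa_{h,\widehat\theta^{(-k)}}-\kappa_{h,\theta_0})^2+\|\dot\kappa_{h,\widehat\theta^{(-k)}}-\dot\kappa_{h,\theta_0}\|_2^2\bigr\}(y;Y^a).
\]
Integrate over $\mathcal Y_0$, using Tonelli to swap the integral and the expectation, since the integrand is nonnegative. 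Each fold term is then exactly the quantity controlled in Assumption~\ref{assump:integrated-score-geometry-drift}. Because that bound holds uniformly over the fixed number $K$ of folds, the weighted sum is $O_\Pb\{h^{-\Lambda_{s,\rm geom}(\mathcal Y_0)}\rho_{\theta,n_{\rm geom}}^2\}$.

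\textbf{Step 3: combine.} Substitute this into Theorem~\ref{thm:dss-risk-global}. The events used in Step 1 have probability tending to one, so restricting to them does not affect $O_\Pb$ statements.

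The only point needing care is the uniformity of the ratio constant in Step 1, which requires the essential-supremum version of Assumption~\ref{assumption:score-regularity} over $\mathcal Y_0$; the rest is routine.
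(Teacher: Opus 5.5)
Your proposal is correct and follows the paper's own route. The paper likewise bounds $\|B_{s,\theta,h}(y)\|_2$ by the kernel and kernel-gradient perturbations using the ratio identity under Assumption~\ref{assumption:score-regularity}, applies Jensen within each fold and across the fold weights (as in the DIS analogue), and substitutes Assumption~\ref{assump:integrated-score-geometry-drift} into Theorem~\ref{thm:dss-risk-global}; your explicit treatment of the fold aggregates and of Tonelli's theorem fills in details the paper leaves implicit.
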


The next theorem bounds \(H_h^s(\mathcal Y_0)\) through the active covariance
directions.

\begin{theorem}
\label{thm:dss-covariance-concentration}
Suppose Assumption~\ref{assump:subgauss-kappa} holds and, for some
\(c'_0,C'_0>0\),
\[
\|\dot\kappa_{h,\theta_0}(u;Y)\|_2
\le
C'_0\det\{G_h(Y)\}^{1/2}
\|G_h(Y)\{u-m_h(Y)\}\|_2
\exp\{-c'_0\|u-m_h(Y)\|_{G_h(Y)}^2\}.
\]
Then, for every measurable \(\mathcal Y_0\subseteq\mathbb R^d\),
\[
H_h^s(\mathcal Y_0)
\le
H_h^s(\mathbb R^d)
\lesssim
\E\left[
\det\{G_h(Y^a)\}^{1/2}
\{1+\operatorname{tr}G_h(Y^a)\}
\right].
\]
In particular, suppose that, for \(\Pb_a\)-almost every \(Y^a\), uniformly over sufficiently small \(h\), \(\Sigma_h(Y^a)\) has \(d_\star\) eigenvalues of order \(h^2\) and the rest of
order one. Then
\[
H_h^s(\mathcal Y_0)
\le
H_h^s(\mathbb R^d)
\lesssim
h^{-(d_\star+2)}.
\]
\end{theorem}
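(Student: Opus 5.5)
The plan is to bound the two pieces of \(H_h^s(\mathbb R^d)\) separately, for fixed \(Y\), using the Gaussian envelopes. The inequality \(H_h^s(\mathcal Y_0)\le H_h^s(\mathbb R^d)\) is immediate because the integrand is nonnegative.

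For the kernel part, Lemma~\ref{lem:kappa-basic} under Assumption~\ref{assump:subgauss-kappa} already gives
\[
\int\kappa_{h,\theta_0}(u;Y)^2du\le C\det\{G_h(Y)\}^{1/2}.
\]

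For the gradient part, square the assumed bound:
\[
\|\dot\kappa_{h,\theta_0}(u;Y)\|_2^2\le C_0'^2\det G_h\;\|G_h v\|_2^2\,e^{-2c_0'\|v\|_{G_h}^2},\qquad v=u-m_h(Y).
\]
Substitute \(v=G_h^{-1/2}w\), so that \(dv=\det G_h^{-1/2}dw\) and \(\|G_hv\|_2^2=w^\top G_h w\). This gives
\[
\int\|\dot\kappa_{h,\theta_0}\|_2^2du\le C_0'^2\det G_h^{1/2}\int w^\top G_h w\,e^{-2c_0'\|w\|^2}dw.
\]
The remaining Gaussian integral equals \(c_{d}\operatorname{tr}G_h\), with \(c_d\) depending only on \(c_0'\) and \(d\), because \(\int w_iw_je^{-2c_0'\|w\|^2}dw\) is proportional to \(\delta_{ij}\).

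Adding the two pieces and taking the expectation over \(Y^a\) yields
\[
H_h^s(\mathbb R^d)\lesssim\E\left[\det\{G_h(Y^a)\}^{1/2}\{1+\operatorname{tr}G_h(Y^a)\}\right].
\]
Fubini/Tonelli justifies the interchange, since everything is nonnegative.

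For the eigenvalue regime, the eigenvalues of \(G_h=\Sigma_h^{-1}\) are \(d_\star\) of order \(h^{-2}\) and \(d-d_\star\) of order one, uniformly for \(\Pb_a\)-a.e.\ \(Y^a\) and small \(h\). Hence
\[
\det G_h^{1/2}\asymp h^{-d_\star},\qquad \operatorname{tr}G_h\lesssim d_\star h^{-2}+(d-d_\star)\lesssim h^{-2}.
\]
The product is then \(\lesssim h^{-d_\star}(1+h^{-2})\lesssim h^{-(d_\star+2)}\) for \(h\le1\), and the bound is uniform in \(Y^a\), so taking expectations preserves it.

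The main obstacle is small and lies in the change of variables for the gradient term. One must keep track of the extra factor \(\det G_h^{1/2}\): squaring the prefactor gives \(\det G_h\), and the Jacobian removes half of it. One must also check that \(\|G_hv\|^2\) turns into the quadratic form \(w^\top G_hw\) rather than \(\|w\|^2\), since this is exactly what produces the \(\operatorname{tr}G_h\) factor and hence the extra \(h^{-2}\). Measurability and finiteness of the expectation follow from the uniform eigenvalue control.
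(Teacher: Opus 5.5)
Your proposal is correct and follows essentially the same route as the paper's proof. You bound the kernel term via Lemma~\ref{lem:kappa-basic} and handle the gradient term with the change of variables \(z=G_h^{1/2}(u-m_h)\), which turns \(\|G_h(u-m_h)\|_2^2\) into \(z^\top G_h z\) and yields \(\operatorname{tr}G_h\) by rotational symmetry; the eigenvalue bounds \(\det G_h^{1/2}\lesssim h^{-d_\star}\) and \(\operatorname{tr}G_h\lesssim h^{-2}\) for \(h\le1\) then give the final rate.
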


Theorem~\ref{thm:dss-covariance-concentration} motivates an admissible DSS
score-concentration exponent \(D_{\rm score}\ge0\) satisfying
\(H_h^s(\mathcal Y_0)\lesssim h^{-D_{\rm score}}\) for all sufficiently small
\(h\). Under its active-covariance condition, one may take
\(D_{\rm score}=d_\star+2\), whereas the corresponding DIS bound allows
\(D_H=d_\star\). The additional two powers of \(h^{-1}\) reflect the derivative
cost of score estimation, not a larger structural dimension.
Appendix~\ref{appsec:intrinsic-score-learning} gives regimes in which
\(\rho_{\theta,n_{\rm geom}}\) likewise depends on intrinsic rather than ambient
complexity.

When \(\Pb_a\) admits an ambient density \(p_a\) with score
\(s_a=\nabla\log p_a\), the fixed-\(h\) DSS risk bound extends to \(s_a\) under
the score-level counterpart of the DIS smoothing-bias condition,
\(A_{s,h}^2(\mathcal Y_0)
\equiv
\int_{\mathcal Y_0}
\|s^{\mathrm{geo}}_{a,h}(y)-s_a(y)\|_2^2dy
\le
C'_A h^{2\beta_s}\). If the nuisance and geometry-drift terms are negligible
uniformly along \(h=h_n\to0\), balancing variance and smoothing bias gives
\(h_n\asymp n^{-1/(2\beta_s+D_{\rm score})}\) and
\(\int_{\mathcal Y_0}
\|\widehat s^{\mathrm{geo}}_{a,h_n}(y)-s_a(y)\|_2^2dy
=
O_\Pb\{n^{-2\beta_s/(2\beta_s+D_{\rm score})}\}\).
Full details are given in Appendix~\ref{appsec:dss-ambient-score}.

The choice between DIS and DSS depends on the target and its structural
representation. DIS is natural for density recovery and density-based functionals, whereas DSS
directly targets score- and Stein-based functionals and supports
gradient-driven sampling, such as Langevin or reverse-diffusion updates,
without fitting a separately normalized generative density model. For
score-based downstream tasks, DSS can be statistically advantageous when the
reduction in intrinsic score complexity offsets the additional derivative cost.
Ignoring nuisance and geometry-learning errors, its optimized \(L_2\)
convergence exponent exceeds that of DIS when $D_{\rm score}<(\beta_s/\beta)D_H$. Thus DSS benefits most when the score admits substantially lower-dimensional
structure than the density, particularly when \(\beta_s\le\beta\). Because the
two rates concern score and density losses, respectively, this comparison is
target specific rather than a universal ranking of the two methods.

\section{Statistical Inference}
\label{sec:inference}

Our primary inferential targets are the population-geometry smoothed law
\(p^{\mathrm{geo}}_{a,h}=p^{\mathrm{geo}}_{a,h,\theta_0}\) and analogous
fixed-\(h\) score-based functionals. We first establish multiplier validity
conditional on geometry fixed relative to the analysis sample, and then
transfer coverage to population-geometry targets through a uniform drift
envelope.

Let \(\theta^\dagger\) denote either the population geometry \(\theta_0\), a
fixed pretrained geometry, or a geometry learned from an auxiliary sample
independent of the analysis sample; in the conditional results below,
\(\theta^\dagger\) is treated as fixed relative to the analysis sample.
Write
\(\widehat\eta_{\theta^\dagger}
=
\{\widehat\pi_a,\widehat\mu_{a,h,\theta^\dagger},
\kappa_{h,\theta^\dagger}\}\), and define
\(\widehat p^{\mathrm{geo}}_{a,h,\theta^\dagger}(y)
=
\Pn\{\varphi_h^{\mathrm{geo}}(Z;y,\widehat\eta_{\theta^\dagger})\}\)
and
\(\widehat\phi^{\mathrm{geo}}_{h,i}(y;\theta^\dagger)
=
\varphi_h^{\mathrm{geo}}(Z_i;y,\widehat\eta_{\theta^\dagger})
-
\widehat p^{\mathrm{geo}}_{a,h,\theta^\dagger}(y)\), with all estimated
quantities interpreted foldwise under cross fitting.

\begin{theorem}
\label{thm:dis-band}
Let \(\mathcal Y_0\subset\mathbb R^d\) be compact. Suppose, uniformly over
\(y\in\mathcal Y_0\),
\[
\widehat p^{\mathrm{geo}}_{a,h,\theta^\dagger}(y)
-
p^{\mathrm{geo}}_{a,h,\theta^\dagger}(y)
=
(\Pn-\Pb)\phi^{\mathrm{geo}}_h(Z;y,\eta_{\theta^\dagger})
+
o_\Pb(n^{-1/2}).
\]
Let
\(\sigma_{\theta^\dagger}^2(y)
=
\var\{\phi^{\mathrm{geo}}_h(Z;y,\eta_{\theta^\dagger})
\mid\theta^\dagger\}\), and suppose the studentized multiplier process
consistently approximates the law of
\[
\left\{
\frac{\sqrt n(\Pn-\Pb)
\phi^{\mathrm{geo}}_h(Z;y,\eta_{\theta^\dagger})}
{\sigma_{\theta^\dagger}(y)}
:
y\in\mathcal Y_0
\right\}
\]
in \(\ell^\infty(\mathcal Y_0)\), conditionally on \(\theta^\dagger\) when
random. Suppose also that the supremum law is continuous at its
\((1-\alpha)\)-quantile and, for some \(c_\sigma>0\),
\(\Pb\{\inf_{y\in\mathcal Y_0}\sigma_{\theta^\dagger}(y)\ge c_\sigma\}\to1\)
and
\(\sup_{y\in\mathcal Y_0}
|\widehat\sigma_{\theta^\dagger}(y)-\sigma_{\theta^\dagger}(y)|
=o_\Pb(1)\), with
\(\widehat\sigma_{\theta^\dagger}^2(y)
=
n^{-1}\sum_{i=1}^n
\widehat\phi^{\mathrm{geo}}_{h,i}(y;\theta^\dagger)^2\). Define
\[
\widehat{\mathbb Z}_n(y)
=
n^{-1/2}\sum_{i=1}^n
\xi_i\widehat\phi^{\mathrm{geo}}_{h,i}(y;\theta^\dagger),
\]
and let \(\widehat c_{1-\alpha}\) be the conditional \((1-\alpha)\)-quantile of
\(\sup_{y\in\mathcal Y_0}
|\widehat{\mathbb Z}_n(y)|/\widehat\sigma_{\theta^\dagger}(y)\). Then
\[
\widehat p^{\mathrm{geo}}_{a,h,\theta^\dagger}(y)
\pm
\widehat c_{1-\alpha}
\widehat\sigma_{\theta^\dagger}(y)/\sqrt n,
\quad y\in\mathcal Y_0,
\]
has asymptotic simultaneous coverage \(1-\alpha\) for
\(\{p^{\mathrm{geo}}_{a,h,\theta^\dagger}(y):y\in\mathcal Y_0\}\). When
\(\theta^\dagger\) is random, coverage holds conditionally in probability and
hence also unconditionally.
\end{theorem}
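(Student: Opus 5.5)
The proof follows the standard argument for multiplier-bootstrap confidence bands, carried out conditionally on \(\theta^\dagger\). Throughout, we condition on \(\theta^\dagger\) (and, under cross fitting, on the training folds). This is legitimate because \(\theta^\dagger\) is fixed relative to the analysis sample, so \(\eta_{\theta^\dagger}\) is a deterministic nuisance collection and the assumed expansion applies directly.

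\textbf{Step 1: reduce the sup statistic to the oracle process.} Define the studentized estimation process
\[
T_n(y)=\sqrt n\{\widehat p^{\mathrm{geo}}_{a,h,\theta^\dagger}(y)-p^{\mathrm{geo}}_{a,h,\theta^\dagger}(y)\}/\widehat\sigma_{\theta^\dagger}(y).
\]
Let \(\mathbb G_n(y)=\sqrt n(\Pn-\Pb)\phi^{\mathrm{geo}}_h(Z;y,\eta_{\theta^\dagger})\) denote the oracle empirical process. By the uniform expansion,
\[
\sup_{y}\bigl|\sqrt n\{\widehat p^{\mathrm{geo}}_{a,h,\theta^\dagger}(y)-p^{\mathrm{geo}}_{a,h,\theta^\dagger}(y)\}-\mathbb G_n(y)\bigr|=o_\Pb(1).
\]
Three further facts hold: \(\inf_y\sigma_{\theta^\dagger}\ge c_\sigma\) with probability tending to one, \(\sup_y|\widehat\sigma_{\theta^\dagger}-\sigma_{\theta^\dagger}|=o_\Pb(1)\), and \(\sup_y|\mathbb G_n(y)|/\sigma_{\theta^\dagger}(y)=O_\Pb(1)\). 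The last follows from weak convergence in \(\ell^\infty(\mathcal Y_0)\), which the consistency of the multiplier approximation implicitly requires. Combining these gives
\[
\sup_y|T_n(y)|=\sup_y|\mathbb G_n(y)|/\sigma_{\theta^\dagger}(y)+o_\Pb(1).
\]
The difference in the denominators is handled by the elementary bound \(|a/b-a/c|\le|a||b-c|/(bc)\).

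\textbf{Step 2: quantile consistency.} Let \(c_{1-\alpha}\) be the \((1-\alpha)\)-quantile of the limiting supremum law. The hypothesis that the studentized multiplier process approximates the oracle process means that the conditional law of \(\sup_y|\widehat{\mathbb Z}_n|/\widehat\sigma_{\theta^\dagger}\) converges, in bounded-Lipschitz distance and in probability, to the law of \(\sup_y|\mathbb G|/\sigma_{\theta^\dagger}\). Continuity of the supremum law at \(c_{1-\alpha}\) then gives \(\widehat c_{1-\alpha}\to c_{1-\alpha}\) in probability, by the standard lemma that weak convergence implies convergence of quantiles at continuity points.

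I expect the main obstacle here. The hypothesis is stated in terms of a studentized multiplier process, but \(\widehat{\mathbb Z}_n\) is built from estimated \(\widehat\phi\) rather than the oracle \(\phi\). I will read the hypothesis as covering exactly \(\widehat{\mathbb Z}_n/\widehat\sigma_{\theta^\dagger}\). If instead it refers to oracle multipliers, I will bound the difference \(n^{-1/2}\sum\xi_i(\widehat\phi_i-\phi_i)\) conditionally on the data. Its conditional variance is \(\Pn(\widehat\phi-\phi)^2\), which is \(o_\Pb(1)\) uniformly in \(y\) by the consistency of \(\widehat\sigma\) together with nuisance consistency. A Gaussian maximal inequality, or a chaining bound over compact \(\mathcal Y_0\), then shows the difference is \(o_\Pb(1)\) in sup norm.

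\textbf{Step 3: coverage.} The band covers \(p^{\mathrm{geo}}_{a,h,\theta^\dagger}\) on all of \(\mathcal Y_0\) exactly when \(\sup_y|T_n(y)|\le\widehat c_{1-\alpha}\). By Steps 1 and 2 and Slutsky's lemma, this supremum and quantile are within \(o_\Pb(1)\) of \(\sup_y|\mathbb G_n|/\sigma_{\theta^\dagger}\) and \(c_{1-\alpha}\), respectively. Continuity of the limit law at \(c_{1-\alpha}\) then gives a coverage probability tending to \(1-\alpha\) conditionally on \(\theta^\dagger\), in probability. Unconditional coverage follows by dominated convergence, since the conditional coverage probabilities are bounded by one.
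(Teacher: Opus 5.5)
Your proposal is correct and follows essentially the same route as the paper's proof. The paper also reduces the studentized supremum $T_n$ to $\sup_{y}|\mathbb G_n(y)|/\sigma_{\theta^\dagger}(y)$ using the uniform linearization, variance consistency and nondegeneracy, and tightness of $\mathbb G_n$; it then applies the multiplier hypothesis, read as covering $\widehat{\mathbb Z}_n/\widehat\sigma_{\theta^\dagger}$ directly, together with quantile continuity, and obtains unconditional coverage by boundedness.

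The only difference is that you make the quantile step explicit through $\widehat c_{1-\alpha}\to c_{1-\alpha}$, whereas the paper cites standard multiplier-bootstrap results. Strictly, quantile convergence needs the limiting distribution function to be strictly increasing at $c_{1-\alpha}$, not merely continuous there; this holds automatically when the limit is a Gaussian supremum.
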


Theorem~\ref{thm:dis-band} establishes simultaneous coverage for the
\(\theta^\dagger\)-indexed target. A sufficient route is a uniform nuisance
product remainder of order \(o_\Pb(n^{-1/2})\), stochastic equicontinuity of
the influence-function class over \(\mathcal Y_0\), uniformly consistent and
nondegenerate variance estimation, and valid Gaussian or multiplier
approximation
\citep[e.g.,][]{kennedy2016semiparametric,van1996weak,chernozhukov2016empirical}.
For a fixed or slowly growing evaluation grid, these reduce to standard
high-dimensional Gaussian approximation conditions. The following corollary
transfers the fitted-geometry band to the population geometry using a uniform
drift envelope.

\begin{corollary}
\label{cor:population-geometry-band}
Let the conditions of Theorem~\ref{thm:dis-band} hold with
\(\theta^\dagger=\widehat\theta\), where \(\widehat\theta\) is learned from an
independent auxiliary sample. Suppose an auxiliary-sample-measurable envelope
\(\Delta_{\rm geom,n}\ge0\) satisfies
\(\sup_{y\in\mathcal Y_0}|p^{\mathrm{geo}}_{a,h,\widehat\theta}(y)
-p^{\mathrm{geo}}_{a,h,\theta_0}(y)|
\le\Delta_{\rm geom,n}\) with probability tending to one. Then
\[
\widehat p^{\mathrm{geo}}_{a,h}(y)
\pm
\left\{
\widehat c_{1-\alpha}\widehat\sigma_{\widehat\theta}(y)/\sqrt n
+
\Delta_{\rm geom,n}
\right\},
\quad y\in\mathcal Y_0,
\]
has asymptotic simultaneous coverage at least \(1-\alpha\) for
\(\{p^{\mathrm{geo}}_{a,h,\theta_0}(y):y\in\mathcal Y_0\}\). If \(\Delta_{\rm geom,n}=o_\Pb(n^{-1/2})\), the standard multiplier band
without geometry-drift inflation is valid for the population-geometry target.
\end{corollary}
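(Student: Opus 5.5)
The plan is a triangle-inequality argument on the event where the drift envelope holds, combined with Theorem~\ref{thm:dis-band} applied at \(\theta^\dagger=\widehat\theta\). Because \(\widehat\theta\) is learned from an independent auxiliary sample, it is fixed relative to the analysis sample. Theorem~\ref{thm:dis-band} therefore gives, conditionally on \(\widehat\theta\) in probability and hence unconditionally,
\[
\Pb\Bigl\{\sup_{y\in\mathcal Y_0}\frac{\sqrt n\,|\widehat p^{\mathrm{geo}}_{a,h,\widehat\theta}(y)-p^{\mathrm{geo}}_{a,h,\widehat\theta}(y)|}{\widehat\sigma_{\widehat\theta}(y)}\le\widehat c_{1-\alpha}\Bigr\}\to1-\alpha .
\]
Here \(\widehat p^{\mathrm{geo}}_{a,h}\) in the corollary is identified with \(\widehat p^{\mathrm{geo}}_{a,h,\widehat\theta}\), since the estimator is built from the auxiliary geometry.

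Let \(\mathcal E_n\) denote the event \(\{\sup_{y\in\mathcal Y_0}|p^{\mathrm{geo}}_{a,h,\widehat\theta}(y)-p^{\mathrm{geo}}_{a,h,\theta_0}(y)|\le\Delta_{\rm geom,n}\}\), and let \(\mathcal C_n\) be the event displayed above. On \(\mathcal C_n\cap\mathcal E_n\), for every \(y\in\mathcal Y_0\),
\[
|\widehat p^{\mathrm{geo}}_{a,h}(y)-p^{\mathrm{geo}}_{a,h,\theta_0}(y)|
\le \widehat c_{1-\alpha}\widehat\sigma_{\widehat\theta}(y)/\sqrt n+\Delta_{\rm geom,n}.
\]
The inflated band thus covers the population-geometry target simultaneously over \(\mathcal Y_0\). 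Since \(\Pb(\mathcal C_n\cap\mathcal E_n)\ge\Pb(\mathcal C_n)-\Pb(\mathcal E_n^c)\) and \(\Pb(\mathcal E_n^c)\to0\), we get \(\liminf\Pb(\text{coverage})\ge1-\alpha\). The envelope is auxiliary-sample measurable, so it is deterministic given \(\widehat\theta\). This lets the conditional statement of Theorem~\ref{thm:dis-band} combine with it without any extra dependence issues.

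For the last claim, suppose \(\Delta_{\rm geom,n}=o_\Pb(n^{-1/2})\). Write the studentized statistic for \(\theta_0\) as the \(\widehat\theta\)-statistic plus a perturbation bounded by \(\sqrt n\,\Delta_{\rm geom,n}/\inf_y\widehat\sigma_{\widehat\theta}(y)\) on \(\mathcal E_n\). The denominator is bounded below by \(c_\sigma/2\) with probability tending to one, by the variance lower bound and uniform consistency of \(\widehat\sigma\). The perturbation is therefore \(o_\Pb(1)\).

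The main obstacle is passing this \(o_\Pb(1)\) perturbation of the supremum statistic to coverage without inflation. This uses the continuity of the limiting supremum law at its \((1-\alpha)\)-quantile, which is assumed in Theorem~\ref{thm:dis-band}. A standard Slutsky/anti-concentration step then shows that \(\Pb\{\sup\le\widehat c_{1-\alpha}\}\) is unchanged asymptotically, because \(\widehat c_{1-\alpha}\) converges to that continuity point. Hence the uninflated band is asymptotically valid, indeed exact at level \(1-\alpha\).
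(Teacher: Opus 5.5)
Your proposal is correct and follows essentially the same route as the paper. For the inflated band, you apply the triangle inequality on the intersection of the fitted-geometry coverage event with the envelope event and use $\Pb(\mathcal C_n\cap\mathcal E_n)\ge\Pb(\mathcal C_n)-\Pb(\mathcal E_n^c)$. For the uninflated band, you show the fitted- and population-geometry studentized suprema differ by at most $\sqrt n\,\Delta_{\rm geom,n}/\inf_{y}\widehat\sigma_{\widehat\theta}(y)=o_\Pb(1)$ and then invoke the multiplier and quantile-continuity conditions of Theorem~\ref{thm:dis-band}.
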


For example, if, with probability tending to one, $\sup_{y\in\mathcal Y_0}|B_{\theta,h}(y)|
\le
C_{\rm geom}
h^{-\lambda_{\rm geom,\infty}(\mathcal Y_0)}
\rho_{\theta,n_{\rm geom}}$, then one may take
\(\Delta_{\rm geom,n}
=
C_{\rm geom}
h^{-\lambda_{\rm geom,\infty}(\mathcal Y_0)}
\rho_{\theta,n_{\rm geom}}\).
A uniform version of the interpolation argument in
Lemma~\ref{lem:flow-kernel-stability} of
Appendix~\ref{appsec:geometry-drift} gives sufficient flow-based conditions for
this envelope. Thus a sufficient condition for validity of the population-geometry band
without geometry-drift inflation is \(h^{-\lambda_{\rm geom,\infty}(\mathcal Y_0)}\rho_{\theta,n_{\rm geom}}=o_\Pb(n^{-1/2})\). For fixed \(h\), if
\(\rho_{\theta,n_{\rm geom}}=O_\Pb(n_{\rm geom}^{-1/2})\), this entails
\(n_{\rm geom}\gg n\). This restriction concerns only population-geometry
inference without geometry-drift inflation, not consistency or the preceding
risk bounds. Without a valid drift envelope, the standard multiplier band
should instead be interpreted as inference for the fitted-geometry target.

The same multiplier construction applies to Stein functionals of the DSS
score. For a probability measure \(\nu\) on \(\mathcal Y_0\) and a class
\(\mathcal G\) of smooth vector fields, define
\[
\begin{aligned}
\Psi_{a,h,\theta^\dagger}(g)
&=\int_{\mathcal Y_0}
\{\nabla_y\!\cdot g(y)+g(y)^\top s_{a,h,\theta^\dagger}(y)\}\,d\nu(y),\\
\widehat\Psi_{a,h,\theta^\dagger}(g)
&=\int_{\mathcal Y_0}
\{\nabla_y\!\cdot g(y)+g(y)^\top
\widehat s_{a,h,\theta^\dagger}(y)\}\,d\nu(y),
\end{aligned}
\]
where
\(\widehat s_{a,h,\theta^\dagger}
=\widehat G_{a,h,\theta^\dagger}/\widehat P_{a,h,\theta^\dagger}\).
Its influence function is
\(\phi_{\Psi,\theta^\dagger}(Z;g,\eta^s_{\theta^\dagger})
=\int_{\mathcal Y_0}g(y)^\top
\phi_s(Z;y,\eta^s_{\theta^\dagger})\,d\nu(y)\).
Let \(\widehat\phi_{s,i}(y;\theta^\dagger)\) denote the foldwise plug-in,
empirically centered version of \(\phi_s(Z_i;y,\eta^s_{\theta^\dagger})\), and set $\widehat\phi_{\Psi,\theta^\dagger,i}(g)
=
\int_{\mathcal Y_0}
g(y)^\top\widehat\phi_{s,i}(y;\theta^\dagger)\,d\nu(y)$. The next result is the DSS counterpart of
Theorem~\ref{thm:dis-band}.

\begin{theorem}
\label{thm:dss-stein-band}
Suppose, uniformly over \(g\in\mathcal G\),
\[
\widehat\Psi_{a,h,\theta^\dagger}(g)
-
\Psi_{a,h,\theta^\dagger}(g)
=
(\Pn-\Pb)
\phi_{\Psi,\theta^\dagger}(Z;g,\eta^s_{\theta^\dagger})
+
o_\Pb(n^{-1/2}).
\]
Suppose the multiplier approximation, quantile-continuity, and uniform
variance conditions of Theorem~\ref{thm:dis-band} hold with
\(\mathcal Y_0\), \(y\), and
\(\phi_h^{\mathrm{geo}}\) replaced by
\(\mathcal G\), \(g\), and \(\phi_{\Psi,\theta^\dagger}\), respectively.
Let
\(\widehat{\mathbb Z}_n^\Psi(g)
=
n^{-1/2}\sum_{i=1}^n
\xi_i\widehat\phi_{\Psi,\theta^\dagger,i}(g)\) and
\(\widehat\sigma_{\theta^\dagger}^2(g)
=
n^{-1}\sum_{i=1}^n
\widehat\phi_{\Psi,\theta^\dagger,i}(g)^2\),
and let \(\widehat c'_{1-\alpha}\) be the conditional \((1-\alpha)\)-quantile
of
\(\sup_{g\in\mathcal G}
|\widehat{\mathbb Z}_n^\Psi(g)|/\widehat\sigma_{\theta^\dagger}(g)\). Then
\[
\widehat\Psi_{a,h,\theta^\dagger}(g)
\pm
\widehat c'_{1-\alpha}
\widehat\sigma_{\theta^\dagger}(g)/\sqrt n,
\quad g\in\mathcal G,
\]
has asymptotic simultaneous coverage \(1-\alpha\) for
\(\{\Psi_{a,h,\theta^\dagger}(g):g\in\mathcal G\}\), conditionally in probability when \(\theta^\dagger\) is random, and hence also unconditionally.
\end{theorem}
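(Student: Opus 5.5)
The plan is to copy the argument for Theorem~\ref{thm:dis-band}, now indexed by \(g\in\mathcal G\) instead of \(y\in\mathcal Y_0\). Write \(\mathbb G_n(g)=\sqrt n(\Pn-\Pb)\phi_{\Psi,\theta^\dagger}(Z;g,\eta^s_{\theta^\dagger})\). Throughout I work conditionally on \(\theta^\dagger\), which is fixed relative to the analysis sample, and I read all estimated quantities foldwise. The assumed uniform expansion gives
\[
\sup_{g\in\mathcal G}\left|\frac{\sqrt n\{\widehat\Psi_{a,h,\theta^\dagger}(g)-\Psi_{a,h,\theta^\dagger}(g)\}}{\widehat\sigma_{\theta^\dagger}(g)}-\frac{\mathbb G_n(g)}{\sigma_{\theta^\dagger}(g)}\right|=o_\Pb(1).
\]
To see this, split the difference into two pieces. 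The first is the \(o_\Pb(1)\) remainder divided by \(\widehat\sigma_{\theta^\dagger}(g)\), which is bounded below by \(c_\sigma/2\) with probability tending to one, by the nondegeneracy and uniform variance consistency conditions. The second is \(\mathbb G_n(g)\{\sigma_{\theta^\dagger}(g)/\widehat\sigma_{\theta^\dagger}(g)-1\}/\sigma_{\theta^\dagger}(g)\). Here \(\sup_g|\mathbb G_n(g)|/\sigma_{\theta^\dagger}(g)=O_\Pb(1)\), because the studentized process converges weakly in \(\ell^\infty(\mathcal G)\) and so is tight. The ratio factor is \(o_\Pb(1)\) uniformly in \(g\).

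Next I control the bootstrap quantile. By hypothesis, the conditional law of \(\sup_g|\widehat{\mathbb Z}^\Psi_n(g)|/\widehat\sigma_{\theta^\dagger}(g)\) given the data (and \(\theta^\dagger\)) consistently approximates the law of \(T=\sup_g|\mathbb Z(g)|/\sigma_{\theta^\dagger}(g)\), the supremum of the limiting studentized process. Since the law of \(T\) is continuous at its \((1-\alpha)\)-quantile \(c_{1-\alpha}\), convergence of distribution functions at continuity points gives \(\widehat c'_{1-\alpha}\to c_{1-\alpha}\) in probability.

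Then I combine the two steps. The band fails to cover some \(\Psi_{a,h,\theta^\dagger}(g)\) exactly when
\[
\sup_g\frac{\sqrt n|\widehat\Psi_{a,h,\theta^\dagger}(g)-\Psi_{a,h,\theta^\dagger}(g)|}{\widehat\sigma_{\theta^\dagger}(g)}>\widehat c'_{1-\alpha}.
\]
By the first step, the left side equals \(\sup_g|\mathbb G_n(g)|/\sigma_{\theta^\dagger}(g)+o_\Pb(1)\), which converges in law to \(T\). By the second step, \(\widehat c'_{1-\alpha}=c_{1-\alpha}+o_\Pb(1)\). A Slutsky argument, using continuity of the law of \(T\) at \(c_{1-\alpha}\), then gives noncoverage probability \(\to\alpha\).

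When \(\theta^\dagger\) is random, the whole argument runs conditionally on it, so conditional coverage tends to \(1-\alpha\) in probability. Unconditional coverage then follows by dominated convergence, since the conditional probabilities are bounded by \(1\).

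The main obstacle is the tightness and uniformity over \(\mathcal G\): the multiplier process is built from estimated, empirically centered influence values \(\widehat\phi_{\Psi,\theta^\dagger,i}\), not the true ones. I will not verify this directly. It is exactly what the assumed multiplier approximation covers, and it is the reason the studentized conditions must hold uniformly. Checking it in practice would require stochastic equicontinuity of \(\{\phi_{\Psi,\theta^\dagger}(\cdot;g):g\in\mathcal G\}\), for example via a finite-entropy condition on \(\mathcal G\) together with the score regularity in Assumption~\ref{assumption:score-regularity}.
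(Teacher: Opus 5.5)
Your proposal is correct and follows essentially the same route as the paper: you reduce the studentized supremum to $\sup_g|\mathbb G_n(g)|/\sigma_{\theta^\dagger}(g)+o_\Pb(1)$ using uniform linearity, tightness of the process, and variance consistency with nondegeneracy; you then invoke the multiplier approximation with quantile continuity, and pass from conditional to unconditional coverage by boundedness. Your explicit quantile-convergence and Slutsky step only spells out what the paper cites as standard multiplier-bootstrap results; note that concluding $\widehat c'_{1-\alpha}\to c_{1-\alpha}$ tacitly uses uniqueness of the $(1-\alpha)$-quantile of the limiting supremum, which the paper's continuity hypothesis likewise leaves implicit.
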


As with DIS, population-geometry coverage follows by controlling the
geometry-induced functional drift.

\begin{corollary}
\label{cor:population-geometry-stein-band}
Let the conditions of Theorem~\ref{thm:dss-stein-band} hold with
\(\theta^\dagger=\widehat\theta\), where \(\widehat\theta\) is learned from an
independent auxiliary sample. Suppose an auxiliary-sample-measurable envelope
\(\Delta_{\Psi,n}\ge0\) satisfies
\(\sup_{g\in\mathcal G}|\Psi_{a,h,\widehat\theta}(g)
-\Psi_{a,h,\theta_0}(g)|\le\Delta_{\Psi,n}\) with probability tending to one. Then
\[
\widehat\Psi_{a,h,\widehat\theta}(g)
\pm
\left\{
\widehat c'_{1-\alpha}\widehat\sigma_{\widehat\theta}(g)/\sqrt n
+
\Delta_{\Psi,n}
\right\},
\quad g\in\mathcal G,
\]
has asymptotic simultaneous coverage at least \(1-\alpha\) for
\(\{\Psi_{a,h,\theta_0}(g):g\in\mathcal G\}\). If \(\Delta_{\Psi,n}=o_\Pb(n^{-1/2})\), the standard multiplier band without geometry-drift inflation is valid for the population-geometry functional.
\end{corollary}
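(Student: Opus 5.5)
The proof is a deterministic sandwich on an event of probability tending to one, followed by an appeal to Theorem~\ref{thm:dss-stein-band} with \(\theta^\dagger=\widehat\theta\). The target estimator is \(\widehat\Psi_{a,h,\widehat\theta}\), and the band is centered there. Fix \(\alpha\) and define three events:
\[
\mathcal E_n=\Bigl\{\sup_{g\in\mathcal G}|\Psi_{a,h,\widehat\theta}(g)-\Psi_{a,h,\theta_0}(g)|\le\Delta_{\Psi,n}\Bigr\},
\]
\(\mathcal C_n\), the event that the band without inflation, \(\widehat\Psi_{a,h,\widehat\theta}(g)\pm\widehat c'_{1-\alpha}\widehat\sigma_{\widehat\theta}(g)/\sqrt n\), covers \(\Psi_{a,h,\widehat\theta}(g)\) simultaneously over \(g\in\mathcal G\), and \(\mathcal C_n^{\rm inf}\), the event that the inflated band covers \(\Psi_{a,h,\theta_0}(g)\) simultaneously.

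\textbf{Step 1 (inclusion).} I first show \(\mathcal C_n\cap\mathcal E_n\subseteq\mathcal C_n^{\rm inf}\). On this intersection, the triangle inequality gives, for each \(g\),
\[
|\widehat\Psi_{a,h,\widehat\theta}(g)-\Psi_{a,h,\theta_0}(g)|\le \widehat c'_{1-\alpha}\widehat\sigma_{\widehat\theta}(g)/\sqrt n+\Delta_{\Psi,n}.
\]
Because \(\Delta_{\Psi,n}\) does not depend on \(g\), this bound holds uniformly over \(\mathcal G\).

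\textbf{Step 2 (coverage bound).} From Step 1,
\[
\Pb(\mathcal C_n^{\rm inf})\ge \Pb(\mathcal C_n)-\Pb(\mathcal E_n^c).
\]
Theorem~\ref{thm:dss-stein-band} gives \(\Pb(\mathcal C_n)\to1-\alpha\) unconditionally. The hypothesis on the envelope gives \(\Pb(\mathcal E_n^c)\to0\). Taking the liminf yields coverage at least \(1-\alpha\).

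For the conditional version, I would argue conditionally on the auxiliary sample. There \(\Delta_{\Psi,n}\) and \(\widehat\theta\) are fixed. Theorem~\ref{thm:dss-stein-band} gives conditional coverage tending to \(1-\alpha\) in probability, and \(\mathcal E_n\) is auxiliary-measurable. Its indicator therefore tends to one in probability, so the conditional coverage of the inflated band is at least \(1-\alpha-o_\Pb(1)\).

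\textbf{Step 3 (no inflation when \(\Delta_{\Psi,n}=o_\Pb(n^{-1/2})\)).} Write
\[
\sqrt n\{\widehat\Psi_{a,h,\widehat\theta}(g)-\Psi_{a,h,\theta_0}(g)\}/\widehat\sigma_{\widehat\theta}(g)
=T_n(g)+r_n(g),
\]
where \(T_n\) is the studentized process for the fitted-geometry target. On \(\mathcal E_n\), the remainder satisfies
\[
\sup_g|r_n(g)|\le\sqrt n\,\Delta_{\Psi,n}/\inf_g\widehat\sigma_{\widehat\theta}(g).
\]
The denominator is bounded below with probability tending to one, by the nondegeneracy and uniform variance consistency conditions inherited from Theorem~\ref{thm:dis-band}. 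Hence \(\sup_g|r_n|=o_\Pb(1)\).

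It follows that \(\sup_g|T_n+r_n|\) and \(\sup_g|T_n|\) differ by \(o_\Pb(1)\). By the multiplier approximation and continuity of the supremum law at its \((1-\alpha)\)-quantile, \(\Pb\{\sup_g|T_n+r_n|\le\widehat c'_{1-\alpha}\}\to1-\alpha\). I would use the standard \(\pm\epsilon\) sandwich: bound the probability between \(\Pb(\sup|T_n|\le \widehat c'_{1-\alpha}\mp\epsilon)\), then let \(\epsilon\downarrow0\) using quantile continuity.

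The main obstacle is this last step. It needs the continuity of the limiting supremum distribution near its quantile, rather than at the quantile alone, together with consistency of \(\widehat c'_{1-\alpha}\). I would extract both from the conditions assumed in Theorem~\ref{thm:dss-stein-band} and its proof. If that fails, I would fall back on anti-concentration for Gaussian suprema \citep{chernozhukov2016empirical}.
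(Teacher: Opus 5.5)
Your proposal is correct and follows the paper's own argument. The paper uses the same triangle-inequality inclusion on the intersection of the fitted-geometry coverage event and the envelope event, then invokes Theorem~\ref{thm:dss-stein-band}; for the uninflated band it likewise observes that the studentized suprema for the fitted- and population-geometry functionals differ by \(o_\Pb(1)\), absorbing \(r_n\) into the \(o_\Pb(1)\) remainder already present in that theorem's proof and rerunning its multiplier and quantile-continuity step rather than applying your \(\epsilon\)-sandwich to the theorem's conclusion.
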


If
\(\sup_{g\in\mathcal G}\int_{\mathcal Y_0}\|g(y)\|_2\,d\nu(y)
\le C_{\mathcal G}\) and
\(\sup_{y\in\mathcal Y_0}\|B_{s,\theta,h}(y)\|_2\le\Delta_{s,n}\), then $\sup_{g\in\mathcal G}
|\Psi_{a,h,\widehat\theta}(g)-\Psi_{a,h,\theta_0}(g)|
\le
C_{\mathcal G}\Delta_{s,n}$, so one may take
\(\Delta_{\Psi,n}=C_{\mathcal G}\Delta_{s,n}\). A uniform analogue of
Lemma~\ref{lem:flow-kernel-gradient-stability} in
Appendix~\ref{appsec:geometry-drift} gives sufficient flow-based conditions for
\(\Delta_{s,n}\).

Thus, for both DIS and DSS, under the fixed or independently learned geometry
regime above, the uninflated multiplier bands quantify empirical uncertainty
for the corresponding geometry-indexed targets, whereas coverage of the
population-geometry targets requires either asymptotically negligible geometry
drift or an explicit drift envelope. When a sufficiently regular
ambient density or score exists, the same construction extends to the
unsmoothed target under additional uniform control of the smoothing bias; we
omit this extension for brevity.

\section{Experiments}
\label{sec:experiments}

We conduct a semi-synthetic counterfactual experiment based on the CelebA
observational design of \citet{luedtke2025doublegen}. Each image has binary
treatment \(A\) indicating smiling, baseline confounders \(X\) given by CelebA
attributes, and outcome \(Y\) given by a fixed pretrained image embedding. To
stress-test ambient dimensionality, we construct
\(\widetilde Y_r\in\mathbb R^r\) for
\(r\in\{500,2500,10000\}\). Within each training fold, we estimate an
intervention-specific diffusion score by inverse-propensity-weighted denoising
score matching, solve the associated probability-flow ODE, and construct the
transported smoothing kernel. Geometry is learned in the \(r\)-dimensional
embedding space, while errors are evaluated on a fixed projection
\(Y_r^{\mathrm{proj}}\in\mathbb R^{r_{\mathrm{proj}}}\), with
\(r_{\mathrm{proj}}\in\{2,3\}\), over a compact region \(\mathcal Y_0\).
Geometry and causal nuisance functions are cross-fitted throughout.

We evaluate the estimators against independently constructed held-out smoothed
reference targets. Figure~\ref{fig:exp-density} compares density integrated
squared error (ISE) for an isotropic one-step estimator, a diffusion-guided
plug-in estimator, and DIS. The latter two use the same transported kernel, so
their comparison isolates the one-step correction. Each smoothing rule is
evaluated against its matched reference target; comparisons across smoothing
rules therefore concern relative error decay rather than a common estimand.
Figure~\ref{fig:exp-stein} compares mean squared error (MSE) of Stein
functionals over a fixed class \(\mathcal G\) for a treated-only plug-in estimator, DSS with an isotropic Gaussian kernel, and diffusion-guided DSS. The diffusion-guided estimators exhibit generally faster error decay relative to their matched targets. In the density experiment, DIS also improves on the
diffusion-guided plug-in estimator using the same kernel. Extended curves
across ambient dimensions and complete implementation details are provided in
Appendix~\ref{appsec:sim-details}.

\begin{figure}[t]
  \centering
  \begin{minipage}[t]{0.49\linewidth}
    \centering
    \includegraphics[width=\linewidth]{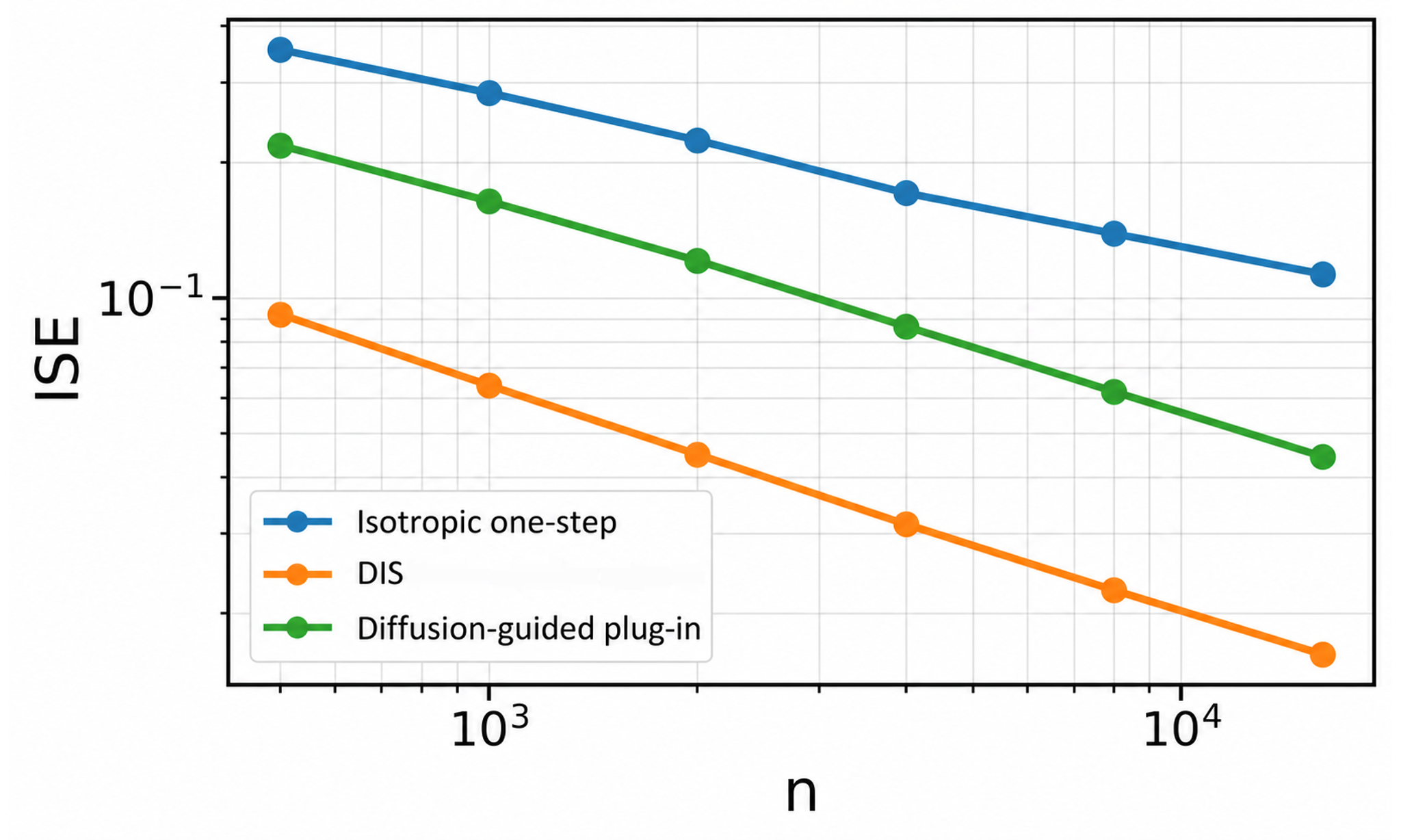}
    \caption{Density ISE versus sample size against matched targets.}
    \label{fig:exp-density}
  \end{minipage}\hfill
  \begin{minipage}[t]{0.49\linewidth}
    \centering
    \includegraphics[width=\linewidth]{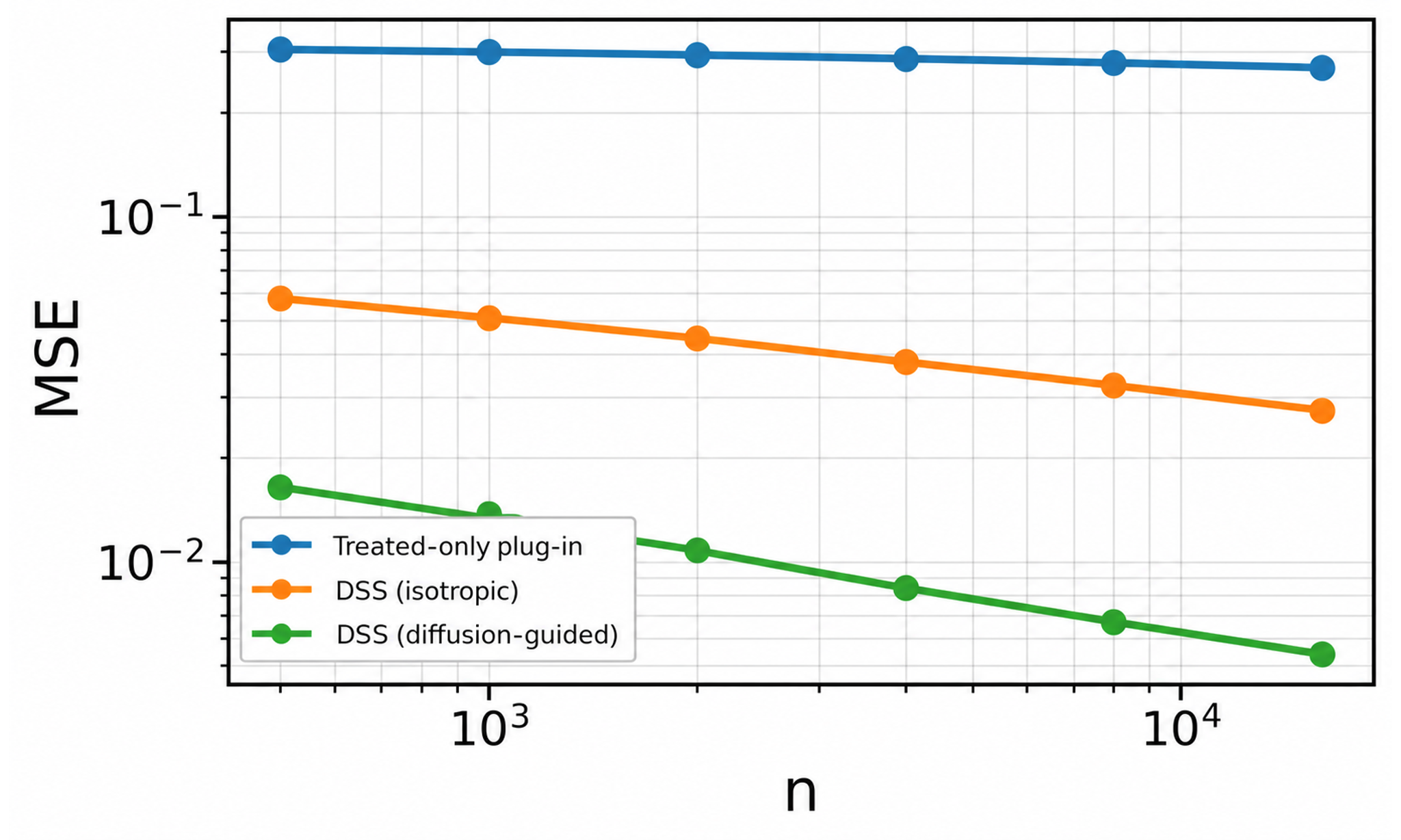}
    \caption{Stein-functional MSE versus sample size against matched targets.}
    \label{fig:exp-stein}
  \end{minipage}
\end{figure}

\section{Conclusion}
\label{sec:conclusion}

We introduce DIS and DSS, a unified framework combining diffusion-guided
geometry-adaptive smoothing with one-step estimation and simultaneous
inference for geometry-indexed targets. By
separating empirical uncertainty from geometry-learning drift, the framework
also provides drift-adjusted coverage for population-geometry targets and
links score and probability-flow errors to the transported smoothing operator. Important avenues
remain for future work. Although the theory treats geometry learning modularly
and provides explicit stability conditions, sharper primitive rates for
practical score-network training under confounding are still needed.
Computation may also be demanding for ultrahigh-dimensional outcomes, where
diffusion-geometry learning and transported-kernel evaluation are costly.
These challenges motivate end-to-end diffusion implementations, principled
data-adaptive selection of \(h\), and tuning procedures that preserve valid
inference under adaptive smoothing.

\vspace*{-.15in}
\section*{Disclosure Statement}
\vspace*{-.05in}
The authors declare no conflicts of interest.
\vspace*{-.15in}
\section*{Data Availability Statement}
\vspace*{-.05in}
The simulation code is available at \url{https://github.com/kwangho-joshua-kim/cf-diffusion}.

\bibliographystyle{agsm}
\bibliography{bibliography}

\newpage
\bigskip

\appendix
\newpage
\vspace*{.01in}
\section*{\Large \centerline{SUPPLEMENTARY MATERIALS}}
\begin{center}
   {\large Geometry Adaptive Counterfactual Distribution Learning\\ with Diffusion-Guided Smoothing}\\
   by Kwangho Kim
\end{center}
\vspace*{.1in}
\setcounter{equation}{0}
\renewcommand{\theequation}{A.\arabic{equation}}
\setcounter{figure}{0}
\renewcommand{\thefigure}{A.\arabic{figure}}
\setcounter{theorem}{0}
\renewcommand{\thetheorem}{A.\arabic{theorem}}

\section{Simulation Details}
\label{appsec:sim-details}

The experiment examines whether diffusion-guided smoothing benefits from
low-dimensional structure in a high-dimensional outcome representation and
whether one-step correction reduces error relative to plug-in estimation with
the same transported kernel.

\subsection{Data source and representation}

We follow the CelebA observational design used in the counterfactual faces
experiment of \citet{luedtke2025doublegen}. Each unit is an image with binary
treatment \(A\in\{0,1\}\), where \(A=1\) denotes smiling, and baseline
confounders \(X\) given by CelebA attributes. A fixed pretrained image encoder
maps each image to an outcome embedding \(Y\).

To vary ambient dimension, we apply fixed maps to obtain
\(\widetilde Y_r\in\mathbb R^r\), with
\(r\in\{500,2500,10000\}\). These maps are generated once and held fixed
across sample sizes, methods, and replications. Let
\[
Y_r^{\mathrm{proj}}
=
P_r\widetilde Y_r
\in\mathbb R^{r_{\mathrm{proj}}},
\quad
r_{\mathrm{proj}}\in\{2,3\},
\]
where \(P_r\) is also fixed across replications. Diffusion geometry is learned
from \(\widetilde Y_r\), whereas density and Stein-functional errors are
evaluated on a compact region
\(\mathcal Y_0\subset\mathbb R^{r_{\mathrm{proj}}}\).

\subsection{Cross-fitted diffusion geometry}

Let \(\mathcal I_1,\ldots,\mathcal I_K\) denote the evaluation folds. On the
complementary sample for fold \(k\), we first estimate
\(\widehat\pi_a^{(-k)}(x)\) and define normalized inverse-propensity weights
\[
\overline w_{i,a}^{(-k)}
=
\frac{
\mathbbm 1\{A_i=a\}/\widehat\pi_a^{(-k)}(X_i)
}{
\sum_{j\notin\mathcal I_k}
\mathbbm 1\{A_j=a\}/\widehat\pi_a^{(-k)}(X_j)
}.
\]
The intervention-specific diffusion score is learned by weighted denoising
score matching:
\[
\widehat\theta^{(-k)}
\in
\argmin_{\theta}
\sum_{i\notin\mathcal I_k}
\overline w_{i,a}^{(-k)}
\E_{t,Z_t\sim q_t(\cdot\mid\widetilde Y_{r,i})}
\left[
\left\|
s_{\theta}^{\rm diff}(Z_t,t)
-
\nabla_z\log q_t(Z_t\mid\widetilde Y_{r,i})
\right\|_2^2
\right].
\]
We then solve the probability-flow ODE
\[
\frac{d}{dt}z_t
=
b(z_t,t)
-
\frac{1}{2}\sigma(t)^2
s_{\widehat\theta^{(-k)}}^{\rm diff}(z_t,t)
\]
and construct the transported kernel and its gradient using the inverse flow
and Jacobian factor in \eqref{eqn:diff-informed-bump}. The learned flow induces
the projected smoothing kernel
\[
\kappa_{h,r,\widehat\theta^{(-k)}}^{\mathrm{proj}}
(y;\widetilde u),
\quad
\dot\kappa_{h,r,\widehat\theta^{(-k)}}^{\mathrm{proj}}
(y;\widetilde u)
=
\nabla_y
\kappa_{h,r,\widehat\theta^{(-k)}}^{\mathrm{proj}}
(y;\widetilde u).
\]
Here
\(\kappa_{h,r,\widehat\theta^{(-k)}}^{\mathrm{proj}}\) is the Monte Carlo
pushforward density of
\(P_r\Phi_{\varepsilon_h,\widehat\theta^{(-k)}}(Z_{\varepsilon_h})\), where
\(Z_{\varepsilon_h}\sim
q_{\varepsilon_h}(\cdot\mid\widetilde u)\), and its gradient is obtained by
differentiating the same representation. These quantities are used to evaluate
the DIS and DSS estimating functions on fold \(k\). Thus the
geometry-adaptive estimators use the learned diffusion score and probability
flow directly rather than a local-PCA proxy.

The encoder, representation maps, number of folds, score-network architecture,
diffusion schedule, optimizer, ODE solver and tolerances, numerical integration
scheme, bandwidth constant, replication count, and random seeds are fixed
across methods and reported in the accompanying reproducibility code.

\subsection{Held-out reference targets}

Because the true counterfactual law is unavailable, we split the observations
into a large reference pool \(\mathcal D_{\mathrm{ref}}\) and a working pool
from which samples \(\mathcal D_n\) are repeatedly drawn. The reference pool
is not used to fit the working-sample estimators.

On \(\mathcal D_{\mathrm{ref}}\), we estimate
\(\widehat\pi_a^{\mathrm{ref}}\) and independently train a diffusion score and
probability flow using the same weighted score-matching procedure. Let
\(\kappa_{h,r}^{\mathrm{ref}}\) and
\(\dot\kappa_{h,r}^{\mathrm{ref}}\) denote the resulting projected kernel and
gradient. The diffusion-guided density reference is
\[
\widehat p_{a,h,r}^{\mathrm{ref}}(y)
=
\frac{
\sum_{i\in\mathcal D_{\mathrm{ref}}}
\mathbbm 1\{A_i=a\}w_i^{\mathrm{ref}}
\kappa_{h,r}^{\mathrm{ref}}
(y;\widetilde Y_{r,i})
}{
\sum_{i\in\mathcal D_{\mathrm{ref}}}
\mathbbm 1\{A_i=a\}w_i^{\mathrm{ref}}
},
\quad
w_i^{\mathrm{ref}}
=
\{\widehat\pi_a^{\mathrm{ref}}(X_i)\}^{-1}.
\]
The corresponding score reference is
\[
\widehat s_{a,h,r}^{\mathrm{ref}}(y)
=
\frac{
\sum_{i\in\mathcal D_{\mathrm{ref}}}
\mathbbm 1\{A_i=a\}w_i^{\mathrm{ref}}
\dot\kappa_{h,r}^{\mathrm{ref}}
(y;\widetilde Y_{r,i})
}{
\sum_{i\in\mathcal D_{\mathrm{ref}}}
\mathbbm 1\{A_i=a\}w_i^{\mathrm{ref}}
\kappa_{h,r}^{\mathrm{ref}}
(y;\widetilde Y_{r,i})
}.
\]
The isotropic estimator is evaluated against the analogous reference formed
using its isotropic Gaussian kernel. DIS and the diffusion-guided plug-in
estimator share the same diffusion-guided reference target. Hence their
comparison isolates one-step correction, whereas isotropic-versus-diffusion
comparisons concern error decay relative to matched smoothing targets.

For Stein-functional evaluation, define the fixed held-out measure
\[
\widehat\nu_a^{\mathrm{ref}}
=
\frac{
\sum_{i\in\mathcal D_{\mathrm{ref}}}
\mathbbm 1\{A_i=a\}w_i^{\mathrm{ref}}
\delta_{Y_{r,i}^{\mathrm{proj}}}
}{
\sum_{i\in\mathcal D_{\mathrm{ref}}}
\mathbbm 1\{A_i=a\}w_i^{\mathrm{ref}}
}.
\]
Conditional on \(\mathcal D_{\mathrm{ref}}\), this measure and all reference
targets are treated as fixed across working-sample replications.

\subsection{Estimators}

\textbf{Density estimators.}
We compare three estimators. The isotropic one-step estimator uses the
augmented inverse-propensity-weighted map with an isotropic Gaussian kernel.
The diffusion-guided plug-in estimator uses
\(\kappa_{h,r,\widehat\theta}^{\mathrm{proj}}\) but omits the residual
correction. DIS uses the same transported kernel and adds the one-step
correction in \eqref{eqn:estimator-one-step-diff-smoothing}. Thus comparing
the diffusion-guided plug-in estimator with DIS isolates semiparametric
correction without changing the geometry or target.

\textbf{Stein-functional estimators.}
We compare a treated-only plug-in estimator with isotropic and
diffusion-guided DSS. Both DSS estimators use the ratio construction in
\eqref{eqn:dss-estimator}; they differ only in whether the kernel is isotropic
or induced by the learned probability flow. The treated-only estimator uses
only observations with \(A=a\), omits causal adjustment, and serves as a
biased baseline under confounding.

Accordingly, the two experiments use different auxiliary baselines: the
density plug-in isolates one-step correction, whereas the treated-only Stein
estimator illustrates the effect of omitting causal adjustment.

\subsection{Evaluation criteria and test class}

Density error is computed by numerical quadrature:
\[
\operatorname{ISE}
=
\int_{\mathcal Y_0}
\left\{
\widehat p_{a,h,r}(y)
-
\widehat p_{a,h,r}^{\mathrm{ref}}(y)
\right\}^2dy.
\]
For Stein evaluation, let
\[
g_v(y)
=
v\exp(-\|y\|_2^2/2),
\quad
v\in\mathcal V,
\]
where \(\mathcal V\) contains the coordinate vectors and a fixed collection of
normalized random directions generated once and reused throughout. For
\(\mathcal G=\{g_v:v\in\mathcal V\}\), the reference functional is
\[
\widehat\Psi_{a,h}^{\mathrm{ref}}(g)
=
\int_{\mathcal Y_0}
\left\{
\nabla_y\!\cdot g(y)
+
g(y)^\top\widehat s_{a,h,r}^{\mathrm{ref}}(y)
\right\}
d\widehat\nu_a^{\mathrm{ref}}(y).
\]
We report mean squared error averaged over \(\mathcal G\) and the
working-sample replications.

We use the common deterministic bandwidth schedule
\(h_n=c_hn^{-1/5}\), with \(c_h\) selected once from the independent reference
pool and then fixed. The experiment uses eight approximately log-spaced sample
sizes from \(500\) to \(50000\). Log error is plotted against log sample size,
and fitted slopes are interpreted as descriptive evidence of error decay
rather than formal estimates of asymptotic exponents.

\subsection{Results and extended diagnostics}

Figures~\ref{fig:exp-density} and~\ref{fig:exp-stein} present the main results.
The diffusion-guided estimators exhibit generally faster error decay relative
to their matched targets than their isotropic counterparts. In the
density experiment, DIS also improves on the diffusion-guided plug-in
estimator, although both use the same score-driven probability-flow kernel.
These patterns persist over the extended sample-size range and across
\(r\in\{500,2500,10000\}\).

\begin{figure}[t!]
  \centering
  \includegraphics[width=0.7\linewidth]{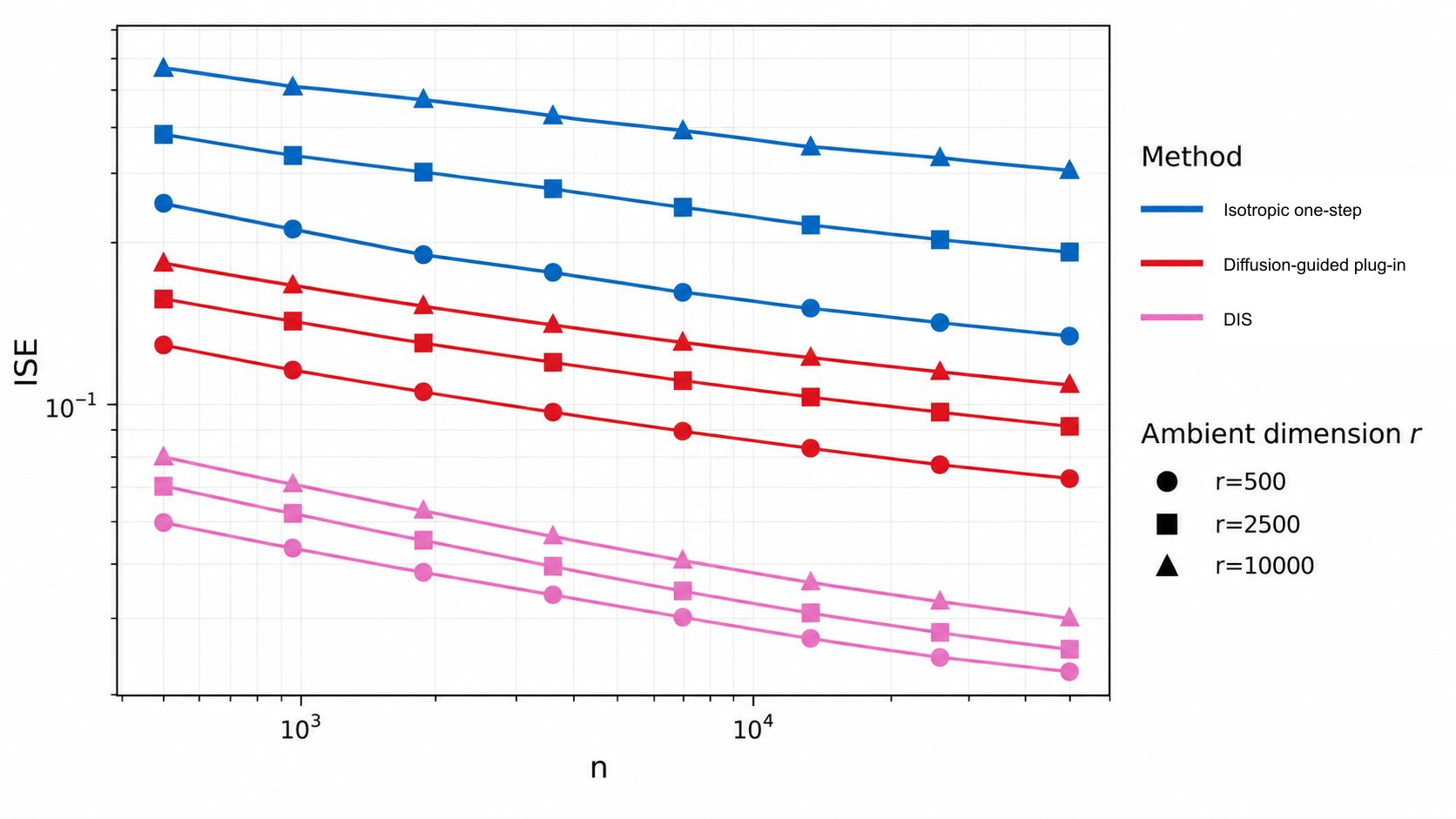}
  \caption{Density ISE relative to matched held-out targets over eight sample
  sizes and \(r\in\{500,2500,10000\}\).}
  \label{fig:exp-density-ext}
\end{figure}

\begin{figure}[t!]
  \centering
  \includegraphics[width=0.7\linewidth]{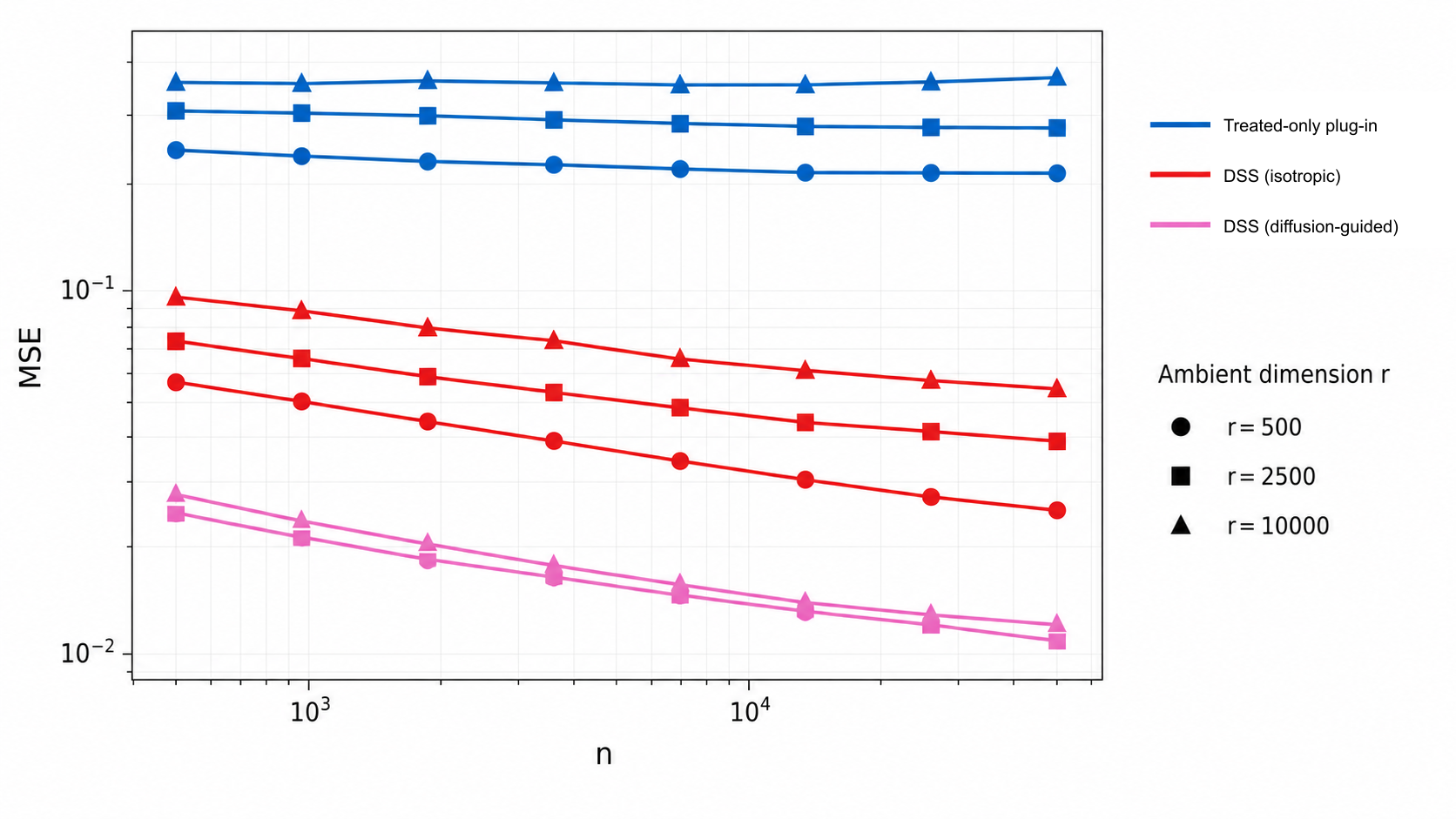}
  \caption{Stein-functional MSE relative to matched held-out targets over eight
  sample sizes and \(r\in\{500,2500,10000\}\).}
  \label{fig:exp-stein-ext}
\end{figure}

\subsection{Counterfactual triptych samples}

To complement the quantitative results, we include illustrative
counterfactual triptychs for two subjects, with columns corresponding to
\(A=0\), the factual image, and \(A=1\). These triptychs illustrate the
semi-synthetic intervention and are not outputs of DIS or DSS or inputs to the
quantitative evaluation.

\begin{figure}[t]
  \centering
  \includegraphics[width=0.85\linewidth]{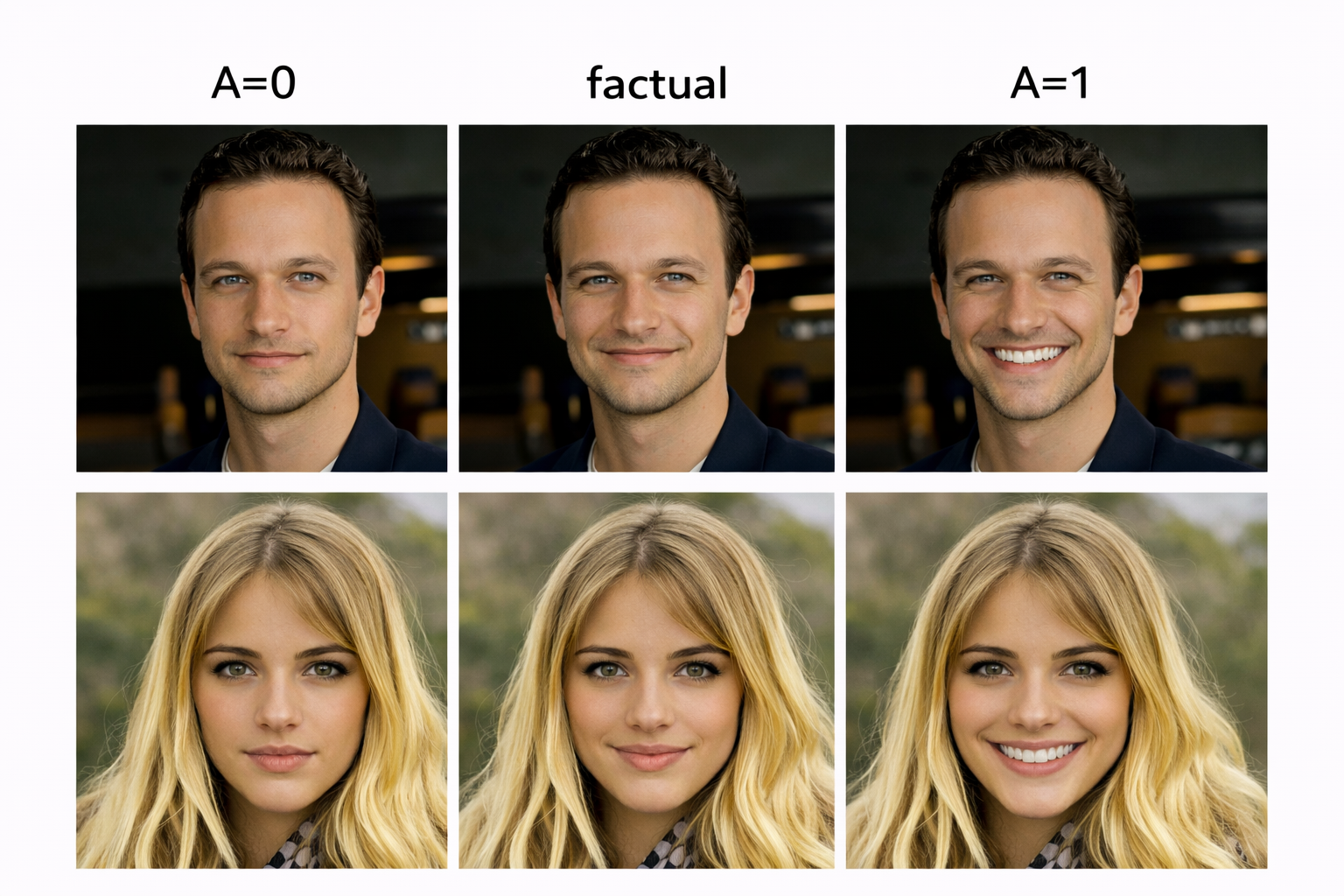}
  \caption{Illustrative counterfactual triptychs. The columns show \(A=0\),
  factual, and \(A=1\); the images illustrate intervention semantics and are
  not outputs of DIS or DSS.}
  \label{fig:cf-triptychs}
\end{figure}

\section{Fixed-Geometry Pointwise DIS Expansion}
\label{appsec:pointwise-dis}

Set \(\widehat\theta=\theta_0\), and write
\[
\delta_\pi
=
\|\widehat\pi_a-\pi_a\|_{L_2(\Pb)},
\quad
\delta_\mu^0(y)
=
\|\widehat\mu_{a,h,\theta_0}(\cdot;y)
-
\mu_{a,h,\theta_0}(\cdot;y)\|_{L_2(\Pb)}.
\]
Under cross fitting, all estimated quantities are interpreted foldwise under
the convention of Section~\ref{subsec:asymptotics-risk}, with fold indices
suppressed. Let
\[
\widehat\eta^0
=
\{\widehat\pi_a,\widehat\mu_{a,h,\theta_0},
\kappa_{h,\theta_0}\},
\]
and define
\[
\widehat p^{\mathrm{geo},0}_{a,h}(y)
=
\Pn\{\varphi^{\mathrm{geo}}_h(Z;y,\widehat\eta^0)\}.
\]

\begin{lemma}[Fixed-geometry pointwise expansion]
\label{lem:fixed-geometry-expansion}
Fix \(h>0\) and \(y\in\mathbb R^d\). Suppose
Assumptions~\ref{assumption:boundedness} and
\ref{assumption:sample-splitting} hold, and
\[
\delta_\pi=o_\Pb(1),
\quad
\delta_\mu^0(y)=o_\Pb(1),
\quad
\Pb\{\inf_x\widehat\pi_a(x)\ge\pi_{\min}/2\}\to1.
\]
Then
\[
\widehat p^{\mathrm{geo},0}_{a,h}(y)
-
p^{\mathrm{geo}}_{a,h}(y)
=
(\Pn-\Pb)\{\varphi^{\mathrm{geo}}_h(Z;y,\eta_0)\}
+
R_{\pi,\mu}^0(y)
+
o_\Pb(n^{-1/2}),
\]
where
\[
|R_{\pi,\mu}^0(y)|
\lesssim
\delta_\pi\delta_\mu^0(y).
\]
\end{lemma}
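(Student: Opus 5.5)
We follow the standard von Mises decomposition for one-step estimators, applied foldwise and then averaged with weights $n_k/n$. Write $\widehat\varphi=\varphi^{\mathrm{geo}}_h(\cdot;y,\widehat\eta^0)$ and $\varphi_0=\varphi^{\mathrm{geo}}_h(\cdot;y,\eta_0)$. Since $\Pb\varphi_0=p^{\mathrm{geo}}_{a,h}(y)$, within each fold $k$ we have the exact identity
\[
\widehat p^{\mathrm{geo},0}_{a,h}(y)-p^{\mathrm{geo}}_{a,h}(y)
=(\Pn-\Pb)\varphi_0+(\Pn-\Pb)(\widehat\varphi-\varphi_0)+\{\Pb\widehat\varphi-\Pb\varphi_0\}.
\]
The first term is the claimed leading term. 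We set $R^0_{\pi,\mu}(y)=\Pb\widehat\varphi-p^{\mathrm{geo}}_{a,h}(y)$, computed conditionally on $\mathcal F_{-k}$ and then fold-averaged. It remains to show that the middle term is $o_\Pb(n^{-1/2})$ and that the remainder has product form.

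\textbf{Remainder.} Because the geometry is fixed at $\theta_0$, $\kappa_{h,\theta_0}$ is known and only $\widehat\pi_a,\widehat\mu$ are estimated. Iterated expectations, conditioning first on $(X,A)$ and using $\E\{\kappa_{h,\theta_0}(y;Y)\mid X,A=a\}=\mu_{a,h,\theta_0}(X;y)$ together with $\E\{\mathbbm 1\{A=a\}\mid X\}=\pi_a(X)$, give
\[
\Pb\widehat\varphi-p^{\mathrm{geo}}_{a,h}(y)=\E\Big[\frac{\pi_a(X)-\widehat\pi_a(X)}{\widehat\pi_a(X)}\{\mu_{a,h,\theta_0}(X;y)-\widehat\mu_{a,h,\theta_0}(X;y)\}\Big].
\]
On the event $\inf_x\widehat\pi_a\ge\pi_{\min}/2$, which has probability tending to one, Cauchy--Schwarz bounds this by $(2/\pi_{\min})\,\delta_\pi\delta^0_\mu(y)$. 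That is the stated bound, and it is unaffected by fold averaging.

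\textbf{Empirical process term.} This is the step needing the most care, though it is standard. Under cross-fitting (Assumption~\ref{assumption:sample-splitting}), conditional on $\mathcal F_{-k}$ the function $\widehat\varphi-\varphi_0$ is fixed. Chebyshev's inequality then gives
\[
(\Pn-\Pb)(\widehat\varphi-\varphi_0)=O_\Pb\big(n^{-1/2}\|\widehat\varphi-\varphi_0\|_{L_2(\Pb)}\big),
\]
so it suffices to show $\|\widehat\varphi-\varphi_0\|_{L_2(\Pb)}=o_\Pb(1)$. Decompose
\[
\widehat\varphi-\varphi_0=\mathbbm 1\{A=a\}\Big(\frac{1}{\widehat\pi_a}-\frac{1}{\pi_a}\Big)(\kappa-\mu)+\Big(1-\frac{\mathbbm 1\{A=a\}}{\widehat\pi_a}\Big)(\widehat\mu-\mu),
\]
with $\kappa=\kappa_{h,\theta_0}(y;Y)$ and $\mu=\mu_{a,h,\theta_0}(X;y)$. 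The second piece is bounded by $(1+2/\pi_{\min})\delta^0_\mu(y)=o_\Pb(1)$.

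The first piece is the delicate one. Its factor $|\widehat\pi_a-\pi_a|/(\widehat\pi_a\pi_a)$ is uniformly bounded by $2/\pi_{\min}^2$ but is controlled only in $L_2$, whereas $\kappa-\mu$ lies only in $L_2$ by Assumption~\ref{assumption:boundedness}. We would handle this by truncation. Split on $\{|\kappa-\mu|\le M\}$: that part is bounded by $M(2/\pi_{\min}^2)\delta_\pi$. The complement contributes at most $(2/\pi_{\min}^2)\,\E[(\kappa-\mu)^2\mathbbm 1\{|\kappa-\mu|>M\}]^{1/2}$, which is small for large $M$ by dominated convergence. Taking $M\to\infty$ slowly then gives $o_\Pb(1)$. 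Alternatively, one can invoke the foldwise conditional square-integrability clause of Assumption~\ref{assumption:boundedness} together with uniform integrability.

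Combining the three terms and averaging over the fixed number $K$ of folds yields the expansion.
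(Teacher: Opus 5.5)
Your proposal is correct and follows the paper's proof essentially step for step. It uses the same three-term decomposition and the same iterated-expectation formula for $R^0_{\pi,\mu}(y)$, bounded by Cauchy--Schwarz on the event $\inf_x\widehat\pi_a(x)\ge\pi_{\min}/2$. It also uses the same sample-splitting (conditional Chebyshev) reduction of $(\Pn-\Pb)(\widehat\varphi-\varphi_0)$ to $\|\widehat\varphi-\varphi_0\|_{L_2(\Pb)}=o_\Pb(1)$.

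Your truncation argument for the $\mathbbm 1\{A=a\}(\widehat\pi_a^{-1}-\pi_a^{-1})(\kappa-\mu)$ piece makes explicit a step the paper only asserts. The Donsker branch of Assumption~\ref{assumption:sample-splitting}, which you do not treat, follows from the same $L_2$ consistency via stochastic equicontinuity, as the paper notes in one line.
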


\section{Ambient Density Extension for DIS}
\label{appsec:ambient-density}

The main analysis treats \(p^{\mathrm{geo}}_{a,h}\) as the primary target
because it remains well defined when \(\Pb_a\) is singular in the ambient
space. This section records the extension to the case where an ordinary ambient
density exists.

\begin{assumption}[Ambient density approximation]
\label{assump:ambient-density-approx}
Suppose that \(\Pb_a\) is absolutely continuous with respect to
\(d\)-dimensional Lebesgue measure on an open set containing
\(\mathcal Y_0\), with density \(p_a\). For some \(\beta>0\) and
\(C_A<\infty\), assume
\[
A_h^2(\mathcal Y_0)
\equiv
\int_{\mathcal Y_0}
\{p^{\mathrm{geo}}_{a,h}(y)-p_a(y)\}^2dy
\le
C_Ah^{2\beta}.
\]
\end{assumption}

Assumption~\ref{assump:ambient-density-approx} is an \(L_2\) smoothing-bias
condition. For ordinary kernels, analogous bounds follow from standard
smoothness and kernel-order conditions; here the condition is imposed on the
diffusion-guided smoothing rule.

\begin{corollary}
\label{cor:risk-ambient-density}
Suppose the conditions of Theorem~\ref{thm:dis-risk-global} and
Assumption~\ref{assump:ambient-density-approx} hold. Then the fixed-geometry
estimator satisfies
\[
\int_{\mathcal Y_0}
\{\widehat p^{\mathrm{geo},0}_{a,h}(y)-p_a(y)\}^2dy
=
O_\Pb\!\left[
\frac{H_h(\mathcal Y_0)}{n}
+
\operatorname*{ess\,sup}_{y\in\mathcal Y_0}
\{R_{\pi,\mu}^0(y)\}^2
+
h^{2\beta}
\right]
+
o_\Pb(n^{-1}).
\]
The learned-geometry estimator satisfies
\[
\begin{aligned}
\int_{\mathcal Y_0}
\{\widehat p^{\mathrm{geo}}_{a,h}(y)-p_a(y)\}^2dy
&=
O_\Pb\!\left[
\frac{H_h(\mathcal Y_0)}{n}
+
\operatorname*{ess\,sup}_{y\in\mathcal Y_0}
\{R_{\pi,\mu}^{\widehat\theta}(y)\}^2 \right. \\
&\quad\left.
+
\int_{\mathcal Y_0}B_{\theta,h}(y)^2dy
+
h^{2\beta}
\right]
+
o_\Pb(n^{-1}).
\end{aligned}
\]
If Assumption~\ref{assump:integrated-geometry-drift} also holds, then
\[
\begin{aligned}
\int_{\mathcal Y_0}
\{\widehat p^{\mathrm{geo}}_{a,h}(y)-p_a(y)\}^2dy
&=
O_\Pb\!\left[
\frac{H_h(\mathcal Y_0)}{n}
+
\operatorname*{ess\,sup}_{y\in\mathcal Y_0}
\{R_{\pi,\mu}^{\widehat\theta}(y)\}^2 \right. \\
&\quad\left.
+
h^{-\Lambda_{\rm geom}(\mathcal Y_0)}
\rho_{\theta,n_{\rm geom}}^2
+
h^{2\beta}
\right]
+
o_\Pb(n^{-1}).
\end{aligned}
\]
Consequently, any bound
\(H_h(\mathcal Y_0)\lesssim h^{-D_H}\) converts the oracle stochastic term to
\(n^{-1}h^{-D_H}\).
\end{corollary}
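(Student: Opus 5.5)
The statement to prove is Corollary~\ref{cor:risk-ambient-density}. The plan is a triangle-inequality argument in \(L_2(\mathcal Y_0)\). It combines the already-established risk bounds for the smoothed target with the smoothing-bias bound of Assumption~\ref{assump:ambient-density-approx}.

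For any estimator \(\widetilde p\) of the smoothed target, the pointwise identity
\(\widetilde p(y)-p_a(y)=\{\widetilde p(y)-p^{\mathrm{geo}}_{a,h}(y)\}+\{p^{\mathrm{geo}}_{a,h}(y)-p_a(y)\}\)
together with \((x+z)^2\le 2x^2+2z^2\) gives
\[
\int_{\mathcal Y_0}\{\widetilde p(y)-p_a(y)\}^2dy
\le
2\int_{\mathcal Y_0}\{\widetilde p(y)-p^{\mathrm{geo}}_{a,h}(y)\}^2dy
+2A_h^2(\mathcal Y_0).
\]
The set \(\mathcal Y_0\) has finite Lebesgue measure and \(p_a\) exists on a neighborhood of it. Hence both integrals are well defined, with the first possibly \(+\infty\) only on a null event, which the risk theorem excludes with probability tending to one.

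\textbf{Fixed-geometry case.} Take \(\widetilde p=\widehat p^{\mathrm{geo},0}_{a,h}\). The first term on the right is bounded by the fixed-geometry part of Theorem~\ref{thm:dis-risk-global}:
\[
O_\Pb\bigl(H_h(\mathcal Y_0)/n+\operatorname*{ess\,sup}_{y\in\mathcal Y_0}\{R_{\pi,\mu}^0(y)\}^2\bigr)+o_\Pb(n^{-1}).
\]
The second term is deterministic and at most \(2C_Ah^{2\beta}\), so it is trivially \(O_\Pb(h^{2\beta})\). Since constant multiples are absorbed into \(O_\Pb\), the first display of the corollary follows.

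\textbf{Learned-geometry case.} Take \(\widetilde p=\widehat p^{\mathrm{geo}}_{a,h}\) and repeat the argument, now invoking the learned-geometry bound of Theorem~\ref{thm:dis-risk-global}. This gives the second display with the drift term \(\int_{\mathcal Y_0}B_{\theta,h}^2\) kept explicit.

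The point to check is that the target \(p^{\mathrm{geo}}_{a,h}\) in the bias term is the population-geometry target \(p^{\mathrm{geo}}_{a,h,\theta_0}\), not the fitted one. This holds because Theorem~\ref{thm:dis-risk-global} is stated relative to \(p^{\mathrm{geo}}_{a,h}=p^{\mathrm{geo}}_{a,h,\theta_0}\) and Assumption~\ref{assump:ambient-density-approx} is imposed on the same object. No cross term between the geometry drift and the smoothing bias therefore needs separate handling.

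\textbf{Drift replaced by its rate.} For the third display, substitute \(\int_{\mathcal Y_0}B_{\theta,h}(y)^2dy=O_\Pb(h^{-\Lambda_{\rm geom}(\mathcal Y_0)}\rho_{\theta,n_{\rm geom}}^2)\). This is the bound stated after Assumption~\ref{assump:integrated-geometry-drift}, obtained by Jensen's inequality within each fold and then across the fold weights \(n_k/n\). Equivalently, one can invoke Corollary~\ref{cor:risk-geometry-stability} directly in place of Theorem~\ref{thm:dis-risk-global}.

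\textbf{Final claim.} Any bound \(H_h(\mathcal Y_0)\lesssim h^{-D_H}\) is deterministic for fixed \(h\). Plugging it into the first term converts the oracle stochastic term to \(n^{-1}h^{-D_H}\).

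There is no real obstacle here. The only care needed is bookkeeping: \(O_\Pb\) of a sum of nonnegative terms absorbs the factor 2, and the \(o_\Pb(n^{-1})\) remainder passes through the triangle inequality unchanged.
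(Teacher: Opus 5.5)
Your proposal is correct and follows essentially the same route as the paper's proof. Both split the error with $(u+v)^2\le 2u^2+2v^2$ into the Theorem~\ref{thm:dis-risk-global} risk term and the smoothing bias $A_h^2(\mathcal Y_0)\le C_Ah^{2\beta}$, and then obtain the third display by substituting the Jensen-based bound on $\int_{\mathcal Y_0}B_{\theta,h}(y)^2dy$ implied by Assumption~\ref{assump:integrated-geometry-drift}.
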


\begin{proof}
For the fixed-geometry estimator, the inequality
\((u+v)^2\le2u^2+2v^2\) gives
\[
\begin{aligned}
\int_{\mathcal Y_0}
\{\widehat p^{\mathrm{geo},0}_{a,h}(y)-p_a(y)\}^2dy
&\le
2\int_{\mathcal Y_0}
\{\widehat p^{\mathrm{geo},0}_{a,h}(y)
-
p^{\mathrm{geo}}_{a,h}(y)\}^2dy \\
&\quad+
2\int_{\mathcal Y_0}
\{p^{\mathrm{geo}}_{a,h}(y)-p_a(y)\}^2dy .
\end{aligned}
\]
The first term is controlled by Theorem~\ref{thm:dis-risk-global}, while the
second is \(A_h^2(\mathcal Y_0)\le C_Ah^{2\beta}\) by
Assumption~\ref{assump:ambient-density-approx}. This proves the fixed-geometry
bound.

The same argument gives
\[
\begin{aligned}
\int_{\mathcal Y_0}
\{\widehat p^{\mathrm{geo}}_{a,h}(y)-p_a(y)\}^2dy
&\le
2\int_{\mathcal Y_0}
\{\widehat p^{\mathrm{geo}}_{a,h}(y)
-
p^{\mathrm{geo}}_{a,h}(y)\}^2dy \\
&\quad+
2A_h^2(\mathcal Y_0)
\end{aligned}
\]
for the learned-geometry estimator. Applying
Theorem~\ref{thm:dis-risk-global} to the first term yields the bound containing
\(\int_{\mathcal Y_0}B_{\theta,h}(y)^2dy\). Finally,
Assumption~\ref{assump:integrated-geometry-drift} and Jensen's inequality imply
\[
\int_{\mathcal Y_0}B_{\theta,h}(y)^2dy
=
O_\Pb\!\left(
h^{-\Lambda_{\rm geom}(\mathcal Y_0)}
\rho_{\theta,n_{\rm geom}}^2
\right),
\]
which gives the final result.
\end{proof}

If \(H_h(\mathcal Y_0)\lesssim h^{-D_H}\) for all sufficiently small \(h\),
and, along a sequence \(h=h_n\to0\),
\[
\operatorname*{ess\,sup}_{y\in\mathcal Y_0}
\{R_{\pi,\mu}^{\widehat\theta}(y)\}^2
+
h_n^{-\Lambda_{\rm geom}(\mathcal Y_0)}
\rho_{\theta,n_{\rm geom}}^2
=
o_\Pb\!\left(
\frac{h_n^{-D_H}}{n}
+
h_n^{2\beta}
\right),
\]
then
\[
\int_{\mathcal Y_0}
\{\widehat p^{\mathrm{geo}}_{a,h_n}(y)-p_a(y)\}^2dy
=
O_\Pb\!\left(
\frac{h_n^{-D_H}}{n}
+
h_n^{2\beta}
\right).
\]
Balancing the stochastic and smoothing-bias terms gives
\[
h_n\asymp n^{-1/(2\beta+D_H)},
\quad
\int_{\mathcal Y_0}
\{\widehat p^{\mathrm{geo}}_{a,h_n}(y)-p_a(y)\}^2dy
=
O_\Pb\!\left(
n^{-2\beta/(2\beta+D_H)}
\right).
\]
The fixed-geometry analogue follows by omitting the geometry-drift term and
replacing \(R_{\pi,\mu}^{\widehat\theta}\) by \(R_{\pi,\mu}^0\).

\section{Ambient Score Extension for DSS}
\label{appsec:dss-ambient-score}

The main DSS theory targets the fixed-\(h\) smoothed score
\(s^{\mathrm{geo}}_{a,h}\), which is well defined even when the counterfactual
law \(\Pb_a\) does not admit an ordinary ambient density. This appendix records
the standard extension to the case where an ambient density and score exist.

\begin{assumption}[Ambient score approximation]
\label{assump:ambient-score-approx}
Suppose that \(\Pb_a\) has a positive continuously differentiable density
\(p_a\) on an open set containing \(\mathcal Y_0\), with score
\(s_a=\nabla\log p_a\). Assume that, for some \(\beta_s>0\) and
\(C'_A<\infty\),
\[
A_{s,h}^2(\mathcal Y_0)
\equiv
\int_{\mathcal Y_0}
\|s^{\mathrm{geo}}_{a,h}(y)-s_a(y)\|_2^2dy
\le
C'_A h^{2\beta_s}.
\]
\end{assumption}

Assumption~\ref{assump:ambient-score-approx} is an \(L_2\) score-smoothing bias
condition. It is stronger than the ambient-density approximation because it
requires approximation of \(\nabla\log p_a\), not only \(p_a\). For ordinary
kernels, such a condition typically requires additional smoothness and a lower
bound on \(p_a\) over \(\mathcal Y_0\); here we impose the analogous condition
directly on the diffusion-guided smoothed score.

\begin{corollary}
\label{cor:dss-ambient-score}
Suppose the conditions of
Corollary~\ref{cor:dss-risk-geometry-stability} and
Assumption~\ref{assump:ambient-score-approx} hold. Then
\[
\begin{aligned}
&\int_{\mathcal Y_0}
\|\widehat s^{\mathrm{geo}}_{a,h}(y)-s_a(y)\|_2^2dy \\
&\quad=
O_\Pb\!\left[
\frac{H_h^s(\mathcal Y_0)}{n}
+
\operatorname*{ess\,sup}_{y\in\mathcal Y_0}
\|R_{s,\pi,\mu,\nu}^{\widehat\theta}(y)\|_2^2
+
h^{-\Lambda_{s,\rm geom}(\mathcal Y_0)}
\rho_{\theta,n_{\rm geom}}^2
+
h^{2\beta_s}
\right]
+
o_\Pb(n^{-1}).
\end{aligned}
\]
\end{corollary}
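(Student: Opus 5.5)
The plan mirrors the proof of Corollary~\ref{cor:risk-ambient-density}: decompose the score error through the smoothed target $s^{\mathrm{geo}}_{a,h}$ and bound the two pieces separately. For each $y\in\mathcal Y_0$, the triangle inequality together with $(u+v)^2\le 2u^2+2v^2$ (applied to Euclidean norms) gives
\[
\|\widehat s^{\mathrm{geo}}_{a,h}(y)-s_a(y)\|_2^2
\le
2\|\widehat s^{\mathrm{geo}}_{a,h}(y)-s^{\mathrm{geo}}_{a,h}(y)\|_2^2
+
2\|s^{\mathrm{geo}}_{a,h}(y)-s_a(y)\|_2^2 .
\]
I will integrate this over $\mathcal Y_0$; measurability in $y$ is inherited from the measurability already implicit in Theorem~\ref{thm:dss-risk-global}.

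The first integrated term is exactly what Corollary~\ref{cor:dss-risk-geometry-stability} controls. Its hypotheses are assumed here, so this term is
\[
O_\Pb\bigl[H_h^s(\mathcal Y_0)/n+\operatorname*{ess\,sup}_{y\in\mathcal Y_0}\|R_{s,\pi,\mu,\nu}^{\widehat\theta}(y)\|_2^2+h^{-\Lambda_{s,\rm geom}(\mathcal Y_0)}\rho_{\theta,n_{\rm geom}}^2\bigr]+o_\Pb(n^{-1}).
\]
The second integrated term is $A_{s,h}^2(\mathcal Y_0)$. By Assumption~\ref{assump:ambient-score-approx} it is bounded by $C'_Ah^{2\beta_s}$, deterministically and for fixed $h$.

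Summing the two bounds, and using that a deterministic $O(h^{2\beta_s})$ term can be absorbed into the $O_\Pb[\cdot]$ bracket, gives the displayed result.

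No step is a real obstacle. The only points needing care are bookkeeping. First, under cross fitting, $\widehat s^{\mathrm{geo}}_{a,h}$ is the ratio of fold-aggregated $\widehat G_{a,h}$ and $\widehat P_{a,h}$, which is the same object handled in Theorem~\ref{thm:dss-risk-global}, so the decomposition applies directly to it. Second, $s^{\mathrm{geo}}_{a,h}$ must be well defined on $\mathcal Y_0$; this holds because $P_{a,h,\theta_0}\ge p'_{\min}>0$ there by Assumption~\ref{assumption:score-regularity}. With these two points checked, the corollary is a direct consequence of the stated results.
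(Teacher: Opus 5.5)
Your proposal is correct and follows the paper's proof essentially verbatim. You split the error as $\|u+v\|_2^2\le 2\|u\|_2^2+2\|v\|_2^2$, control the estimation term by Corollary~\ref{cor:dss-risk-geometry-stability}, and bound the smoothing-bias term by $A_{s,h}^2(\mathcal Y_0)\le C'_A h^{2\beta_s}$ from Assumption~\ref{assump:ambient-score-approx}.
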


\begin{proof}
By \(\|u+v\|_2^2\le2\|u\|_2^2+2\|v\|_2^2\),
\[
\begin{aligned}
\int_{\mathcal Y_0}
\|\widehat s^{\mathrm{geo}}_{a,h}(y)-s_a(y)\|_2^2dy
&\le
2\int_{\mathcal Y_0}
\|\widehat s^{\mathrm{geo}}_{a,h}(y)
-s^{\mathrm{geo}}_{a,h}(y)\|_2^2dy \\
&\quad+
2A_{s,h}^2(\mathcal Y_0).
\end{aligned}
\]
The first term is controlled by
Corollary~\ref{cor:dss-risk-geometry-stability}, while the second is bounded by
\(C'_Ah^{2\beta_s}\) under
Assumption~\ref{assump:ambient-score-approx}. The result follows.
\end{proof}

The preceding corollary separates four contributions: oracle score stochastic
error, causal nuisance error, learned-geometry drift, and score-smoothing bias.
The next result records the optimized rate when the score concentration
functional has intrinsic scaling.

\begin{theorem}
\label{thm:dss-structural-rate}
Suppose the conditions of Corollary~\ref{cor:dss-ambient-score} hold. Assume
that, for some \(D_{\mathrm{score}}\ge0\) and all sufficiently small \(h\),
\[
H_h^s(\mathcal Y_0)\lesssim h^{-D_{\mathrm{score}}},
\]
and that
\[
\operatorname*{ess\,sup}_{y\in\mathcal Y_0}
\|R_{s,\pi,\mu,\nu}^{\widehat\theta}(y)\|_2^2
+
h^{-\Lambda_{s,\rm geom}(\mathcal Y_0)}
\rho_{\theta,n_{\rm geom}}^2
=
o_\Pb\!\left(
\frac{h^{-D_{\mathrm{score}}}}{n}
+
h^{2\beta_s}
\right).
\]
Then
\[
\int_{\mathcal Y_0}
\|\widehat s^{\mathrm{geo}}_{a,h}(y)-s_a(y)\|_2^2dy
=
O_\Pb\!\left(
\frac{h^{-D_{\mathrm{score}}}}{n}
+
h^{2\beta_s}
\right)
+
o_\Pb(n^{-1}).
\]
If these conditions hold along a sequence \(h=h_n\) with
\[
h_n
\asymp
n^{-1/(2\beta_s+D_{\mathrm{score}})},
\]
then
\[
\int_{\mathcal Y_0}
\|\widehat s^{\mathrm{geo}}_{a,h_n}(y)-s_a(y)\|_2^2dy
=
O_\Pb\!\left(
n^{-2\beta_s/(2\beta_s+D_{\mathrm{score}})}
\right).
\]
\end{theorem}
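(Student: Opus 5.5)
The plan is to obtain Theorem~\ref{thm:dss-structural-rate} directly from Corollary~\ref{cor:dss-ambient-score} by substitution followed by a bandwidth balancing argument.

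\textbf{Step 1: substitute the structural bounds.} Under the stated conditions, Corollary~\ref{cor:dss-ambient-score} gives
\[
\int_{\mathcal Y_0}\|\widehat s^{\mathrm{geo}}_{a,h}(y)-s_a(y)\|_2^2dy
=
O_\Pb\{T_1+T_2+T_3+h^{2\beta_s}\}+o_\Pb(n^{-1}).
\]
Here $T_1=H_h^s(\mathcal Y_0)/n$, $T_2$ is the essential supremum of the squared remainder, and $T_3=h^{-\Lambda_{s,\rm geom}(\mathcal Y_0)}\rho_{\theta,n_{\rm geom}}^2$.
\begin{itemize}
\item For small $h$, the concentration condition $H_h^s(\mathcal Y_0)\lesssim h^{-D_{\rm score}}$ bounds $T_1$ by a constant multiple of $h^{-D_{\rm score}}/n$. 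Theorem~\ref{thm:dss-covariance-concentration} is one sufficient route to this condition.
\item The negligibility hypothesis states that $T_2+T_3=o_\Pb(h^{-D_{\rm score}}/n+h^{2\beta_s})$. Since $o_\Pb(a)\subseteq O_\Pb(a)$, these terms are absorbed into the main rate.
\end{itemize}
Collecting terms gives the first display, $O_\Pb(h^{-D_{\rm score}}/n+h^{2\beta_s})+o_\Pb(n^{-1})$.

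\textbf{Step 2: balance along $h=h_n$.} Take $h_n\asymp n^{-1/(2\beta_s+D_{\rm score})}$. Then
\[
h_n^{-D_{\rm score}}/n\asymp n^{D_{\rm score}/(2\beta_s+D_{\rm score})-1}=n^{-2\beta_s/(2\beta_s+D_{\rm score})},
\qquad
h_n^{2\beta_s}\asymp n^{-2\beta_s/(2\beta_s+D_{\rm score})}.
\]
Both terms are therefore of the same order. The residual $o_\Pb(n^{-1})$ is of smaller order, because $2\beta_s/(2\beta_s+D_{\rm score})\le 1$, so it is absorbed as well. This yields the stated rate.

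\textbf{Main obstacle: uniformity in $h$.} The only delicate point is that Corollary~\ref{cor:dss-ambient-score} is stated for fixed $h$. Applying it along $h_n\to0$ requires that its $O_\Pb$ constants do not depend on $h$. In particular, the following must hold uniformly along the sequence:
\begin{itemize}
\item the lower bound $p'_{\min}$ and the score bounds in Assumption~\ref{assumption:score-regularity};
\item the $o_\Pb(n^{-1})$ empirical-process remainder;
\item the constant $C'_A$ in Assumption~\ref{assump:ambient-score-approx};
\item the implicit constant in $H_h^s\lesssim h^{-D_{\rm score}}$.
\end{itemize}
The theorem builds this in through the phrase ``these conditions hold along a sequence $h=h_n$''. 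I would make explicit that this is read as the corollary's bound holding with constants uniform in $n$. The argument is then a purely deterministic rate calculation applied to an $O_\Pb$ statement, using that a sum of finitely many $O_\Pb(a_n)$ terms is $O_\Pb(a_n)$.
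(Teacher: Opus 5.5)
Your proposal is correct and follows the paper's proof essentially step for step. You invoke Corollary~\ref{cor:dss-ambient-score}, bound $H_h^s(\mathcal Y_0)/n$ by $h^{-D_{\rm score}}/n$, and absorb the nuisance and drift terms through the negligibility hypothesis. You then balance the two remaining terms at $h_n\asymp n^{-1/(2\beta_s+D_{\rm score})}$ and use $2\beta_s/(2\beta_s+D_{\rm score})\le 1$ to absorb the $o_\Pb(n^{-1})$ term; your reading of ``along a sequence $h=h_n$'' as requiring the corollary's constants to be uniform in $h$ makes explicit an assumption the paper leaves implicit.
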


\begin{proof}
By Corollary~\ref{cor:dss-ambient-score} and the assumed concentration bound,
\[
\begin{aligned}
&\int_{\mathcal Y_0}
\|\widehat s^{\mathrm{geo}}_{a,h}(y)-s_a(y)\|_2^2dy \\
&\quad=
O_\Pb\!\left[
\frac{h^{-D_{\mathrm{score}}}}{n}
+
\operatorname*{ess\,sup}_{y\in\mathcal Y_0}
\|R_{s,\pi,\mu,\nu}^{\widehat\theta}(y)\|_2^2
+
h^{-\Lambda_{s,\rm geom}(\mathcal Y_0)}
\rho_{\theta,n_{\rm geom}}^2
+
h^{2\beta_s}
\right]
+
o_\Pb(n^{-1}).
\end{aligned}
\]
The assumed negligibility of the causal nuisance and geometry-drift terms gives
the first claim. Balancing \(n^{-1}h^{-D_{\mathrm{score}}}\) and
\(h^{2\beta_s}\) gives
\(h_n\asymp n^{-1/(2\beta_s+D_{\mathrm{score}})}\), under which both terms are
of order \(n^{-2\beta_s/(2\beta_s+D_{\mathrm{score}})}\). Since
\(2\beta_s/(2\beta_s+D_{\mathrm{score}})\le1\), the \(o_\Pb(n^{-1})\) term is
negligible relative to this rate. This proves the second claim.
\end{proof}

Under Theorem~\ref{thm:dss-covariance-concentration}, one may take
\(D_{\mathrm{score}}=d_\star+2\).

Theorem~\ref{thm:dss-structural-rate} clarifies when DSS can achieve favorable
rates for ambient score targets. The condition
\(H_h^s(\mathcal Y_0)\lesssim h^{-D_{\mathrm{score}}}\) means that the combined
\(L_2\) concentration of the transported kernel and its \(y\)-gradient scales
with score-concentration exponent \(D_{\mathrm{score}}\), rather than with the
ambient isotropic exponent \(d+2\). Such a reduction can occur when the
transported kernel and its gradient inherit low-dimensional structure.

The causal nuisance and geometry-drift terms must also be negligible relative
to the oracle score rate. Along
\(h_n\asymp n^{-1/(2\beta_s+D_{\mathrm{score}})}\), this requires
\[
\operatorname*{ess\,sup}_{y\in\mathcal Y_0}
\|R_{s,\pi,\mu,\nu}^{\widehat\theta}(y)\|_2^2
+
h_n^{-\Lambda_{s,\rm geom}(\mathcal Y_0)}
\rho_{\theta,n_{\rm geom}}^2
=
o_\Pb\!\left\{
n^{-2\beta_s/(2\beta_s+D_{\mathrm{score}})}
\right\}.
\]
Thus the benefit of DSS is structural, not automatic: the reduction from the
ambient score-concentration exponent must offset both the derivative cost of
score estimation and the additional geometry-learning drift.

Comparing their respective optimized \(L_2\) convergence exponents, the DSS
exponent exceeds the DIS exponent when
\[
\frac{\beta_s}{2\beta_s+D_{\mathrm{score}}}
>
\frac{\beta}{2\beta+D_H},
\]
or equivalently,
\[
D_{\mathrm{score}}
<
\frac{\beta_s}{\beta}D_H.
\]
This favors DSS when intrinsic score structure offsets its additional
derivative cost, particularly for score-based sampling and Stein functionals.

\section{Geometry Drift and Flow Stability}
\label{appsec:geometry-drift}

Assumptions~\ref{assump:integrated-geometry-drift}
and~\ref{assump:integrated-score-geometry-drift} use
\(\rho_{\theta,n_{\rm geom}}\) to summarize the error from learning the
diffusion geometry. This error is distinct from estimation of the
counterfactual density or score itself. The learned diffusion score is used to
orient and transport the smoothing rule, so its effect enters through the
induced perturbation of the kernel for DIS and of the kernel and its gradient
for DSS.

Let \(\mathcal T_h\) denote the diffusion time interval relevant for smoothing
at scale \(h\), for example \(\mathcal T_h=[0,\varepsilon_h]\), and let
\(\mathcal R_h\) be a tube containing the corresponding probability-flow
trajectories. To distinguish the diffusion score from the counterfactual score,
write \(s^{\rm diff}_\theta(z,t)\) for the former. For \(r\in\{1,2\}\), define
\[
\|s^{\rm diff}_{\widehat\theta}
-
s^{\rm diff}_{\theta_0}\|_{\mathcal S_h^{(r)}}
=
\int_{\mathcal T_h}\sigma(t)^2
\sum_{j=0}^r
\sup_{z\in\mathcal R_h}
\|\nabla^j s^{\rm diff}_{\widehat\theta}(z,t)
-
\nabla^j s^{\rm diff}_{\theta_0}(z,t)\|dt,
\]
where \(\nabla^0s=s\), with compatible Euclidean, operator, or tensor norms.
For DIS, control in \(\mathcal S_h^{(1)}\) is sufficient, whereas DSS requires
the stronger norm \(\mathcal S_h^{(2)}\) because the transported kernel
gradient depends on spatial derivatives of the flow and Jacobian. We write
\(\rho_{\theta,n_{\rm geom}}\) for the corresponding rate when the relevant
norm is clear from context. Alternatively, the required flow perturbation
bounds may be verified directly.

\subsection{From Flow Perturbations to Kernel Drift}

We first give a sufficient route from flow perturbations to the integrated DIS
geometry-drift condition. For notational simplicity, let
\[
\Psi_0=\Phi_{\varepsilon_h,\theta_0}^{-1},
\quad
\Psi_1=\Phi_{\varepsilon_h,\widehat\theta}^{-1},
\quad
J_0=J_{\varepsilon_h,\theta_0},
\quad
J_1=J_{\varepsilon_h,\widehat\theta}.
\]
Define the uniform inverse-flow and relative-Jacobian errors over
\(\mathcal Y_0\) by
\[
\Delta_\Phi
=
\sup_{y\in\mathcal Y_0}\|\Psi_1(y)-\Psi_0(y)\|_2,
\quad
\Delta_J
=
\sup_{y\in\mathcal Y_0}
\left|
\frac{J_1(y)}{J_0(y)}-1
\right|.
\]
For \(t\in[0,1]\), let
\[
\Psi_t=\Psi_0+t(\Psi_1-\Psi_0),
\quad
J_t
=
J_0\left\{
1+t\left(\frac{J_1}{J_0}-1\right)
\right\},
\]
and define the interpolated smoothing density
\[
\kappa_{h,t}(y;u)
=
q_{\varepsilon_h}\{\Psi_t(y)\mid u\}J_t(y).
\]
Thus \(\kappa_{h,0}=\kappa_{h,\theta_0}\) and
\(\kappa_{h,1}=\kappa_{h,\widehat\theta}\).

The next two lemmas are stated for a single generic fitted geometry. Under
\(K\)-fold cross fitting, they are applied separately to
\(\widehat\theta^{(-k)}\), conditional on \(\mathcal F_{-k}\), uniformly over
the fixed folds. Accordingly, \(\mathcal F_{\rm geom}\) below denotes the
geometry-training sigma-field for this generic fit and equals
\(\mathcal F_{-k}\) when the lemma is applied to fold \(k\).

\begin{lemma}[Flow-to-kernel stability]
\label{lem:flow-kernel-stability}
Let \(\mathcal F_{\rm geom}\) denote the geometry-training sigma-field.
Suppose there exists an event \(E_n\in\mathcal F_{\rm geom}\), with
\(\Pb(E_n)\to1\), on which \(\Delta_J\le1/2\) and, uniformly over
\(t\in[0,1]\), for some \(D_\kappa\ge0\),
\[
\left[
\E\left\{
\int_{\mathcal Y_0}
\kappa_{h,t}(y;Y^a)^2dy
\,\middle|\,
\mathcal F_{\rm geom}
\right\}
\right]^{1/2}
\lesssim
h^{-D_\kappa/2},
\]
and
\[
\left[
\E\left\{
\int_{\mathcal Y_0}
\|\nabla_zq_{\varepsilon_h}\{\Psi_t(y)\mid Y^a\}J_t(y)\|_2^2dy
\,\middle|\,
\mathcal F_{\rm geom}
\right\}
\right]^{1/2}
\lesssim
h^{-D_\kappa/2-1}.
\]
Then, on \(E_n\),
\[
\begin{aligned}
&\left[
\E\left\{
\int_{\mathcal Y_0}
\{\kappa_{h,\widehat\theta}(y;Y^a)
-
\kappa_{h,\theta_0}(y;Y^a)\}^2dy
\,\middle|\,
\mathcal F_{\rm geom}
\right\}
\right]^{1/2} \\
&\quad\lesssim
h^{-D_\kappa/2-1}\Delta_\Phi
+
h^{-D_\kappa/2}\Delta_J.
\end{aligned}
\]
Consequently, if
\[
\Delta_\Phi+\Delta_J
=
O_\Pb(\rho_{\theta,n_{\rm geom}}),
\]
then Assumption~\ref{assump:integrated-geometry-drift} holds with
\[
\Lambda_{\rm geom}(\mathcal Y_0)=D_\kappa+2.
\]
If \(\Delta_\Phi=0\), one may instead take
\(\Lambda_{\rm geom}(\mathcal Y_0)=D_\kappa\).
\end{lemma}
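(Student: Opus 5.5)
The plan is to interpolate along the path \(t\mapsto\kappa_{h,t}\) and apply the fundamental theorem of calculus pointwise in \((y,u)\). By construction \(\kappa_{h,0}=\kappa_{h,\theta_0}\) and \(\kappa_{h,1}=\kappa_{h,\widehat\theta}\). Differentiating \(\kappa_{h,t}(y;u)=q_{\varepsilon_h}\{\Psi_t(y)\mid u\}J_t(y)\) in \(t\) gives
\[
\partial_t\kappa_{h,t}(y;u)
=
\nabla_zq_{\varepsilon_h}\{\Psi_t(y)\mid u\}^\top\{\Psi_1(y)-\Psi_0(y)\}J_t(y)
+
q_{\varepsilon_h}\{\Psi_t(y)\mid u\}J_0(y)\left\{\frac{J_1(y)}{J_0(y)}-1\right\}.
\]
The second term is rewritten using \(J_0=J_t/\{1+t(J_1/J_0-1)\}\). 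On \(E_n\) we have \(\Delta_J\le1/2\), so \(1+t(J_1/J_0-1)\ge1/2\) and \(J_0\le 2J_t\) uniformly in \(t\) and \(y\in\mathcal Y_0\). Pointwise, this yields
\[
|\partial_t\kappa_{h,t}(y;u)|
\le
\Delta_\Phi\,\|\nabla_zq_{\varepsilon_h}\{\Psi_t(y)\mid u\}J_t(y)\|_2
+
2\Delta_J\,\kappa_{h,t}(y;u).
\]
This uses the Cauchy--Schwarz inequality and the definitions of \(\Delta_\Phi\) and \(\Delta_J\) as suprema over \(\mathcal Y_0\). It requires \(q_{\varepsilon_h}(\cdot\mid u)\) to be \(C^1\) in its first argument, which holds because it is a Gaussian transition density.

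Next I write \(\kappa_{h,\widehat\theta}-\kappa_{h,\theta_0}=\int_0^1\partial_t\kappa_{h,t}\,dt\). Applying Minkowski's integral inequality in the norm \(f\mapsto[\E\{\int_{\mathcal Y_0}f(y;Y^a)^2dy\mid\mathcal F_{\rm geom}\}]^{1/2}\) bounds the left side by \(\sup_{t\in[0,1]}\) of the norm of \(\partial_t\kappa_{h,t}\). The triangle inequality then splits this into the two pieces above. The quantities \(\Delta_\Phi\) and \(\Delta_J\) are \(\mathcal F_{\rm geom}\)-measurable, because the flows depend only on \(\widehat\theta\), so they pull out of the conditional expectation. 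The two hypothesized uniform-in-\(t\) bounds then give \(h^{-D_\kappa/2-1}\Delta_\Phi+h^{-D_\kappa/2}\Delta_J\) on \(E_n\).

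The main obstacle is justifying the interchange of derivative and integral and the measurability of \(t\mapsto\kappa_{h,t}\). Since we only need the pointwise identity \(f(1)-f(0)=\int_0^1f'\) for each fixed \((y,u)\), smoothness of \(q\) in \(z\) and linearity of \(\Psi_t\) and \(J_t\) in \(t\) suffice. Minkowski's inequality then handles the integral without further conditions, by Tonelli applied to nonnegative integrands.

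For the consequences, I square the bound and use \((a+b)^2\le2a^2+2b^2\). If \(\Delta_\Phi+\Delta_J=O_\Pb(\rho_{\theta,n_{\rm geom}})\), then on \(E_n\), for \(h\le1\), the conditional squared drift is
\[
O_\Pb\{h^{-D_\kappa-2}\rho_{\theta,n_{\rm geom}}^2+h^{-D_\kappa}\rho_{\theta,n_{\rm geom}}^2\}
=O_\Pb\{h^{-(D_\kappa+2)}\rho_{\theta,n_{\rm geom}}^2\}.
\]
Because \(\Pb(E_n)\to1\), this converts to the unconditional \(O_\Pb\) statement in Assumption~\ref{assump:integrated-geometry-drift}; the conditional \(\E_{\Pb_a}\) there is exactly this quantity, and uniformity over the fixed \(K\) folds is immediate. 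When \(\Delta_\Phi=0\), only the Jacobian term survives, giving exponent \(D_\kappa\). For fixed \(h>1\) the powers of \(h\) are constants, so the stated exponent remains admissible.
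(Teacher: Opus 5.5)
Your proposal is correct and follows essentially the same route as the paper's proof. Both arguments interpolate via $\kappa_{h,t}$, apply the fundamental theorem of calculus pointwise with the derivative bound $|\partial_t\kappa_{h,t}|\lesssim\Delta_\Phi\|\nabla_z q_{\varepsilon_h}\{\Psi_t(y)\mid u\}J_t(y)\|_2+\Delta_J\kappa_{h,t}(y;u)$ (from $J_t\asymp J_0$ on $E_n$), use Minkowski's inequality with the two uniform-in-$t$ conditional bounds, and square the result to obtain $\Lambda_{\rm geom}(\mathcal Y_0)=D_\kappa+2$, or $D_\kappa$ when $\Delta_\Phi=0$; your explicit constant $J_0\le 2J_t$ and your remark that $\Delta_\Phi,\Delta_J$ are $\mathcal F_{\rm geom}$-measurable only make steps explicit that the paper leaves implicit.
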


\begin{proof}
Write
\[
\delta\Psi=\Psi_1-\Psi_0,
\quad
\delta J=J_1-J_0.
\]
For fixed \(y\) and \(u\), the fundamental theorem of calculus gives
\[
\kappa_{h,1}(y;u)-\kappa_{h,0}(y;u)
=
\int_0^1
\frac{d}{dt}\kappa_{h,t}(y;u)\,dt,
\]
where
\[
\frac{d}{dt}\kappa_{h,t}(y;u)
=
\nabla_zq_{\varepsilon_h}\{\Psi_t(y)\mid u\}^{\top}
\delta\Psi(y)J_t(y)
+
q_{\varepsilon_h}\{\Psi_t(y)\mid u\}\delta J(y).
\]
On \(E_n\), \(\|\delta\Psi(y)\|_2\le\Delta_\Phi\). Moreover,
\(\Delta_J\le1/2\) implies \(J_t/J_0\in[1/2,3/2]\), and hence
\[
\frac{|\delta J(y)|}{J_t(y)}
=
\frac{J_0(y)}{J_t(y)}
\left|
\frac{J_1(y)}{J_0(y)}-1
\right|
\lesssim
\Delta_J.
\]
Since \(\kappa_{h,t}(y;u)
=
q_{\varepsilon_h}\{\Psi_t(y)\mid u\}J_t(y)\), it follows that
\[
\left|
\frac{d}{dt}\kappa_{h,t}(y;u)
\right|
\lesssim
\Delta_\Phi
\|\nabla_zq_{\varepsilon_h}\{\Psi_t(y)\mid u\}J_t(y)\|_2
+
\Delta_J\kappa_{h,t}(y;u).
\]
Minkowski's inequality and the assumed uniform conditional bounds therefore
yield
\[
\begin{aligned}
&\left[
\E\left\{
\int_{\mathcal Y_0}
\{\kappa_{h,1}(y;Y^a)-\kappa_{h,0}(y;Y^a)\}^2dy
\,\middle|\,
\mathcal F_{\rm geom}
\right\}
\right]^{1/2} \\
&\quad\lesssim
\int_0^1
\left[
\Delta_\Phi h^{-D_\kappa/2-1}
+
\Delta_J h^{-D_\kappa/2}
\right]dt \\
&\quad\lesssim
h^{-D_\kappa/2-1}\Delta_\Phi
+
h^{-D_\kappa/2}\Delta_J.
\end{aligned}
\]
Because
\(\kappa_{h,1}=\kappa_{h,\widehat\theta}\) and
\(\kappa_{h,0}=\kappa_{h,\theta_0}\), this proves the first claim.

If
\(\Delta_\Phi+\Delta_J
=
O_\Pb(\rho_{\theta,n_{\rm geom}})\), then, since \(0<h\le1\),
\[
\begin{aligned}
&\E\left[
\int_{\mathcal Y_0}
\{\kappa_{h,\widehat\theta}(y;Y^a)
-
\kappa_{h,\theta_0}(y;Y^a)\}^2dy
\,\middle|\,
\mathcal F_{\rm geom}
\right] \\
&\quad=
O_\Pb\!\left(
h^{-D_\kappa-2}\rho_{\theta,n_{\rm geom}}^2
\right),
\end{aligned}
\]
which gives
\(\Lambda_{\rm geom}(\mathcal Y_0)=D_\kappa+2\). In the special case \(\Delta_\Phi=0\), the preceding bound reduces to
\[
\E\left[
\int_{\mathcal Y_0}
\{\kappa_{h,\widehat\theta}(y;Y^a)
-
\kappa_{h,\theta_0}(y;Y^a)\}^2dy
\,\middle|\,
\mathcal F_{\rm geom}
\right]
=
O_\Pb\!\left(
h^{-D_\kappa}\rho_{\theta,n_{\rm geom}}^2
\right).
\]
Hence,
\(\Lambda_{\rm geom}(\mathcal Y_0)=D_\kappa\).
\end{proof}

Thus \(\Lambda_{\rm geom}(\mathcal Y_0)\) is a kernel-amplification exponent,
not an intrinsic-dimension assumption. Perturbations of the inverse-flow
location incur one derivative of the smoothing density, whereas purely
multiplicative perturbations need not.

\subsection{Extension to Kernel-Gradient Drift}

DSS additionally requires stability of the transported kernel gradient. Define
\[
\Delta_{\Phi,1}
=
\Delta_\Phi
+
\sup_{y\in\mathcal Y_0}
\|D\Psi_1(y)-D\Psi_0(y)\|_{\rm op},
\]
and
\[
\Delta_{J,1}
=
\Delta_J
+
\sup_{y\in\mathcal Y_0}
\frac{\|\nabla J_1(y)-\nabla J_0(y)\|_2}{J_0(y)}.
\]
These quantities additionally control the first spatial derivatives of the
inverse flow and Jacobian factor.

\begin{lemma}[Flow-to-kernel-gradient stability]
\label{lem:flow-kernel-gradient-stability}
Let \(\mathcal F_{\rm geom}\) denote the geometry-training sigma-field.
Suppose \(0<h\le1\) and there exists an event
\(E_n\in\mathcal F_{\rm geom}\), with \(\Pb(E_n)\to1\), on which
\(\Delta_J\le1/2\). Assume that, on \(E_n\),
\[
\sup_{t\in[0,1]}
\sup_{y\in\mathcal Y_0}
\left\{
\|D\Psi_t(y)\|_{\rm op}
+
\frac{\|\nabla J_t(y)\|_2}{J_t(y)}
\right\}
\lesssim1.
\]
Suppose also that, on \(E_n\), conditionally on
\(\mathcal F_{\rm geom}\) and uniformly over \(t\in[0,1]\), for some
\(D_\kappa\ge0\) and \(j=0,1,2\),
\[
\left[
\E\left\{
\int_{\mathcal Y_0}
\|\nabla_z^j
q_{\varepsilon_h}\{\Psi_t(y)\mid Y^a\}J_t(y)\|^2dy
\,\middle|\,
\mathcal F_{\rm geom}
\right\}
\right]^{1/2}
\lesssim
h^{-D_\kappa/2-j},
\]
where \(\nabla_z^0q=q\). Then, on \(E_n\),
\[
\begin{aligned}
&\left[
\E\left\{
\int_{\mathcal Y_0}
\left[
\{\kappa_{h,\widehat\theta}(y;Y^a)
-
\kappa_{h,\theta_0}(y;Y^a)\}^2 \right.\right.\right.\\
&\hspace{46mm}\left.\left.\left.
+
\|\dot\kappa_{h,\widehat\theta}(y;Y^a)
-
\dot\kappa_{h,\theta_0}(y;Y^a)\|_2^2
\right]dy
\,\middle|\,
\mathcal F_{\rm geom}
\right\}
\right]^{1/2} \\
&\quad\lesssim
h^{-D_\kappa/2-2}\Delta_{\Phi,1}
+
h^{-D_\kappa/2-1}\Delta_{J,1}.
\end{aligned}
\]
Consequently, if
\(\Delta_{\Phi,1}+\Delta_{J,1}
=
O_\Pb(\rho_{\theta,n_{\rm geom}})\), then
Assumption~\ref{assump:integrated-score-geometry-drift} holds with the
conservative exponent
\(\Lambda_{s,\rm geom}(\mathcal Y_0)=D_\kappa+4\). If
\(\Delta_{\Phi,1}=0\), the purely multiplicative bound gives
\(\Lambda_{s,\rm geom}(\mathcal Y_0)=D_\kappa+2\).
\end{lemma}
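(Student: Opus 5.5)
The plan mirrors the proof of Lemma~\ref{lem:flow-kernel-stability}, with the interpolation $\kappa_{h,t}(y;u)=q_{\varepsilon_h}\{\Psi_t(y)\mid u\}J_t(y)$ carried one spatial derivative further. The kernel-level part of the left-hand side is already covered by Lemma~\ref{lem:flow-kernel-stability}, since its hypotheses are the $j=0,1$ cases of the present ones. That lemma gives a bound of order $h^{-D_\kappa/2-1}\Delta_\Phi+h^{-D_\kappa/2}\Delta_J$, which is dominated by the claimed right-hand side because $h\le1$, $\Delta_\Phi\le\Delta_{\Phi,1}$ and $\Delta_J\le\Delta_{J,1}$. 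So the new work is the gradient term.

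\textbf{Gradient formula.} By the chain rule,
\[
\dot\kappa_{h,t}(y;u)=D\Psi_t(y)^\top\nabla_zq_{\varepsilon_h}\{\Psi_t(y)\mid u\}\,J_t(y)+q_{\varepsilon_h}\{\Psi_t(y)\mid u\}\,\nabla J_t(y).
\]
I will write $\dot\kappa_{h,1}-\dot\kappa_{h,0}=\int_0^1\partial_t\dot\kappa_{h,t}\,dt$ and differentiate in $t$, using $\partial_t\Psi_t=\delta\Psi$, $\partial_tD\Psi_t=D\delta\Psi$, $\partial_tJ_t=\delta J$ and $\partial_t\nabla J_t=\nabla\delta J$, where both $J_t$ and $\nabla J_t$ are affine in $t$. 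This produces five terms:
\begin{enumerate}
\item[(i)] $D\Psi_t^\top\nabla_z^2q\,\delta\Psi\,J_t$;
\item[(ii)] $(D\delta\Psi)^\top\nabla_zq\,J_t$;
\item[(iii)] $D\Psi_t^\top\nabla_zq\,\delta J$;
\item[(iv)] $(\nabla_zq^\top\delta\Psi)\nabla J_t$;
\item[(v)] $q\,\nabla\delta J$.
\end{enumerate}

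\textbf{Bounding each term.} Each term is bounded using three ingredients. First, on $E_n$ we have $\|D\Psi_t\|_{\rm op}+\|\nabla J_t\|_2/J_t\lesssim1$. Second, $\Delta_J\le1/2$ gives $J_0/J_t\in[2/3,2]$, so $|\delta J|/J_t\lesssim\Delta_J$ and $\|\nabla\delta J\|_2/J_t\lesssim\Delta_{J,1}$. Third, $\|\delta\Psi\|_2+\|D\delta\Psi\|_{\rm op}\le\Delta_{\Phi,1}$. The pointwise bounds are then:
\begin{enumerate}
\item[(i)] at most $\Delta_{\Phi,1}\|\nabla_z^2q\,J_t\|$;
\item[(ii)] at most $\Delta_{\Phi,1}\|\nabla_zq\,J_t\|$;
\item[(iii)] at most $\Delta_{J,1}\|\nabla_zq\,J_t\|$;
\item[(iv)] at most $\Delta_{\Phi,1}\|\nabla_zq\,J_t\|$, after writing $\nabla J_t=J_t(\nabla J_t/J_t)$;
\item[(v)] at most $\Delta_{J,1}\,q\,J_t$.
\end{enumerate}

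\textbf{Integration.} I then take the root conditional $L_2(\Pb_a\times dy)$ norm over $\mathcal Y_0$ and apply Minkowski's inequality over the five terms and over $t\in[0,1]$. The assumed envelopes for $j=2,1,0$ give
\[
\Delta_{\Phi,1}\bigl(h^{-D_\kappa/2-2}+h^{-D_\kappa/2-1}\bigr)+\Delta_{J,1}\bigl(h^{-D_\kappa/2-1}+h^{-D_\kappa/2}\bigr),
\]
which reduces to the stated bound since $h\le1$.

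\textbf{Consequences.} Squaring the bound and inserting $\Delta_{\Phi,1}+\Delta_{J,1}=O_\Pb(\rho_{\theta,n_{\rm geom}})$ gives order $h^{-D_\kappa-4}\rho^2_{\theta,n_{\rm geom}}$, so Assumption~\ref{assump:integrated-score-geometry-drift} holds with $\Lambda_{s,\rm geom}(\mathcal Y_0)=D_\kappa+4$. Applying this foldwise conditional on $\mathcal F_{-k}$, uniformly over the fixed $K$ folds, converts the conditional statement on $E_n$ into the required $O_\Pb$ statement. If $\Delta_{\Phi,1}=0$, terms (i), (ii) and (iv) vanish, and the remaining worst factor is $h^{-D_\kappa/2-1}$, giving $\Lambda_{s,\rm geom}(\mathcal Y_0)=D_\kappa+2$.

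\textbf{Main obstacle.} The difficulty is justifying the differentiation and the pointwise interpolation. The chain-rule expression for $\dot\kappa_{h,t}$ needs $\Psi_t$ and $J_t$ to be $C^1$ in $y$ and $q_{\varepsilon_h}$ to be $C^2$ in $z$. Exchanging $\partial_t$ and $\nabla_y$ is then justified because the interpolation is affine in $t$, so smoothness in $t$ is automatic. The other subtle point is that $J_t$ interpolates $J$ itself rather than $\det D\Psi_t$, so $\kappa_{h,t}$ need not be a probability density for intermediate $t$. This does not matter, because only the endpoints $t=0,1$ need to equal the true kernels, and the argument never uses normalization.
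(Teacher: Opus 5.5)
Your proposal is correct and follows essentially the same route as the paper. You differentiate the chain-rule expression for $\dot\kappa_{h,t}$ along the affine interpolation to get the same five terms, bound each pointwise by $\Delta_{\Phi,1}$ or $\Delta_{J,1}$ times the $j=0,1,2$ envelopes using $J_t\asymp J_0$, integrate via the fundamental theorem of calculus and Minkowski, and handle the kernel-level part by Lemma~\ref{lem:flow-kernel-stability} before squaring. Your term-by-term bookkeeping, including the observation that the intermediate $\kappa_{h,t}$ never needs to be normalized, simply makes explicit what the paper's proof leaves implicit.
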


\begin{proof}
Differentiating
\(\kappa_{h,t}(y;u)
=
q_{\varepsilon_h}\{\Psi_t(y)\mid u\}J_t(y)\)
with respect to \(y\) gives
\[
\dot\kappa_{h,t}(y;u)
=
D\Psi_t(y)^\top
\nabla_zq_{\varepsilon_h}\{\Psi_t(y)\mid u\}J_t(y)
+
q_{\varepsilon_h}\{\Psi_t(y)\mid u\}\nabla J_t(y).
\]
Write
\[
q_t=q_{\varepsilon_h}\{\Psi_t(y)\mid u\},
\quad
\delta\Psi=\Psi_1-\Psi_0,
\quad
\delta J=J_1-J_0.
\]
Differentiating the preceding display along the interpolation path yields
\[
\begin{aligned}
\frac{d}{dt}\dot\kappa_{h,t}(y;u)
&=
D\delta\Psi(y)^\top\nabla_zq_tJ_t
+
D\Psi_t(y)^\top\nabla_z^2q_t\,\delta\Psi(y)J_t \\
&\quad+
D\Psi_t(y)^\top\nabla_zq_t\,\delta J(y)
+
\{\nabla_zq_t^\top\delta\Psi(y)\}\nabla J_t(y)
+
q_t\nabla\delta J(y).
\end{aligned}
\]
Because \(J_t\asymp J_0\), the assumed pathwise bounds and the definitions of
\(\Delta_{\Phi,1}\) and \(\Delta_{J,1}\) imply
\[
\begin{aligned}
\left\|
\frac{d}{dt}\dot\kappa_{h,t}(y;u)
\right\|_2
&\lesssim
\Delta_{\Phi,1}
\left[
\|\nabla_zq_tJ_t\|_2
+
\|\nabla_z^2q_tJ_t\|
\right] \\
&\quad+
\Delta_{J,1}
\left[
|q_tJ_t|
+
\|\nabla_zq_tJ_t\|_2
\right].
\end{aligned}
\]
The fundamental theorem of calculus, Minkowski's inequality, and the assumed
concentration bounds then give
\[
\begin{aligned}
&\left[
\E\left\{
\int_{\mathcal Y_0}
\|\dot\kappa_{h,1}(y;Y^a)
-
\dot\kappa_{h,0}(y;Y^a)\|_2^2dy
\,\middle|\,
\mathcal F_{\rm geom}
\right\}
\right]^{1/2} \\
&\quad\lesssim
h^{-D_\kappa/2-2}\Delta_{\Phi,1}
+
h^{-D_\kappa/2-1}\Delta_{J,1}.
\end{aligned}
\]
Combining this display with
Lemma~\ref{lem:flow-kernel-stability} gives the joint kernel-and-gradient
bound. Squaring the result yields
\[
\begin{aligned}
&\E\left[
\int_{\mathcal Y_0}
\left[
\{\kappa_{h,\widehat\theta}(y;Y^a)
-
\kappa_{h,\theta_0}(y;Y^a)\}^2 \right.\right.\\
&\hspace{42mm}\left.\left.
+
\|\dot\kappa_{h,\widehat\theta}(y;Y^a)
-
\dot\kappa_{h,\theta_0}(y;Y^a)\|_2^2
\right]dy
\,\middle|\,
\mathcal F_{\rm geom}
\right] \\
&\quad=
O_\Pb\left(
h^{-D_\kappa-4}\Delta_{\Phi,1}^2
+
h^{-D_\kappa-2}\Delta_{J,1}^2
\right),
\end{aligned}
\]
which proves the stated consequences.
\end{proof}

The exponents in
Lemmas~\ref{lem:flow-kernel-stability}
and~\ref{lem:flow-kernel-gradient-stability} are sufficient conservative
bounds. Additional structure in the probability flow or smoothing rule may
yield smaller amplification exponents.

\subsection{From Diffusion-Score Learning to Flow Stability}

Let
\[
v_\theta(z,t)
=
b(z,t)-\frac{1}{2}\sigma(t)^2s^{\rm diff}_\theta(z,t)
\]
be the probability-flow vector field. Suppose the vector fields, their required
spatial derivatives, and the inverse-flow Jacobians are uniformly regular on
\(\mathcal R_h\) over \(\mathcal T_h\). Standard continuous-dependence bounds
for the probability-flow ordinary differential equation and its spatial
derivative equations, together with Grönwall's inequality, then give
\[
\Delta_\Phi+\Delta_J
\lesssim
\|s^{\rm diff}_{\widehat\theta}
-
s^{\rm diff}_{\theta_0}\|_{\mathcal S_h^{(1)}},
\]
and
\[
\Delta_{\Phi,1}+\Delta_{J,1}
\lesssim
\|s^{\rm diff}_{\widehat\theta}
-
s^{\rm diff}_{\theta_0}\|_{\mathcal S_h^{(2)}}.
\]
The constants depend on the smoothness, stability, and inverse-Jacobian bounds
of the population flow. The first display reflects that DIS requires control
of the flow and Jacobian, whereas the second reflects the additional derivative
control required for the DSS kernel gradient.

Combining these displays with
Lemmas~\ref{lem:flow-kernel-stability}
and~\ref{lem:flow-kernel-gradient-stability} gives primitive sufficient
conditions for
Assumptions~\ref{assump:integrated-geometry-drift}
and~\ref{assump:integrated-score-geometry-drift}, respectively. Alternatively,
one may begin with direct flow-learning guarantees for
\(\Delta_\Phi+\Delta_J\) or
\(\Delta_{\Phi,1}+\Delta_{J,1}\). The main results require only the induced
kernel or kernel-gradient perturbation and therefore remain modular with
respect to the particular diffusion-score training method.

The corresponding pointwise arguments are obtained by fixing
\(y\in\mathcal Y_0\) rather than integrating over \(\mathcal Y_0\). They yield
the pointwise stability bounds indexed by
\(\lambda_{\rm geom}(y)\) for DIS and
\(\lambda_{s,\rm geom}(y)\) for DSS in the main text.

\section{Intrinsic Score Learning Regimes}
\label{appsec:intrinsic-score-learning}

The geometry-adaptive gain is preserved only when
\(\rho_{\theta,n_{\rm geom}}\) is governed by intrinsic rather than ambient
complexity. Recall that \(\rho_{\theta,n_{\rm geom}}\) denotes a rate in the
geometry-learning norm required for flow stability:
\(\mathcal S_h^{(1)}\) for DIS and \(\mathcal S_h^{(2)}\) for DSS. The examples
below illustrate structural regimes in which such rates may depend on intrinsic
complexity; they do not derive the required strong-norm rates from ordinary
score-estimation rates.

Throughout this section,
\(\widetilde O_\Pb(r_N)\) denotes
\(O_\Pb\{r_N(\log N)^c\}\) for some fixed \(c\ge0\); thus the notation
suppresses only polylogarithmic factors in the relevant sample size
\(N=n_{\rm geom}\).

\textbf{Low-dimensional support.}
Suppose that, for each \(t\in\mathcal T_h\), the diffusion-perturbed
counterfactual law \(\Pb_{a,t}\) is concentrated in a tubular neighborhood of an
\(m\)-dimensional \(C^2\) submanifold \(\mathcal M_t\subset\mathbb R^d\), with
\(m\ll d\). Let \(\mathcal U_t\) denote the relevant tubular neighborhood and let
\(\tau_t:\mathcal U_t\to\mathbb R^m\) be a local coordinate map. Assume that the tangential component of the diffusion score
admits a representation
\[
s_{\theta_0}^{\rm diff}(z,t)
=
S_t(\tau_t(z))
+
r_t(z),
\quad
\|r_t\|_{L_2(\Pb_{a,t})}
\le
\epsilon_{\mathrm{nor}},
\]
where \(S_t\) is \(\alpha\)-smooth uniformly over \(t\in\mathcal T_h\), and the
normal residual \(\epsilon_{\mathrm{nor}}\) is negligible relative to the
statistical rate. If, under this structure, a learner attains the corresponding rate in the
required geometry-learning norm, then
\[
\rho_{\theta,n_{\rm geom}}
=
\widetilde O_\Pb\!\left(
n_{\rm geom}^{-\alpha/(2\alpha+m)}
\right),
\]
rather than the ambient rate
\(n_{\rm geom}^{-\alpha/(2\alpha+d)}\). Under the pointwise drift condition, the
squared geometry drift is then
\[
\widetilde O_\Pb\!\left(
h^{-2\lambda_{\rm geom}(y)}
n_{\rm geom}^{-2\alpha/(2\alpha+m)}
\right).
\]
This type of intrinsic scaling is consistent with recent analyses of
score-based diffusion models under low-dimensional subspace or manifold
structure.

\textbf{Block structure.}
Suppose \(y=(y_{S_1},\ldots,y_{S_B})\) with disjoint blocks \(S_b\) of sizes
\(d_b\), and assume the diffusion score decomposes as
\[
s_{\theta_0}^{\rm diff}(y,t)
=
\{s_{\theta_0,b}^{\rm diff}(y_{S_b},t): b=1,\ldots,B\},
\]
with \(s_{\theta_0,b}^{\rm diff}(\cdot,t)\) belonging to an
\(\alpha_b\)-smooth class on \(\mathbb R^{d_b}\), uniformly over
\(t\in\mathcal T_h\). If the block decomposition is known or recovered with negligible error and the
learner attains the corresponding blockwise rate in the required
geometry-learning norm, then
\[
\rho_{\theta,n_{\rm geom}}^2
=
\widetilde O_\Pb\!\left(
\sum_{b=1}^B
n_{\rm geom}^{-2\alpha_b/(2\alpha_b+d_b)}
\right).
\]
For fixed \(B\), this rate is governed by the hardest block rather than by the
full ambient dimension \(d=\sum_{b=1}^B d_b\). If the kernel perturbation is also
block adaptive, then the pointwise amplification exponent
\(\lambda_{\rm geom}(y)\) may likewise depend on the largest active block rather
than on the ambient dimension.

\textbf{Graphical sparsity.}
Suppose the score field has a local Markov structure: for each coordinate \(j\),
\[
s_{\theta_0,j}^{\rm diff}(y,t)
=
S_{j,t}(y_{N(j)}),
\quad
|N(j)|\le d_0\ll d,
\]
where \(S_{j,t}\) is \(\alpha\)-smooth uniformly over \(t\in\mathcal T_h\). If the neighborhoods \(N(j)\) are known or selected consistently and a sparse
learner attains the corresponding rate in the required geometry-learning norm,
then
\[
\rho_{\theta,n_{\rm geom}}^2
=
\widetilde O_\Pb\!\left(
s_d\,n_{\rm geom}^{-2\alpha/(2\alpha+d_0)}
\right),
\]
where \(s_d\) denotes the effective number or complexity of active local
components. Thus the nonparametric part of the score learning error is governed
by the local neighborhood size \(d_0\), rather than by the ambient dimension
\(d\). The ambient dimension enters only through the combinatorial or sparsity
complexity \(s_d\).

\textbf{Low-rank local curvature.}
Suppose that, locally around the evaluation region, score variation occurs
through an \(r\)-dimensional active subspace. More precisely, assume that there exists \(P_t\in\mathbb R^{r\times d}\) satisfying
\(P_tP_t^\top=I_r\) such that
\[
s_{\theta_0}^{\rm diff}(y,t)
=
P_t^\top S_t(P_t y)+r_t(y)
\quad
\|r_t\|_{L_2(\Pb_{a,t})}
\le
\epsilon_{\mathrm{lr}},
\]
where \(S_t:\mathbb R^r\to\mathbb R^r\) is \(\alpha\)-smooth and
\(\epsilon_{\mathrm{lr}}\) is negligible. Equivalently, the Jacobian of the
score field or the local curvature of the probability flow has rank at most
\(r\) in the directions that affect the transported kernel. If, under this structure, a learner attains the corresponding rate in the
required geometry-learning norm, then
\[
\rho_{\theta,n_{\rm geom}}
=
\widetilde O_\Pb\!\left(
n_{\rm geom}^{-\alpha/(2\alpha+r)}
\right),
\quad r\ll d.
\]
If the kernel perturbation is concentrated along these active directions, the
pointwise geometry drift scales as
\[
\widetilde O_\Pb\!\left(
h^{-\lambda_{\rm geom}(y)}
n_{\rm geom}^{-\alpha/(2\alpha+r)}
\right).
\]
In the ideal active-subspace case
\(\lambda_{\rm geom}(y)=r/2\), this becomes
\[
\widetilde O_\Pb\!\left(
h^{-r/2}
n_{\rm geom}^{-\alpha/(2\alpha+r)}
\right).
\]

These are conditional rate illustrations, not primitive rate results.
Existing diffusion-learning theory supports intrinsic dependence for score
approximation, estimation, or distribution recovery, but does not by itself
establish the \(\mathcal S_h^{(1)}\) or \(\mathcal S_h^{(2)}\) rates required
here
\citep[e.g.,][]{pidstrigach2022score,chen2023score,tang2024adaptivity}.

For DIS, the learned-geometry contribution is
\(h^{-\Lambda_{\rm geom}(\mathcal Y_0)}
\rho_{\theta,n_{\rm geom}}^2\) in integrated risk and
\(h^{-2\lambda_{\rm geom}(y)}\rho_{\theta,n_{\rm geom}}^2\) pointwise. For DSS,
the corresponding terms use \(\Lambda_{s,\rm geom}(\mathcal Y_0)\) and
\(\lambda_{s,\rm geom}(y)\), with
\(\rho_{\theta,n_{\rm geom}}\) interpreted in the stronger
\(\mathcal S_h^{(2)}\) norm. These terms should be compared with the oracle
smoothing variance and bias.

\section{Proofs}

\textbf{Cross-fitting convention.}
Under \(K\)-fold cross fitting, each sample-splitting argument below is applied
separately on evaluation fold \(k\), conditional on \(\mathcal F_{-k}\), and
the resulting fold-specific statements are then aggregated over the fixed
number \(K\) of folds. Within a generic fold, \(\Pn\),
\(\widehat\theta\), and the fitted nuisance functions denote fold-specific
quantities, with fold indices suppressed. Whenever an identity combines all folds, empirical and population terms,
targets, and remainders involving fitted quantities denote their corresponding
fold-weighted aggregates. Within a generic fold, all expectations and
\(L_2\) norms involving fitted quantities are understood conditionally on
\(\mathcal F_{-k}\). No argument conditions jointly on all fold-specific
training samples. Since \(n_k\asymp n\), this convention does not alter any
stated rate.

\subsection{Section \ref{subsec:localization}}

\subsubsection{Proof of Lemma \ref{lem:kappa-basic}}

\begin{proof}
We condition on a value of \(Y\) for which the stated assumptions hold, and
suppress this conditioning throughout.

\emph{(i) Normalization.}
Write \(\Phi=\Phi_{\varepsilon_h,\theta_0}\),
\(q(v)=q_{\varepsilon_h}(v\mid Y)\), and
\(J(u)=|\det D\Phi^{-1}(u)|\). Then
\[
\kappa_{h,\theta_0}(u;Y)
=
q\{\Phi^{-1}(u)\}J(u).
\]
Since \(\Phi\) is a \(C^1\) diffeomorphism from \(\mathbb R^d\) onto
\(\mathbb R^d\), the change of variables \(v=\Phi^{-1}(u)\), equivalently
\(u=\Phi(v)\), gives \(du=|\det D\Phi(v)|dv\) and
\(J\{\Phi(v)\}=|\det D\Phi(v)|^{-1}\). Hence
\[
\begin{aligned}
\int_{\mathbb R^d}\kappa_{h,\theta_0}(u;Y)\,du
&=
\int_{\mathbb R^d}
q\{\Phi^{-1}(u)\}
|\det D\Phi^{-1}(u)|\,du \\
&=
\int_{\mathbb R^d}
q(v)
|\det D\Phi(v)|^{-1}
|\det D\Phi(v)|\,dv \\
&=
\int_{\mathbb R^d}q(v)\,dv
=
1 .
\end{aligned}
\]

\emph{(ii) Local ellipsoid volume.}
Since \(G_h(Y)\succ0\), let \(G_h(Y)^{1/2}\) be its symmetric positive
definite square root. Set
\[
v=G_h(Y)^{1/2}\{u-m_h(Y)\}.
\]
Then
\[
\{u-m_h(Y)\}^{\top}G_h(Y)\{u-m_h(Y)\}
=
\|v\|_2^2,
\]
and
\[
u=m_h(Y)+G_h(Y)^{-1/2}v,
\quad
du=\det\{G_h(Y)\}^{-1/2}dv.
\]
Thus \(u\in E_h(Y;c)\) if and only if \(v\in \sqrt c\,B_d\). Therefore
\[
\begin{aligned}
\mathrm{Vol}\{E_h(Y;c)\}
&=
\int_{\sqrt c\,B_d}
\det\{G_h(Y)\}^{-1/2}\,dv  \\
&=
\det\{G_h(Y)\}^{-1/2}\mathrm{Vol}(\sqrt c\,B_d) \\
&=
\mathrm{Vol}(B_d)c^{d/2}
\det\{G_h(Y)\}^{-1/2}.
\end{aligned}
\]

\emph{(iii) Kernel \(L_2\) size.}
By Assumption~\ref{assump:subgauss-kappa},
\[
\kappa_{h,\theta_0}(u;Y)^2
\le
C_0^2
\det\{G_h(Y)\}
\exp\{-2c_0\|u-m_h(Y)\|_{G_h(Y)}^2\}.
\]
Using the same change of variables
\(v=G_h(Y)^{1/2}\{u-m_h(Y)\}\), we obtain
\[
\begin{aligned}
\int_{\mathbb R^d}\kappa_{h,\theta_0}(u;Y)^2\,du
&\le
C_0^2
\det\{G_h(Y)\}
\int_{\mathbb R^d}
\exp\{-2c_0\|u-m_h(Y)\|_{G_h(Y)}^2\}\,du \\
&=
C_0^2
\det\{G_h(Y)\}^{1/2}
\int_{\mathbb R^d}
\exp(-2c_0\|v\|_2^2)\,dv \\
&=
C_0^2
\left(\frac{\pi}{2c_0}\right)^{d/2}
\det\{G_h(Y)\}^{1/2}.
\end{aligned}
\]
Thus the result holds with
\[
C=C_0^2\left(\frac{\pi}{2c_0}\right)^{d/2}.
\]
\end{proof}

\subsection{Section \ref{subsec:asymptotics-risk}}

\subsubsection{Proof of Lemma \ref{lem:fixed-geometry-expansion}}

\begin{proof}
Fix \(h>0\) and \(y\in\mathbb R^d\). For notational simplicity, write
\[
\kappa_0(Y)=\kappa_{h,\theta_0}(y;Y),
\quad
\mu_0(X)=\mu_{a,h,\theta_0}(X;y),
\quad
\widehat\mu(X)=\widehat\mu_{a,h,\theta_0}(X;y),
\]
and \(\pi(X)=\pi_a(X)\), \(\widehat\pi(X)=\widehat\pi_a(X)\). Let
\begin{align*}  
\varphi_0(Z)
&=
\frac{\mathbbm 1\{A=a\}}{\pi(X)}
\{\kappa_0(Y)-\mu_0(X)\}
+
\mu_0(X),\\
\widehat\varphi(Z)
&=
\frac{\mathbbm 1\{A=a\}}{\widehat\pi(X)}
\{\kappa_0(Y)-\widehat\mu(X)\}
+
\widehat\mu(X).
\end{align*}
Then \(\varphi_0(Z)=\varphi^{\mathrm{geo}}_h(Z;y,\eta_0)\), \(\widehat\varphi(Z)=\varphi^{\mathrm{geo}}_h(Z;y,\widehat\eta^0)\), and $\widehat p^{\mathrm{geo},0}_{a,h}(y)=\Pn\widehat\varphi$.

Moreover, since \(\E\{\kappa_0(Y)\mid X,A=a\}=\mu_0(X)\),
\[
\Pb\varphi_0
=
\E\{\mu_0(X)\}
=
p^{\mathrm{geo}}_{a,h}(y).
\]
Hence we get the decomposition:
\[
\widehat p^{\mathrm{geo},0}_{a,h}(y)-p^{\mathrm{geo}}_{a,h}(y)
=
(\Pn-\Pb)\varphi_0
+
\Pb(\widehat\varphi-\varphi_0)
+
(\Pn-\Pb)(\widehat\varphi-\varphi_0).
\]

We first compute the drift term. Taking conditional expectations given \(X\),
\[
\Pb\widehat\varphi
=
\Pb\left[
\frac{\pi(X)}{\widehat\pi(X)}
\{\mu_0(X)-\widehat\mu(X)\}
+
\widehat\mu(X)
\right],
\]
and simple algebra yields
\[
\Pb(\widehat\varphi-\varphi_0)
=
\Pb\left[
\left\{1-\frac{\pi(X)}{\widehat\pi(X)}\right\}
\{\widehat\mu(X)-\mu_0(X)\}
\right].
\]
Define
\[
R_{\pi,\mu}^0(y)
=
\Pb\left[
\left\{1-\frac{\pi(X)}{\widehat\pi(X)}\right\}
\{\widehat\mu(X)-\mu_0(X)\}
\right].
\]
On the event \(\inf_x\widehat\pi(x)\ge \pi_{\min}/2\),
\[
\left|1-\frac{\pi(X)}{\widehat\pi(X)}\right|
=
\frac{|\widehat\pi(X)-\pi(X)|}{\widehat\pi(X)}
\le
\frac{2}{\pi_{\min}}
|\widehat\pi(X)-\pi(X)|.
\]
By Cauchy Schwarz,
\[
|R_{\pi,\mu}^0(y)|
\le
\frac{2}{\pi_{\min}}
\|\widehat\pi_a-\pi_a\|_{L_2(\Pb)}
\,
\|\widehat\mu_{a,h,\theta_0}(\cdot;y)-\mu_{a,h,\theta_0}(\cdot;y)\|_{L_2(\Pb)}
\lesssim
\delta_\pi\,\delta_\mu^0(y).
\]

It remains to control the empirical process remainder. By the stated consistency conditions, positivity, and square integrability in Assumption~\ref{assumption:boundedness},
\[
\|\widehat\varphi-\varphi_0\|_{L_2(\Pb)}
=
o_\Pb(1).
\]
By the foldwise cross-fitting convention and
\citet[][Lemma~2]{kennedy2020sharp},
\[
(\Pn-\Pb)(\widehat\varphi-\varphi_0)
=
O_\Pb\left(
n^{-1/2}
\|\widehat\varphi-\varphi_0\|_{L_2(\Pb)}
\right)
=
o_\Pb(n^{-1/2}).
\]
Under the Donsker alternative in Assumption~\ref{assumption:sample-splitting}, the same conclusion follows by stochastic equicontinuity \citep[][Chapter 2]{van1996weak}.

Combining the preceding displays gives
\[
\widehat p^{\mathrm{geo},0}_{a,h}(y)
-
p^{\mathrm{geo}}_{a,h}(y)
=
(\Pn-\Pb)\{\varphi^{\mathrm{geo}}_h(Z;y,\eta_0)\}
+
R_{\pi,\mu}^0(y)
+
o_\Pb(n^{-1/2}),
\]
with
\[
|R_{\pi,\mu}^0(y)|
\lesssim
\delta_\pi\delta_\mu^0(y),
\]
\end{proof}

\subsubsection{Proof of Theorem \ref{thm:onestep-expansion}}

\begin{proof}

We use the foldwise cross-fitting convention stated above.

Fix \(h>0\) and \(y\in\mathbb R^d\). For notational simplicity, write
\[
\kappa_{\widehat\theta}(Y)=\kappa_{h,\widehat\theta}(y;Y),
\quad
\kappa_0(Y)=\kappa_{h,\theta_0}(y;Y),
\]
\[
\mu_{\widehat\theta}(X)=\mu_{a,h,\widehat\theta}(X;y),
\quad
\widehat\mu_{\widehat\theta}(X)=\widehat\mu_{a,h,\widehat\theta}(X;y),
\quad
\mu_0(X)=\mu_{a,h,\theta_0}(X;y),
\]
and \(\pi(X)=\pi_a(X)\), \(\widehat\pi(X)=\widehat\pi_a(X)\). Also define
\[
\eta_{\widehat\theta}
=
\{\pi_a,\mu_{a,h,\widehat\theta},\kappa_{h,\widehat\theta}\},
\quad
\widehat\eta_{\widehat\theta}
=
\{\widehat\pi_a,\widehat\mu_{a,h,\widehat\theta},\kappa_{h,\widehat\theta}\}.
\]
Let
\[
\varphi_{\widehat\theta}(Z;y)
=
\varphi_h^{\mathrm{geo}}(Z;y,\eta_{\widehat\theta})
=
\frac{\mathbbm 1\{A=a\}}{\pi(X)}
\{\kappa_{\widehat\theta}(Y)-\mu_{\widehat\theta}(X)\}
+
\mu_{\widehat\theta}(X),
\]
\[
\widehat\varphi_{\widehat\theta}(Z;y)
=
\varphi_h^{\mathrm{geo}}(Z;y,\widehat\eta_{\widehat\theta})
=
\frac{\mathbbm 1\{A=a\}}{\widehat\pi(X)}
\{\kappa_{\widehat\theta}(Y)-\widehat\mu_{\widehat\theta}(X)\}
+
\widehat\mu_{\widehat\theta}(X),
\]
and
\[
\varphi_0(Z;y)
=
\varphi_h^{\mathrm{geo}}(Z;y,\eta_0)
=
\frac{\mathbbm 1\{A=a\}}{\pi(X)}
\{\kappa_0(Y)-\mu_0(X)\}
+
\mu_0(X).
\]
Then
\[
\widehat p^{\mathrm{geo}}_{a,h}(y)
=
\Pn \widehat\varphi_{\widehat\theta}.
\]
Moreover, by the definition of the localized regressions,
\[
\Pb\varphi_{\widehat\theta}
=
p^{\mathrm{geo}}_{a,h,\widehat\theta}(y),
\quad
\Pb\varphi_0
=
p^{\mathrm{geo}}_{a,h,\theta_0}(y)
=
p^{\mathrm{geo}}_{a,h}(y).
\]

By adding and subtracting \(\varphi_{\widehat\theta}\) and \(\varphi_0\), we obtain
\[
\begin{aligned}
\widehat p^{\mathrm{geo}}_{a,h}(y)-p^{\mathrm{geo}}_{a,h}(y)
&=
(\Pn-\Pb)\varphi_0 \\
&\quad+
(\Pn-\Pb)(\varphi_{\widehat\theta}-\varphi_0) \\
&\quad+
(\Pn-\Pb)(\widehat\varphi_{\widehat\theta}-\varphi_{\widehat\theta}) \\
&\quad+
\Pb(\widehat\varphi_{\widehat\theta}-\varphi_{\widehat\theta}) \\
&\quad+
\{\Pb\varphi_{\widehat\theta}-\Pb\varphi_0\}.
\end{aligned}
\]
Note that the last term is
\[
\Pb\varphi_{\widehat\theta}-\Pb\varphi_0
=
p^{\mathrm{geo}}_{a,h,\widehat\theta}(y)
-
p^{\mathrm{geo}}_{a,h,\theta_0}(y)
=
B_{\theta,h}(y).
\]

We now analyze the second, third, and fourth terms in turn.

First consider the geometry empirical term
\((\Pn-\Pb)(\varphi_{\widehat\theta}-\varphi_0)\). Let
\[
\Delta_\kappa(Y)=\kappa_{\widehat\theta}(Y)-\kappa_0(Y),
\quad
\Delta_\mu(X)=\mu_{\widehat\theta}(X)-\mu_0(X).
\]
Then
\[
\varphi_{\widehat\theta}(Z)-\varphi_0(Z)
=
\frac{\mathbbm 1\{A=a\}}{\pi(X)}
\{\Delta_\kappa(Y)-\Delta_\mu(X)\}
+
\Delta_\mu(X),
\]
where, by the foldwise convention,
\(\Delta_\mu(X)=\E\{\Delta_\kappa(Y)\mid X,A=a,\mathcal F_{-k}\}\).
Thus Jensen's inequality implies
\[
\|\varphi_{\widehat\theta}-\varphi_0\|_{L_2(\Pb)}
\lesssim
\|\kappa_{h,\widehat\theta}(y;Y^a)-\kappa_{h,\theta_0}(y;Y^a)\|_{L_2(\Pb_a)}
=
o_\Pb(1).
\]
Hence, under Assumption~\ref{assumption:sample-splitting},
\[
(\Pn-\Pb)(\varphi_{\widehat\theta}-\varphi_0)
=
O_\Pb\!\left(
n^{-1/2}
\|\varphi_{\widehat\theta}-\varphi_0\|_{L_2(\Pb)}
\right)
=
o_\Pb(n^{-1/2}),
\]
which follows by the sample splitting lemma
\citep[e.g.,][Lemma~2]{kennedy2020sharp}. Under the Donsker alternative, the same conclusion follows by stochastic equicontinuity
\citep[][Chapter~2]{van1996weak}.

Next consider the empirical nuisance term
\((\Pn-\Pb)(\widehat\varphi_{\widehat\theta}-\varphi_{\widehat\theta})\). On the event
\(\inf_x\widehat\pi_a(x)\ge \pi_{\min}/2\),
\begin{align*}
\widehat\varphi_{\widehat\theta}(Z;y)-\varphi_{\widehat\theta}(Z;y)
&=
\mathbbm 1\{A=a\}
\left\{\frac{\pi_a(X)-\widehat\pi_a(X)}{\widehat\pi_a(X)\,\pi_a(X)}\right\}
\{\kappa_{h,\widehat\theta}(y;Y)-\mu_{a,h,\widehat\theta}(X;y)\} \\
&\quad+
\left\{1-\frac{\mathbbm 1\{A=a\}}{\widehat\pi_a(X)}\right\}
\{\widehat\mu_{a,h,\widehat\theta}(X;y)-\mu_{a,h,\widehat\theta}(X;y)\}.
\end{align*}
Therefore, both terms in RHS are \(o_\Pb(1)\) in \(L_2(\Pb)\) by the given conditions \(\delta_\pi=o_\Pb(1)\), \(\delta_\mu^{\widehat\theta}(y)=o_\Pb(1)\), and the foldwise conditional square integrability of the kernel residual. Thus
\[
\|\widehat\varphi_{\widehat\theta}-\varphi_{\widehat\theta}\|_{L_2(\Pb)}
=
o_\Pb(1).
\]
Under Assumption~\ref{assumption:sample-splitting}, applying the sample splitting lemma
\citep[e.g.,][Lemma~2]{kennedy2020sharp} or the Donsker alternative with stochastic equicontinuity argument gives
\[
(\Pn-\Pb)(\widehat\varphi_{\widehat\theta}-\varphi_{\widehat\theta})
=
o_\Pb(n^{-1/2}).
\]

Finally, consider the mean nuisance term
\(\Pb(\widehat\varphi_{\widehat\theta}-\varphi_{\widehat\theta})\). Since, within fold \(k\),
\(\E\{\kappa_{\widehat\theta}(Y)\mid X,A=a,\mathcal F_{-k}\}
=
\mu_{\widehat\theta}(X)\), it follows
\[
R_{\pi,\mu}^{\widehat\theta}(y)
=
\Pb(\widehat\varphi_{\widehat\theta}-\varphi_{\widehat\theta})
=
\Pb\left[
\left\{1-\frac{\pi(X)}{\widehat\pi(X)}\right\}
\{\widehat\mu_{\widehat\theta}(X)-\mu_{\widehat\theta}(X)\}
\right].
\]
On the event \(\inf_x\widehat\pi_a(x)\ge \pi_{\min}/2\),
\[
\left|1-\frac{\pi(X)}{\widehat\pi(X)}\right|
\le
\frac{2}{\pi_{\min}}|\widehat\pi(X)-\pi(X)|.
\]
Therefore, by Cauchy Schwarz,
\[
|R_{\pi,\mu}^{\widehat\theta}(y)|
\lesssim
\|\widehat\pi_a-\pi_a\|_{L_2(\Pb)}
\,
\|\widehat\mu_{a,h,\widehat\theta}(\cdot;y)-\mu_{a,h,\widehat\theta}(\cdot;y)\|_{L_2(\Pb)}
=
\delta_\pi\,\delta_\mu^{\widehat\theta}(y).
\]

Bringing it all together, the final decomposition is
\[
\widehat p^{\mathrm{geo}}_{a,h}(y)
-
p^{\mathrm{geo}}_{a,h}(y)
=
(\Pn-\Pb)\{\varphi^{\mathrm{geo}}_h(Z;y,\eta_0)\}
+
B_{\theta,h}(y)
+
R_{\pi,\mu}^{\widehat\theta}(y)
+
o_\Pb(n^{-1/2}),
\]
with $|R_{\pi,\mu}^{\widehat\theta}(y)|
\lesssim
\delta_\pi\delta_\mu^{\widehat\theta}(y)$, which gives the desired result.
\end{proof}

\subsubsection{Proof of Theorem \ref{thm:dis-risk-global}}

\begin{proof}
Throughout the proof, constants may depend on \(\pi_{\min}\) and
\(\mathrm{Vol}(\mathcal Y_0)\), but not on \(n\). Any dependence on \(h\) is
displayed explicitly through \(H_h(\mathcal Y_0)\) and, for the learned-geometry
estimator, through \(\int_{\mathcal Y_0}B_{\theta,h}(y)^2dy\).

First consider the fixed-geometry estimator
\(\widehat p^{\mathrm{geo},0}_{a,h}\). Letting
\[
G_n(y)
=
(\Pn-\Pb)\{\varphi_h^{\mathrm{geo}}(Z;y,\eta_0)\},
\]
Under cross fitting, let
\(\mathbb P_{n,k}f=n_k^{-1}\sum_{i\in\mathcal I_k}f(Z_i)\) and
\(\widehat\eta^{0,(-k)}
=
\{\widehat\pi_a^{(-k)},
\widehat\mu_{a,h,\theta_0}^{(-k)},
\kappa_{h,\theta_0}\}\). Define
\[
e_{n,k}^0(y)
=
(\mathbb P_{n,k}-\Pb)
\left[
\varphi_h^{\mathrm{geo}}(Z;y,\widehat\eta^{0,(-k)})
-
\varphi_h^{\mathrm{geo}}(Z;y,\eta_0)
\right],
\quad
e_n^0(y)
=
\sum_{k=1}^K\frac{n_k}{n}e_{n,k}^0(y).
\]
Under the empirical-process alternative without sample splitting,
\(e_n^0\) retains its original unsplit definition.
\[
R_{\pi,\mu}^0(y)
=
\Pb
\left[
\varphi_h^{\mathrm{geo}}(Z;y,\widehat\eta^0)
-
\varphi_h^{\mathrm{geo}}(Z;y,\eta_0)
\right],
\]
we have the decomposition
\[
\widehat p^{\mathrm{geo},0}_{a,h}(y)
-
p^{\mathrm{geo}}_{a,h}(y)
=
G_n(y)
+
R_{\pi,\mu}^0(y)
+
e_n^0(y).
\]

We first control the empirical process term \(e_n^0\). On the event
\(\inf_x\widehat\pi_a(x)\ge\pi_{\min}/2\),
\[
\begin{aligned}
&\varphi_h^{\mathrm{geo}}(Z;y,\widehat\eta^0)
-
\varphi_h^{\mathrm{geo}}(Z;y,\eta_0) \\
&\quad =
\mathbbm 1\{A=a\}
\left\{
\frac{\pi_a(X)-\widehat\pi_a(X)}
{\widehat\pi_a(X)\pi_a(X)}
\right\}
\{\kappa_{h,\theta_0}(y;Y)-\mu_{a,h,\theta_0}(X;y)\} \\
&\quad+
\left\{1-\frac{\mathbbm 1\{A=a\}}{\widehat\pi_a(X)}\right\}
\{\widehat\mu_{a,h,\theta_0}(X;y)-\mu_{a,h,\theta_0}(X;y)\}.
\end{aligned}
\]
Using positivity, Jensen's inequality, and the assumed integrated nuisance
consistency, we obtain
\[
\begin{aligned}
&\int_{\mathcal Y_0}
\left\|
\varphi_h^{\mathrm{geo}}(\cdot;y,\widehat\eta^0)
-
\varphi_h^{\mathrm{geo}}(\cdot;y,\eta_0)
\right\|_{L_2(\Pb)}^2dy \\
&\quad\lesssim
\|\widehat\pi_a-\pi_a\|_{L_2(\Pb)}^2
\sup_x
\E\!\left[
\int_{\mathcal Y_0}\kappa_{h,\theta_0}(y;Y)^2dy
\,\middle|\,
X=x,A=a
\right]  \\
&\quad+
\int_{\mathcal Y_0}
\|\widehat\mu_{a,h,\theta_0}(\cdot;y)
-
\mu_{a,h,\theta_0}(\cdot;y)\|_{L_2(\Pb)}^2dy
=
o_\Pb(1).
\end{aligned}
\]
For each fold \(k\), conditional on \(\mathcal F_{-k}\), the observations in
\(\mathcal I_k\) are independent of \(\widehat\eta^{0,(-k)}\). Hence
\[
\begin{aligned}
\E\left[
\int_{\mathcal Y_0}\{e_{n,k}^0(y)\}^2dy
\,\middle|\,
\mathcal F_{-k}
\right]
&\le
\frac{1}{n_k}
\int_{\mathcal Y_0}
\left\|
\varphi_h^{\mathrm{geo}}(\cdot;y,\widehat\eta^{0,(-k)})
-
\varphi_h^{\mathrm{geo}}(\cdot;y,\eta_0)
\right\|_{L_2(\Pb)}^2dy \\
&=
o_\Pb(n^{-1}),
\end{aligned}
\]
where \(n_k\asymp n\). The conditional Markov argument therefore gives
\(\int_{\mathcal Y_0}\{e_{n,k}^0(y)\}^2dy=o_\Pb(n^{-1})\) for every \(k\).
Since \(K\) is fixed, Jensen's inequality yields
\[
\int_{\mathcal Y_0}e_n^0(y)^2dy
\le
\sum_{k=1}^K\frac{n_k}{n}
\int_{\mathcal Y_0}\{e_{n,k}^0(y)\}^2dy
=
o_\Pb(n^{-1}).
\]

Under the Donsker alternative in Assumption~\ref{assumption:sample-splitting},
the same conclusion follows from integrated stochastic equicontinuity
\citep[Chapter~2]{van1996weak}.

Consequently,
\[
\int_{\mathcal Y_0}
\{\widehat p^{\mathrm{geo},0}_{a,h}(y)-p^{\mathrm{geo}}_{a,h}(y)\}^2dy
\lesssim
\int_{\mathcal Y_0}G_n(y)^2dy
+
\int_{\mathcal Y_0}R_{\pi,\mu}^0(y)^2dy
+
o_\Pb(n^{-1}).
\]
Since \(\mathcal Y_0\) has finite Lebesgue measure,
\[
\int_{\mathcal Y_0}R_{\pi,\mu}^0(y)^2dy
\lesssim
\operatorname*{ess\,sup}_{y\in\mathcal Y_0}
\{R_{\pi,\mu}^0(y)\}^2.
\]

It remains to bound the stochastic term. For fixed \(y\), write
\[
\kappa_0(y;Y)=\kappa_{h,\theta_0}(y;Y),
\quad
\mu_0(X;y)=\mu_{a,h,\theta_0}(X;y).
\]
Then
\[
\varphi_h^{\mathrm{geo}}(Z;y,\eta_0)
=
\frac{\mathbbm 1\{A=a\}}{\pi_a(X)}
\{\kappa_0(y;Y)-\mu_0(X;y)\}
+
\mu_0(X;y).
\]
Since
\[
\E\{\kappa_0(y;Y)\mid X,A=a\}
=
\mu_0(X;y),
\]
we have
\[
\E\{\varphi_h^{\mathrm{geo}}(Z;y,\eta_0)^2\mid X\}
=
\frac{1}{\pi_a(X)}
\E\!\left[
\{\kappa_0(y;Y)-\mu_0(X;y)\}^2
\mid X,A=a
\right]
+
\mu_0(X;y)^2.
\]
By Jensen's inequality and positivity,
\[
\E\{\varphi_h^{\mathrm{geo}}(Z;y,\eta_0)^2\}
\lesssim
\E\{\kappa_{h,\theta_0}(y;Y^a)^2\}.
\]
Now, by Fubini's theorem and independence of observations,
\[
\begin{aligned}
\E\left[
\int_{\mathcal Y_0}G_n(y)^2dy
\right]
&=
\frac{1}{n}
\int_{\mathcal Y_0}
\var\{\varphi_h^{\mathrm{geo}}(Z;y,\eta_0)\}dy \\
&\le
\frac{1}{n}
\int_{\mathcal Y_0}
\E\{\varphi_h^{\mathrm{geo}}(Z;y,\eta_0)^2\}dy \\
&\lesssim
\frac{1}{n}
\E\left[
\int_{\mathcal Y_0}
\kappa_{h,\theta_0}(y;Y^a)^2dy
\right] \\
&=
\frac{H_h(\mathcal Y_0)}{n}.
\end{aligned}
\]
Hence
\[
\int_{\mathcal Y_0}G_n(y)^2dy
=
O_\Pb\!\left\{\frac{H_h(\mathcal Y_0)}{n}\right\}.
\]
Combining the preceding displays gives
\[
\int_{\mathcal Y_0}
\{\widehat p^{\mathrm{geo},0}_{a,h}(y)-p^{\mathrm{geo}}_{a,h}(y)\}^2dy
=
O_\Pb\!\left[
\frac{H_h(\mathcal Y_0)}{n}
+
\operatorname*{ess\,sup}_{y\in\mathcal Y_0}
\{R_{\pi,\mu}^0(y)\}^2
\right]
+
o_\Pb(n^{-1}),
\]
which gives the first result.

Now consider the proposed DIS estimator with learned geometry. For notational
simplicity, write
\[
\eta_{\widehat\theta}
=
\{\pi_a,\mu_{a,h,\widehat\theta},\kappa_{h,\widehat\theta}\},
\quad
\widehat\eta_{\widehat\theta}
=
\{\widehat\pi_a,\widehat\mu_{a,h,\widehat\theta},\kappa_{h,\widehat\theta}\},
\]
and define
\[
\varphi_{\widehat\theta}(Z;y)
=
\varphi_h^{\mathrm{geo}}(Z;y,\eta_{\widehat\theta}),
\quad
\widehat\varphi_{\widehat\theta}(Z;y)
=
\varphi_h^{\mathrm{geo}}(Z;y,\widehat\eta_{\widehat\theta}).
\]
Then
\[
\widehat p^{\mathrm{geo}}_{a,h}(y)
=
\Pn\widehat\varphi_{\widehat\theta}(Z;y),
\quad
\Pb\varphi_{\widehat\theta}(Z;y)
=
p^{\mathrm{geo}}_{a,h,\widehat\theta}(y),
\]
and
\[
\Pb\varphi_h^{\mathrm{geo}}(Z;y,\eta_0)
=
p^{\mathrm{geo}}_{a,h}(y)
=
p^{\mathrm{geo}}_{a,h,\theta_0}(y).
\]
Adding and subtracting
\(\Pn\varphi_{\widehat\theta}(Z;y)\),
\(\Pn\varphi_h^{\mathrm{geo}}(Z;y,\eta_0)\), and
\(\Pb\varphi_{\widehat\theta}(Z;y)\), we obtain
\[
\widehat p^{\mathrm{geo}}_{a,h}(y)
-
p^{\mathrm{geo}}_{a,h}(y)
=
G_n(y)
+
B_{\theta,h}(y)
+
R_{\pi,\mu}^{\widehat\theta}(y)
+
e_n^{\widehat\theta}(y),
\]
where
\[
G_n(y)
=
(\Pn-\Pb)\{\varphi_h^{\mathrm{geo}}(Z;y,\eta_0)\},
\]
\[
B_{\theta,h}(y)
=
\Pb\varphi_{\widehat\theta}(Z;y)
-
\Pb\varphi_h^{\mathrm{geo}}(Z;y,\eta_0),
\]
\[
R_{\pi,\mu}^{\widehat\theta}(y)
=
\Pb\{\widehat\varphi_{\widehat\theta}(Z;y)-\varphi_{\widehat\theta}(Z;y)\},
\]
and
\[
e_n^{\widehat\theta}(y)
=
(\Pn-\Pb)
\{\widehat\varphi_{\widehat\theta}(Z;y)-\varphi_h^{\mathrm{geo}}(Z;y,\eta_0)\}.
\]
Equivalently, by writing
\[
e_{n,1}^{\widehat\theta}(y)
=
(\Pn-\Pb)
\{\varphi_{\widehat\theta}(Z;y)-\varphi_h^{\mathrm{geo}}(Z;y,\eta_0)\},
\]
\[
e_{n,2}^{\widehat\theta}(y)
=
(\Pn-\Pb)
\{\widehat\varphi_{\widehat\theta}(Z;y)-\varphi_{\widehat\theta}(Z;y)\},
\]
we decompose the empirical process term as
\[
e_n^{\widehat\theta}(y)
=
e_{n,1}^{\widehat\theta}(y)
+
e_{n,2}^{\widehat\theta}(y).
\]

We next show that \(e_n^{\widehat\theta}\) is negligible in integrated mean
square. Applying the foldwise conditional-variance argument of the
fixed-geometry case to \(e_{n,2}^{\widehat\theta}\) and aggregating over
the fixed number of folds, it suffices to control the corresponding
integrated \(L_2(\Pb)\) norm. On the event \(\inf_x\widehat\pi_a(x)\ge\pi_{\min}/2\), the same algebra as in
the fixed-geometry case gives
\[
\begin{aligned}
&\int_{\mathcal Y_0}
\|\widehat\varphi_{\widehat\theta}(\cdot;y)
-
\varphi_{\widehat\theta}(\cdot;y)\|_{L_2(\Pb)}^2dy \\
&\quad\lesssim
\|\widehat\pi_a-\pi_a\|_{L_2(\Pb)}^2
\sup_x
\E\!\left[
\int_{\mathcal Y_0}\kappa_{h,\widehat\theta}(y;Y)^2dy
\,\middle|\,
X=x,A=a,\mathcal F_{-k}
\right] \\
&\quad+
\int_{\mathcal Y_0}
\|\widehat\mu_{a,h,\widehat\theta}(\cdot;y)
-
\mu_{a,h,\widehat\theta}(\cdot;y)\|_{L_2(\Pb)}^2dy
=
o_\Pb(1).
\end{aligned}
\]
Therefore, by the same conditional Markov argument,
\[
\int_{\mathcal Y_0}\{e_{n,2}^{\widehat\theta}(y)\}^2dy
=
o_\Pb(n^{-1}).
\]

Similarly, applying the same foldwise argument to
\(e_{n,1}^{\widehat\theta}\), it suffices to control the corresponding
integrated \(L_2(\Pb)\) norm. Letting
\[
\Delta_\kappa(y;Y)
=
\kappa_{h,\widehat\theta}(y;Y)
-
\kappa_{h,\theta_0}(y;Y),
\quad
\Delta_\mu(X;y)
=
\mu_{a,h,\widehat\theta}(X;y)
-
\mu_{a,h,\theta_0}(X;y),
\]
we get
\[
\varphi_{\widehat\theta}(Z;y)
-
\varphi_h^{\mathrm{geo}}(Z;y,\eta_0)
=
\frac{\mathbbm 1\{A=a\}}{\pi_a(X)}
\{\Delta_\kappa(y;Y)-\Delta_\mu(X;y)\}
+
\Delta_\mu(X;y).
\]
Since
\[
\Delta_\mu(X;y)
=
\E\{\Delta_\kappa(y;Y)\mid X,A=a,\mathcal F_{-k}\},
\]
Jensen's inequality gives
\[
\Delta_\mu(X;y)^2
\le
\E\{\Delta_\kappa(y;Y)^2\mid X,A=a,\mathcal F_{-k}\}.
\]
Therefore, by positivity,
\[
\int_{\mathcal Y_0}
\left\|
\varphi_{\widehat\theta}(\cdot;y)
-
\varphi_h^{\mathrm{geo}}(\cdot;y,\eta_0)
\right\|_{L_2(\Pb)}^2dy
\lesssim
\int_{\mathcal Y_0}
\left\|
\kappa_{h,\widehat\theta}(y;Y^a)
-
\kappa_{h,\theta_0}(y;Y^a)
\right\|_{L_2(\Pb_a)}^2dy
=
o_\Pb(1).
\]
Applying the same conditional Markov argument gives
\[
\int_{\mathcal Y_0}\{e_{n,1}^{\widehat\theta}(y)\}^2dy
=
o_\Pb(n^{-1}).
\]
Combining the two bounds gives
\[
\int_{\mathcal Y_0}\{e_n^{\widehat\theta}(y)\}^2dy
=
o_\Pb(n^{-1}).
\]
Under the Donsker alternative in Assumption~\ref{assumption:sample-splitting},
the same conclusion follows from an integrated stochastic equicontinuity
argument.

Consequently,
\[
\begin{aligned}
\int_{\mathcal Y_0}
\{\widehat p^{\mathrm{geo}}_{a,h}(y)-p^{\mathrm{geo}}_{a,h}(y)\}^2dy
&\lesssim
\int_{\mathcal Y_0}G_n(y)^2dy
+
\int_{\mathcal Y_0}B_{\theta,h}(y)^2dy \\
&\quad+
\int_{\mathcal Y_0}\{R_{\pi,\mu}^{\widehat\theta}(y)\}^2dy
+
o_\Pb(n^{-1}).
\end{aligned}
\]
Using the stochastic bound above and the finite measure of \(\mathcal Y_0\),
\[
\int_{\mathcal Y_0}
\{\widehat p^{\mathrm{geo}}_{a,h}(y)-p^{\mathrm{geo}}_{a,h}(y)\}^2dy
=
O_\Pb\!\left(
\frac{H_h(\mathcal Y_0)}{n}
+
\operatorname*{ess\,sup}_{y\in\mathcal Y_0}
\{R_{\pi,\mu}^{\widehat\theta}(y)\}^2
+
\int_{\mathcal Y_0}B_{\theta,h}(y)^2dy
\right)
+
o_\Pb(n^{-1}).
\]
This completes the proof.
\end{proof}

\subsubsection{Proof of Proposition \ref{prop:manifold-concentration}}

\begin{proof}
For fixed \(u\in M\), write
\[
\Sigma_h(u)
=
h^2P_T(u)+\sigma_0^2P_N(u).
\]
Since \(P_T(u)\) and \(P_N(u)\) are orthogonal projections onto complementary
subspaces of dimensions \(m\) and \(d-m\), respectively, the eigenvalues of
\(\Sigma_h(u)\) are \(h^2\) with multiplicity \(m\) and \(\sigma_0^2\) with
multiplicity \(d-m\). Hence
\[
\det\{\Sigma_h(u)\}
=
h^{2m}\sigma_0^{2(d-m)}.
\]

For a \(d\)-dimensional Gaussian density
\(\phi_d(\cdot;\Sigma)\),
\[
\int_{\mathbb R^d}\phi_d(x;\Sigma)^2dx
=
(4\pi)^{-d/2}\det(\Sigma)^{-1/2}.
\]
Therefore,
\[
\begin{aligned}
\int_{\mathbb R^d}\kappa_h^{\rm tan}(y;u)^2dy
&=
\int_{\mathbb R^d}
\phi_d\{y-u;\Sigma_h(u)\}^2dy \\
&=
(4\pi)^{-d/2}
\det\{\Sigma_h(u)\}^{-1/2} \\
&=
(4\pi)^{-d/2}
\sigma_0^{-(d-m)}h^{-m}.
\end{aligned}
\]
The right-hand side does not depend on \(u\). Thus
\[
H_h^{\rm tan}(\mathbb R^d)
=
\E\left[
\int_{\mathbb R^d}
\kappa_h^{\rm tan}(y;Y^a)^2dy
\right]
=
C_{d,m,\sigma_0}h^{-m},
\]
with
\[
C_{d,m,\sigma_0}
=
(4\pi)^{-d/2}\sigma_0^{-(d-m)}.
\]
Finally, since \(\kappa_h^{\rm tan}(y;u)^2\ge0\),
\[
H_h^{\rm tan}(\mathcal Y_0)
=
\E\left[
\int_{\mathcal Y_0}
\kappa_h^{\rm tan}(y;Y^a)^2dy
\right]
\le
H_h^{\rm tan}(\mathbb R^d)
=
C_{d,m,\sigma_0}h^{-m}.
\]
\end{proof}

\subsubsection{Proof of Theorem~\ref{thm:covariance-concentration}}

\begin{proof}
By the squared \(L_2\) concentration bound in Lemma~\ref{lem:kappa-basic}, or
directly from Assumption~\ref{assump:subgauss-kappa}, there exists a constant
\(C<\infty\) such that, for \(\Pb_a\)-almost every \(Y^a\),
\[
\int_{\mathbb R^d}
\kappa_{h,\theta_0}(u;Y^a)^2du
\le
C\det\{G_h(Y^a)\}^{1/2}.
\]
Indeed, Assumption~\ref{assump:subgauss-kappa} gives
\[
\kappa_{h,\theta_0}(u;Y^a)^2
\le
C_0^2\det\{G_h(Y^a)\}
\exp\{-2c_0\|u-m_h(Y^a)\|_{G_h(Y^a)}^2\},
\]
and the change of variables
\(v=G_h(Y^a)^{1/2}\{u-m_h(Y^a)\}\) yields the displayed bound.

Therefore, for any measurable \(\mathcal Y_0\subseteq\mathbb R^d\),
\[
\begin{aligned}
H_h(\mathcal Y_0)
&=
\E\left[
\int_{\mathcal Y_0}
\kappa_{h,\theta_0}(u;Y^a)^2du
\right] \\
&\le
\E\left[
\int_{\mathbb R^d}
\kappa_{h,\theta_0}(u;Y^a)^2du
\right] \\
&\le
C\,\E\left[\det\{G_h(Y^a)\}^{1/2}\right].
\end{aligned}
\]
If
\[
\E[\det\{G_h(Y^a)\}^{1/2}]
\le
C_rh^{-r(h)},
\quad 0<h\le h_r,
\]
then
\[
H_h(\mathcal Y_0)
\le
H_h(\mathbb R^d)
\lesssim
h^{-r(h)}.
\]
This proves the first claim.

We next verify the eigenvalue consequence. Suppose that, for some
\(d_\star\ll d\), the covariance
\(\Sigma_h(Y^a)=G_h(Y^a)^{-1}\) has \(d_\star\) eigenvalues of order \(h^2\)
and \(d-d_\star\) eigenvalues of order one, with \(\Pb_a\)-probability one,
uniformly over sufficiently small \(h\). Then there exist constants
\(0<c<C<\infty\) and \(h_0>0\) such that, for all \(0<h\le h_0\), with
\(\Pb_a\)-probability one,
\[
c h^2
\le
\lambda_j\{\Sigma_h(Y^a)\}
\le
C h^2
\quad (j\le d_\star),
\quad
c
\le
\lambda_j\{\Sigma_h(Y^a)\}
\le
C
\quad (j>d_\star).
\]
It follows that
\[
\det\{\Sigma_h(Y^a)\}^{1/2}
\asymp
h^{d_\star},
\]
and hence, since \(G_h(Y^a)=\Sigma_h(Y^a)^{-1}\),
\[
\det\{G_h(Y^a)\}^{1/2}
=
\det\{\Sigma_h(Y^a)\}^{-1/2}
\asymp
h^{-d_\star}
\]
\(\Pb_a\)-almost surely, uniformly over sufficiently small \(h\). Thus
\[
\E[\det\{G_h(Y^a)\}^{1/2}]
\lesssim
h^{-d_\star}.
\]
Applying the first part with \(r(h)=d_\star\) gives
\[
H_h(\mathcal Y_0)
\le
H_h(\mathbb R^d)
\lesssim
h^{-d_\star}.
\]

Finally, suppose additionally that
\[
\int_{\mathbb R^d}
\kappa_{h,\theta_0}(u;Y^a)^2du
\gtrsim
\det\{G_h(Y^a)\}^{1/2}
\]
with \(\Pb_a\)-probability one uniformly over sufficiently small \(h\). Then,
using the lower eigenvalue consequence above,
\[
\begin{aligned}
H_h(\mathbb R^d)
&=
\E\left[
\int_{\mathbb R^d}
\kappa_{h,\theta_0}(u;Y^a)^2du
\right] \\
&\gtrsim
\E\left[
\det\{G_h(Y^a)\}^{1/2}
\right] \\
&\gtrsim
h^{-d_\star}.
\end{aligned}
\]
Together with the upper bound \(H_h(\mathbb R^d)\lesssim h^{-d_\star}\), this
gives
\[
H_h(\mathbb R^d)\asymp h^{-d_\star}.
\]
This completes the proof.
\end{proof}

\subsection{Section \ref{sec:dss}}

\subsubsection{Proof of Theorem \ref{thm:dss-expansion}}

\begin{proof}
The logic parallels the proof of Theorem~\ref{thm:onestep-expansion}.
Fix \(h>0\) and \(y\in\mathbb R^d\), and write
\(\eta^s_0=\eta^s_{\theta_0}\). Throughout the proof, we use the foldwise cross-fitting convention stated in the beginning.

We introduce several shorthands for notational simplicity. Define
\[
p_0=P_{a,h,\theta_0}(y),
\quad
g_0=G_{a,h,\theta_0}(y),
\quad
s_0=s^{\mathrm{geo}}_{a,h}(y)=g_0/p_0,
\]
and the corresponding learned-geometry quantities
\[
p_{\widehat\theta}=P_{a,h,\widehat\theta}(y),
\quad
g_{\widehat\theta}=G_{a,h,\widehat\theta}(y),
\quad
s_{\widehat\theta}=s_{a,h,\widehat\theta}(y)
=
g_{\widehat\theta}/p_{\widehat\theta}.
\]
Thus \(B_{s,\theta,h}(y)=s_{\widehat\theta}-s_0\). Also let
\[
\varphi_{P,0}(Z)=\varphi_P(Z;y,\eta^s_0),
\quad
\varphi_{G,0}(Z)=\varphi_G(Z;y,\eta^s_0),
\]
and
\[
\varphi_{P,\widehat\theta}(Z)
=
\varphi_P(Z;y,\eta^s_{\widehat\theta}),
\quad
\varphi_{G,\widehat\theta}(Z)
=
\varphi_G(Z;y,\eta^s_{\widehat\theta}),
\]
where $\eta^s_{\widehat\theta}
=
\{\pi_a,\mu_{a,h,\widehat\theta},\nu_{a,h,\widehat\theta},
\kappa_{h,\widehat\theta},\dot\kappa_{h,\widehat\theta}\}.$

Similarly, write
\[
\widehat\varphi_{P,\widehat\theta}(Z)
=
\varphi_P(Z;y,\widehat\eta^s_{\widehat\theta}),
\quad
\widehat\varphi_{G,\widehat\theta}(Z)
=
\varphi_G(Z;y,\widehat\eta^s_{\widehat\theta}).
\]
For brevity, set
\[
\widehat p=\widehat P_{a,h}(y)
=
\Pn\widehat\varphi_{P,\widehat\theta},
\quad
\widehat g=\widehat G_{a,h}(y)
=
\Pn\widehat\varphi_{G,\widehat\theta}.
\]
Then
\[
\widehat s^{\mathrm{geo}}_{a,h}(y)=\widehat g/\widehat p,
\]
and, by the definitions of the localized regressions,
\[
\Pb\{\varphi_{P,\widehat\theta}\}=p_{\widehat\theta},
\quad
\Pb\{\varphi_{G,\widehat\theta}\}=g_{\widehat\theta}.
\]

We first decompose the density and gradient numerator components as
\[
\widehat p-p_{\widehat\theta}
=
(\Pn-\Pb)\varphi_{P,\widehat\theta}
+
\Pb\{\widehat\varphi_{P,\widehat\theta}-\varphi_{P,\widehat\theta}\}
+
(\Pn-\Pb)\{\widehat\varphi_{P,\widehat\theta}-\varphi_{P,\widehat\theta}\},
\]
and
\[
\widehat g-g_{\widehat\theta}
=
(\Pn-\Pb)\varphi_{G,\widehat\theta}
+
\Pb\{\widehat\varphi_{G,\widehat\theta}-\varphi_{G,\widehat\theta}\}
+
(\Pn-\Pb)\{\widehat\varphi_{G,\widehat\theta}-\varphi_{G,\widehat\theta}\},
\]
respectively.

Since, within fold \(k\),
$\E\{\kappa_{h,\widehat\theta}(y;Y)\mid X,A=a,\mathcal F_{-k}\}
=
\mu_{a,h,\widehat\theta}(X;y)$, iterated expectation gives
\[
\begin{aligned}
\Pb\{\widehat\varphi_{P,\widehat\theta}
-
\varphi_{P,\widehat\theta}\}
&=
\Pb\left[
\frac{\pi_a(X)}{\widehat\pi_a(X)}
\{\mu_{a,h,\widehat\theta}(X;y)
-
\widehat\mu_{a,h,\widehat\theta}(X;y)\}
+
\widehat\mu_{a,h,\widehat\theta}(X;y)
-
\mu_{a,h,\widehat\theta}(X;y)
\right] \\
&=
\Pb\left[
\left\{\frac{\pi_a(X)}{\widehat\pi_a(X)}-1\right\}
\{\mu_{a,h,\widehat\theta}(X;y)
-
\widehat\mu_{a,h,\widehat\theta}(X;y)\}
\right].
\end{aligned}
\]
On the event where \(\widehat\pi_a\) is bounded below,
\[
\left|
\frac{\pi_a(X)}{\widehat\pi_a(X)}-1
\right|
\lesssim
|\widehat\pi_a(X)-\pi_a(X)|.
\]
Therefore, by the Cauchy--Schwarz inequality,
\[
\left|
\Pb\{\widehat\varphi_{P,\widehat\theta}
-
\varphi_{P,\widehat\theta}\}
\right|
\lesssim
\delta_\pi\delta_\mu^{\widehat\theta}(y).
\]
Similarly, since
\[
\E\{\dot\kappa_{h,\widehat\theta}(y;Y)
\mid X,A=a,\mathcal F_{-k}\}
=
\nu_{a,h,\widehat\theta}(X;y),
\]
we have
\[
\left\|
\Pb\{\widehat\varphi_{G,\widehat\theta}
-
\varphi_{G,\widehat\theta}\}
\right\|_2
\lesssim
\delta_\pi\delta_\nu^{\widehat\theta}(y).
\]
Define
\[
R_P^{\widehat\theta}(y)
=
\Pb\{\widehat\varphi_{P,\widehat\theta}
-
\varphi_{P,\widehat\theta}\},
\quad
R_G^{\widehat\theta}(y)
=
\Pb\{\widehat\varphi_{G,\widehat\theta}
-
\varphi_{G,\widehat\theta}\}.
\]
Then
\[
|R_P^{\widehat\theta}(y)|
\lesssim
\delta_\pi\delta_\mu^{\widehat\theta}(y),
\quad
\|R_G^{\widehat\theta}(y)\|_2
\lesssim
\delta_\pi\delta_\nu^{\widehat\theta}(y).
\]

We next control the empirical process terms involving estimated nuisance
functions. Positivity, the nuisance consistency conditions, and the conditional
moment bounds in Assumption~\ref{assumption:score-regularity} imply
\[
\begin{aligned}
&\|\widehat\varphi_{P,\widehat\theta}
-
\varphi_{P,\widehat\theta}\|_{L_2(\Pb)}
+
\|\widehat\varphi_{G,\widehat\theta}
-
\varphi_{G,\widehat\theta}\|_{L_2(\Pb)}
=
o_\Pb(1).
\end{aligned}
\]
By the foldwise cross-fitting convention and the sample-splitting
argument, the preceding \(L_2(\Pb)\) convergence implies
\[
(\Pn-\Pb)(\widehat\varphi_{P,\widehat\theta}
-
\varphi_{P,\widehat\theta})
=
o_\Pb(n^{-1/2}),
\]
and
\[
(\Pn-\Pb)(\widehat\varphi_{G,\widehat\theta}
-
\varphi_{G,\widehat\theta})
=
o_\Pb(n^{-1/2}).
\]
Under the Donsker alternative in
Assumption~\ref{assumption:sample-splitting}, the same conclusions follow from
stochastic equicontinuity
\citep[Chapter~2]{van1996weak}.

We now replace the learned-geometry empirical terms by their
population-geometry counterparts. By definition,
\[
\begin{aligned}
&\mu_{a,h,\widehat\theta}(X;y)-\mu_{a,h,\theta_0}(X;y) \\
&\quad=
\E\{
\kappa_{h,\widehat\theta}(y;Y)
-
\kappa_{h,\theta_0}(y;Y)
\mid X,A=a,\mathcal F_{-k}
\},
\end{aligned}
\]
and
\[
\begin{aligned}
&\nu_{a,h,\widehat\theta}(X;y)-\nu_{a,h,\theta_0}(X;y) \\
&\quad=
\E\{
\dot\kappa_{h,\widehat\theta}(y;Y)
-
\dot\kappa_{h,\theta_0}(y;Y)
\mid X,A=a,\mathcal F_{-k}
\}.
\end{aligned}
\]
Jensen's inequality and positivity therefore give
\[
\begin{aligned}
&\|\varphi_{P,\widehat\theta}-\varphi_{P,0}\|_{L_2(\Pb)}
+
\|\varphi_{G,\widehat\theta}-\varphi_{G,0}\|_{L_2(\Pb)} \\
&\quad\lesssim
\|\kappa_{h,\widehat\theta}(y;Y^a)
-
\kappa_{h,\theta_0}(y;Y^a)\|_{L_2(\Pb_a)} \\
&\quad\quad+
\|\dot\kappa_{h,\widehat\theta}(y;Y^a)
-
\dot\kappa_{h,\theta_0}(y;Y^a)\|_{L_2(\Pb_a)} \\
&\quad=
\delta_{\kappa,s}^{\widehat\theta}(y).
\end{aligned}
\]
Since
\(\delta_{\kappa,s}^{\widehat\theta}(y)=o_\Pb(1)\), the same
sample-splitting argument gives
\[
(\Pn-\Pb)(\varphi_{P,\widehat\theta}-\varphi_{P,0})
=
o_\Pb(n^{-1/2}),
\quad
(\Pn-\Pb)(\varphi_{G,\widehat\theta}-\varphi_{G,0})
=
o_\Pb(n^{-1/2}).
\]

By construction,
\[
\phi_P(Z;y,\eta^s_0)=\varphi_{P,0}(Z)-p_0,
\quad
\phi_G(Z;y,\eta^s_0)=\varphi_{G,0}(Z)-g_0.
\]
Consequently,
\[
\widehat p-p_{\widehat\theta}
=
(\Pn-\Pb)\{\phi_P(Z;y,\eta^s_0)\}
+
R_P^{\widehat\theta}(y)
+
o_\Pb(n^{-1/2}),
\]
and
\[
\widehat g-g_{\widehat\theta}
=
(\Pn-\Pb)\{\phi_G(Z;y,\eta^s_0)\}
+
R_G^{\widehat\theta}(y)
+
o_\Pb(n^{-1/2}).
\]

The geometry consistency condition also gives
\[
|p_{\widehat\theta}-p_0|
\le
\|\kappa_{h,\widehat\theta}(y;Y^a)
-
\kappa_{h,\theta_0}(y;Y^a)\|_{L_2(\Pb_a)}
=
o_\Pb(1),
\]
and
\[
\|g_{\widehat\theta}-g_0\|_2
\le
\|\dot\kappa_{h,\widehat\theta}(y;Y^a)
-
\dot\kappa_{h,\theta_0}(y;Y^a)\|_{L_2(\Pb_a)}
=
o_\Pb(1).
\]
Since \(p_0\) and \(p_{\widehat\theta}\) are bounded away from zero,
\[
s_{\widehat\theta}-s_0
=
\frac{g_{\widehat\theta}-g_0}{p_{\widehat\theta}}
-
s_0
\frac{p_{\widehat\theta}-p_0}{p_{\widehat\theta}}
=
o_\Pb(1).
\]

We now pass to the ratio. Let
\[
\Delta_p=\widehat p-p_{\widehat\theta},
\quad
\Delta_g=\widehat g-g_{\widehat\theta},
\]
and set
\[
r_n(y)
=
\delta_\pi
\{\delta_\mu^{\widehat\theta}(y)
+
\delta_\nu^{\widehat\theta}(y)\}.
\]
On the event where \(p_{\widehat\theta}\) and \(\widehat p\) are bounded below,
\[
\widehat s^{\mathrm{geo}}_{a,h}(y)-s_{\widehat\theta}
=
\frac{\Delta_g-s_{\widehat\theta}\Delta_p}
{p_{\widehat\theta}}
+
\mathcal Q_n(y),
\]
where
\[
\mathcal Q_n(y)
=
\left(
\frac{1}{\widehat p}
-
\frac{1}{p_{\widehat\theta}}
\right)
\{\Delta_g-s_{\widehat\theta}\Delta_p\}.
\]
Square integrability of the oracle influence functions and the component
expansions above imply
\[
|\Delta_p|
=
O_\Pb(n^{-1/2}+r_n(y)),
\quad
\|\Delta_g\|_2
=
O_\Pb(n^{-1/2}+r_n(y)).
\]
Therefore,
\[
\begin{aligned}
\|\mathcal Q_n(y)\|_2
&\lesssim
|\Delta_p|
\{\|\Delta_g\|_2
+
\|s_{\widehat\theta}\|_2|\Delta_p|\} \\
&=
o_\Pb(n^{-1/2})
+
O_\Pb\{r_n(y)^2\}.
\end{aligned}
\]
Since \(r_n(y)=o_\Pb(1)\), the quadratic term is absorbed into the final
\(O_\Pb\{r_n(y)\}\) remainder.

Substituting the component expansions into the linear part of the ratio gives
\[
\begin{aligned}
\widehat s^{\mathrm{geo}}_{a,h}(y)-s_{\widehat\theta}
&=
\frac{1}{p_{\widehat\theta}}
(\Pn-\Pb)
\left\{
\phi_G(Z;y,\eta^s_0)
-
s_{\widehat\theta}\phi_P(Z;y,\eta^s_0)
\right\} \\
&\quad+
\frac{
R_G^{\widehat\theta}(y)
-
s_{\widehat\theta}R_P^{\widehat\theta}(y)
}
{p_{\widehat\theta}}
+
\mathcal Q_n(y)
+
o_\Pb(n^{-1/2}).
\end{aligned}
\]
By square integrability,
\[
(\Pn-\Pb)\phi_P(Z;y,\eta^s_0)=O_\Pb(n^{-1/2}),
\quad
(\Pn-\Pb)\phi_G(Z;y,\eta^s_0)=O_\Pb(n^{-1/2}).
\]
Since
\(p_{\widehat\theta}-p_0=o_\Pb(1)\) and
\(s_{\widehat\theta}-s_0=o_\Pb(1)\), replacing
\(p_{\widehat\theta}\) by \(p_0\) and \(s_{\widehat\theta}\) by \(s_0\) in the
empirical term changes it by \(o_\Pb(n^{-1/2})\). Hence
\[
\frac{1}{p_{\widehat\theta}}
(\Pn-\Pb)
\left\{
\phi_G(Z;y,\eta^s_0)
-
s_{\widehat\theta}\phi_P(Z;y,\eta^s_0)
\right\}
=
(\Pn-\Pb)\{\phi_s(Z;y,\eta^s_0)\}
+
o_\Pb(n^{-1/2}).
\]

Collecting the causal product remainders and the
\(O_\Pb\{r_n(y)^2\}\) part of the quadratic ratio remainder, define
\(R_{s,\pi,\mu,\nu}^{\widehat\theta}(y)\) so that
\[
\begin{aligned}
R_{s,\pi,\mu,\nu}^{\widehat\theta}(y)
&=
\frac{
R_G^{\widehat\theta}(y)
-
s_{\widehat\theta}R_P^{\widehat\theta}(y)
}
{p_{\widehat\theta}}
+
O_\Pb\{r_n(y)^2\}.
\end{aligned}
\]
By Assumption~\ref{assumption:score-regularity},
\(p_{\widehat\theta}\) is bounded away from zero and
\(\|s_{\widehat\theta}\|_2=O_\Pb(1)\). Therefore,
\[
\begin{aligned}
\|R_{s,\pi,\mu,\nu}^{\widehat\theta}(y)\|_2
&=
O_\Pb\left[
\delta_\pi
\{\delta_\mu^{\widehat\theta}(y)
+
\delta_\nu^{\widehat\theta}(y)\}
\right].
\end{aligned}
\]

Finally,
\[
\widehat s^{\mathrm{geo}}_{a,h}(y)
-
s^{\mathrm{geo}}_{a,h}(y)
=
\{\widehat s^{\mathrm{geo}}_{a,h}(y)-s_{\widehat\theta}\}
+
\{s_{\widehat\theta}-s_0\},
\]
and \(s_{\widehat\theta}-s_0=B_{s,\theta,h}(y)\). Therefore,
\[
\widehat s^{\mathrm{geo}}_{a,h}(y)
-
s^{\mathrm{geo}}_{a,h}(y)
=
(\Pn-\Pb)\{\phi_s(Z;y,\eta^s_0)\}
+
B_{s,\theta,h}(y)
+
R_{s,\pi,\mu,\nu}^{\widehat\theta}(y)
+
o_\Pb(n^{-1/2}),
\]
which completes the proof.
\end{proof}

\subsubsection{Proof of Theorem \ref{thm:dss-risk-global}}

\begin{proof}
Throughout the proof, constants may depend on \(\pi_{\min}\),
\(p'_{\min}\), \(\mathrm{Vol}(\mathcal Y_0)\), and the score regularity
constants, but not on \(n\). Write
\[
P_{\widehat\theta}(y)=P_{a,h,\widehat\theta}(y),\quad
G_{\widehat\theta}(y)=G_{a,h,\widehat\theta}(y),\quad
s_{\widehat\theta}(y)=s_{a,h,\widehat\theta}(y),\quad
s_0(y)=s^{\mathrm{geo}}_{a,h}(y),
\]
and let \(\eta_0^s=\eta_{\theta_0}^s\). Also write
\[
\varphi_{P,0}(Z;y)=\varphi_P(Z;y,\eta_0^s),\quad
\varphi_{G,0}(Z;y)=\varphi_G(Z;y,\eta_0^s),
\]
\[
\varphi_{P,\widehat\theta}(Z;y)=\varphi_P(Z;y,\eta^s_{\widehat\theta}),\quad
\varphi_{G,\widehat\theta}(Z;y)=\varphi_G(Z;y,\eta^s_{\widehat\theta}),
\]
and
\[
\widehat\varphi_{P,\widehat\theta}(Z;y)=\varphi_P(Z;y,\widehat\eta^s_{\widehat\theta}),\quad
\widehat\varphi_{G,\widehat\theta}(Z;y)=\varphi_G(Z;y,\widehat\eta^s_{\widehat\theta}).
\]

Define
\[
R_P^{\widehat\theta}(y)=\Pb\{\widehat\varphi_{P,\widehat\theta}(Z;y)-\varphi_{P,\widehat\theta}(Z;y)\},\quad
R_G^{\widehat\theta}(y)=\Pb\{\widehat\varphi_{G,\widehat\theta}(Z;y)-\varphi_{G,\widehat\theta}(Z;y)\}.
\]
As in the proof of Theorem~\ref{thm:dss-expansion}, iterated expectation gives
\[
R_P^{\widehat\theta}(y)=
\Pb\left[
\left\{\frac{\pi_a(X)}{\widehat\pi_a(X)}-1\right\}
\{\mu_{a,h,\widehat\theta}(X;y)-\widehat\mu_{a,h,\widehat\theta}(X;y)\}
\right],
\]
with the analogous vector-valued expression for
\(R_G^{\widehat\theta}(y)\). Hence, by the propensity lower bound and the
Cauchy--Schwarz inequality,
\[
|R_P^{\widehat\theta}(y)|\lesssim
\delta_\pi\delta_\mu^{\widehat\theta}(y),\quad
\|R_G^{\widehat\theta}(y)\|_2\lesssim
\delta_\pi\delta_\nu^{\widehat\theta}(y).
\]

Adding and subtracting the learned- and population-geometry estimating
functions gives
\[
\widehat P_{a,h}(y)-P_{\widehat\theta}(y)
=
(\Pn-\Pb)\{\phi_P(Z;y,\eta_0^s)\}
+
R_P^{\widehat\theta}(y)+e_P(y),
\]
\[
\widehat G_{a,h}(y)-G_{\widehat\theta}(y)
=
(\Pn-\Pb)\{\phi_G(Z;y,\eta_0^s)\}
+
R_G^{\widehat\theta}(y)+e_G(y),
\]
where
\[
e_P(y)=
(\Pn-\Pb)\{\widehat\varphi_{P,\widehat\theta}(\cdot;y)-\varphi_{P,\widehat\theta}(\cdot;y)\}
+
(\Pn-\Pb)\{\varphi_{P,\widehat\theta}(\cdot;y)-\varphi_{P,0}(\cdot;y)\},
\]
\[
e_G(y)=
(\Pn-\Pb)\{\widehat\varphi_{G,\widehat\theta}(\cdot;y)-\varphi_{G,\widehat\theta}(\cdot;y)\}
+
(\Pn-\Pb)\{\varphi_{G,\widehat\theta}(\cdot;y)-\varphi_{G,0}(\cdot;y)\}.
\]
Here
\[
(\Pn-\Pb)\varphi_{P,0}=(\Pn-\Pb)\{\phi_P(Z;y,\eta_0^s)\},\quad
(\Pn-\Pb)\varphi_{G,0}=(\Pn-\Pb)\{\phi_G(Z;y,\eta_0^s)\},
\]
because centering only subtracts the corresponding population means.

The integrated nuisance conditions, propensity condition, positivity, and
Jensen's inequality imply
\[
\begin{aligned}
\int_{\mathcal Y_0}\big[
&\|\widehat\varphi_{P,\widehat\theta}-\varphi_{P,\widehat\theta}\|_{L_2(\Pb)}^2
+
\|\widehat\varphi_{G,\widehat\theta}-\varphi_{G,\widehat\theta}\|_{L_2(\Pb)}^2\\
&+
\|\varphi_{P,\widehat\theta}-\varphi_{P,0}\|_{L_2(\Pb)}^2
+
\|\varphi_{G,\widehat\theta}-\varphi_{G,0}\|_{L_2(\Pb)}^2
\big]dy
=
o_\Pb(1).
\end{aligned}
\]
Indeed, the first two terms are controlled by the regression consistency and
propensity-weighted moment assumptions, while the last two are bounded by the
integrated kernel and kernel-gradient perturbations. Applying the
conditional-variance argument separately within each fold and then
aggregating over the fixed number of folds gives
\[
\int_{\mathcal Y_0}e_P(y)^2dy=o_\Pb(n^{-1}),\quad
\int_{\mathcal Y_0}\|e_G(y)\|_2^2dy=o_\Pb(n^{-1}).
\]
Under the Donsker alternative in
Assumption~\ref{assumption:sample-splitting}, the same conclusion follows from
stochastic equicontinuity
\citep[Chapter~2]{van1996weak}.

On the event where \(\widehat P_{a,h}\) is bounded away from zero,
\[
\widehat s^{\mathrm{geo}}_{a,h}(y)-s_{\widehat\theta}(y)
=
\frac{\{\widehat G_{a,h}(y)-G_{\widehat\theta}(y)\}
-s_{\widehat\theta}(y)\{\widehat P_{a,h}(y)-P_{\widehat\theta}(y)\}}
{\widehat P_{a,h}(y)}.
\]
Substitution of the component expansions yields
\[
\widehat s^{\mathrm{geo}}_{a,h}(y)-s_{\widehat\theta}(y)
=
E_n^s(y)+R_{s,\pi,\mu,\nu}^{\widehat\theta}(y)+e_s(y),
\]
where
\[
E_n^s(y)=
\frac{(\Pn-\Pb)\{\phi_G(Z;y,\eta_0^s)\}
-s_{\widehat\theta}(y)(\Pn-\Pb)\{\phi_P(Z;y,\eta_0^s)\}}
{\widehat P_{a,h}(y)},
\]
\[
R_{s,\pi,\mu,\nu}^{\widehat\theta}(y)=
\frac{R_G^{\widehat\theta}(y)-s_{\widehat\theta}(y)R_P^{\widehat\theta}(y)}
{\widehat P_{a,h}(y)},\quad
e_s(y)=
\frac{e_G(y)-s_{\widehat\theta}(y)e_P(y)}
{\widehat P_{a,h}(y)}.
\]
By Assumption~\ref{assumption:score-regularity},
\(\widehat P_{a,h}\) is uniformly bounded below with probability tending to
one and
\(\operatorname*{ess\,sup}_{y\in\mathcal Y_0}
\|s_{\widehat\theta}(y)\|_2=O_\Pb(1)\). Consequently,
\[
\int_{\mathcal Y_0}\|e_s(y)\|_2^2dy=o_\Pb(n^{-1}),
\]
and
\[
\|R_{s,\pi,\mu,\nu}^{\widehat\theta}(y)\|_2
=
O_\Pb\!\left[
\delta_\pi\{\delta_\mu^{\widehat\theta}(y)+\delta_\nu^{\widehat\theta}(y)\}
\right].
\]

The same denominator and score bounds give
\[
\int_{\mathcal Y_0}\|E_n^s(y)\|_2^2dy
=
O_\Pb\!\left[
\int_{\mathcal Y_0}
\|(\Pn-\Pb)\{\phi_G(Z;y,\eta_0^s)\}\|_2^2dy
+
\int_{\mathcal Y_0}
\{(\Pn-\Pb)\{\phi_P(Z;y,\eta_0^s)\}\}^2dy
\right].
\]
Jensen's inequality yields,
\[
\E\{\phi_P(Z;y,\eta_0^s)^2\}\lesssim
\E\{\kappa_{h,\theta_0}(y;Y^a)^2\},\quad
\E\{\|\phi_G(Z;y,\eta_0^s)\|_2^2\}\lesssim
\E\{\|\dot\kappa_{h,\theta_0}(y;Y^a)\|_2^2\}.
\]
Since \(\phi_P\) and \(\phi_G\) are centered and the observations are
independent, for each \(y\in\mathcal Y_0\),
\[
\E\!\left[
\{(\Pn-\Pb)\phi_P(Z;y,\eta_0^s)\}^2
\right]
=
\frac{1}{n}\var\{\phi_P(Z;y,\eta_0^s)\}
\le
\frac{1}{n}\E\{\phi_P(Z;y,\eta_0^s)^2\},
\]
and similarly,
\[
\E\!\left[
\|(\Pn-\Pb)\phi_G(Z;y,\eta_0^s)\|_2^2
\right]
\le
\frac{1}{n}\E\{\|\phi_G(Z;y,\eta_0^s)\|_2^2\}.
\]
Therefore, by Fubini's theorem and the definition of \(H_h^s(\mathcal Y_0)\),
\[
\begin{aligned}
&\E\left[
\int_{\mathcal Y_0}
\|(\Pn-\Pb)\{\phi_G(Z;y,\eta_0^s)\}\|_2^2dy
+
\int_{\mathcal Y_0}
\{(\Pn-\Pb)\{\phi_P(Z;y,\eta_0^s)\}\}^2dy
\right] \\
&\quad\le
\frac{1}{n}
\int_{\mathcal Y_0}
\left[
\E\{\|\phi_G(Z;y,\eta_0^s)\|_2^2\}
+
\E\{\phi_P(Z;y,\eta_0^s)^2\}
\right]dy \\
&\quad\lesssim
\frac{1}{n}
\E\left[
\int_{\mathcal Y_0}
\left\{
\|\dot\kappa_{h,\theta_0}(y;Y^a)\|_2^2
+
\kappa_{h,\theta_0}(y;Y^a)^2
\right\}dy
\right]
=
\frac{H_h^s(\mathcal Y_0)}{n},
\end{aligned}
\]
and therefore
\[
\int_{\mathcal Y_0}\|E_n^s(y)\|_2^2dy
=
O_\Pb\!\left\{\frac{H_h^s(\mathcal Y_0)}{n}\right\}.
\]

Combining these bounds and using the finite Lebesgue measure of
\(\mathcal Y_0\) gives
\[
\int_{\mathcal Y_0}
\|\widehat s^{\mathrm{geo}}_{a,h}(y)-s_{\widehat\theta}(y)\|_2^2dy
=
O_\Pb\!\left[
\frac{H_h^s(\mathcal Y_0)}{n}
+
\operatorname*{ess\,sup}_{y\in\mathcal Y_0}
\|R_{s,\pi,\mu,\nu}^{\widehat\theta}(y)\|_2^2
\right]
+
o_\Pb(n^{-1}).
\]
Finally,
\[
\widehat s^{\mathrm{geo}}_{a,h}(y)-s_0(y)
=
\{\widehat s^{\mathrm{geo}}_{a,h}(y)-s_{\widehat\theta}(y)\}
+
B_{s,\theta,h}(y).
\]
Hence
\[
\begin{aligned}
\int_{\mathcal Y_0}
\|\widehat s^{\mathrm{geo}}_{a,h}(y)-s_0(y)\|_2^2dy
=
O_\Pb\!\left[
\frac{H_h^s(\mathcal Y_0)}{n}
+
\operatorname*{ess\,sup}_{y\in\mathcal Y_0}
\|R_{s,\pi,\mu,\nu}^{\widehat\theta}(y)\|_2^2
+
\int_{\mathcal Y_0}\|B_{s,\theta,h}(y)\|_2^2dy
\right]
+
o_\Pb(n^{-1}).
\end{aligned}
\]
Since \(s_0=s^{\mathrm{geo}}_{a,h}\), the result follows.
\end{proof}

\subsubsection{Proof of Theorem \ref{thm:dss-covariance-concentration}}

\begin{proof}
By Lemma~\ref{lem:kappa-basic}, Assumption~\ref{assump:subgauss-kappa}
implies that, for some \(C<\infty\),
\[
\int_{\mathbb R^d}\kappa_{h,\theta_0}(u;Y)^2du
\le
C\det\{G_h(Y)\}^{1/2}.
\]
We next bound the kernel-gradient term. For fixed \(Y\), write
\(G=G_h(Y)\) and \(m=m_h(Y)\). The assumed gradient envelope gives
\[
\int_{\mathbb R^d}\|\dot\kappa_{h,\theta_0}(u;Y)\|_2^2du
\le
(C'_0)^2\det(G)
\int_{\mathbb R^d}
\|G(u-m)\|_2^2
\exp\{-2c'_0\|u-m\|_G^2\}du.
\]
Under the change of variables \(z=G^{1/2}(u-m)\),
\(du=\det(G)^{-1/2}dz\) and
\(\|G(u-m)\|_2^2=z^\top Gz\). Therefore,
\[
\begin{aligned}
\int_{\mathbb R^d}\|\dot\kappa_{h,\theta_0}(u;Y)\|_2^2du
&\le
(C'_0)^2\det(G)^{1/2}
\int_{\mathbb R^d}
z^\top Gz\,e^{-2c'_0\|z\|_2^2}dz \\
&\lesssim
\det(G)^{1/2}\operatorname{tr}(G),
\end{aligned}
\]
where the last inequality follows from rotational symmetry, since
\[
\int_{\mathbb R^d}zz^\top e^{-2c'_0\|z\|_2^2}dz
=
C_{c'_0,d}I_d.
\]

Combining the kernel and kernel-gradient bounds yields
\[
\int_{\mathbb R^d}
\left\{
\kappa_{h,\theta_0}(u;Y)^2
+
\|\dot\kappa_{h,\theta_0}(u;Y)\|_2^2
\right\}du
\lesssim
\det\{G_h(Y)\}^{1/2}
\{1+\operatorname{tr}G_h(Y)\}.
\]
Since the integrand is nonnegative, for every measurable
\(\mathcal Y_0\subseteq\mathbb R^d\),
\[
\begin{aligned}
H_h^s(\mathcal Y_0)
&\le
H_h^s(\mathbb R^d) \\
&\lesssim
\E\left[
\det\{G_h(Y^a)\}^{1/2}
\{1+\operatorname{tr}G_h(Y^a)\}
\right].
\end{aligned}
\]

For the eigenvalue consequence, the stated assumption means that, with
\(\Pb_a\)-probability one, for all sufficiently small \(h\), the eigenvalues of
\(\Sigma_h(Y^a)\) can be ordered so that
\[
\lambda_j\{\Sigma_h(Y^a)\}\asymp h^2
\quad (j\le d_\star),
\quad
\lambda_j\{\Sigma_h(Y^a)\}\asymp1
\quad (j>d_\star),
\]
with constants uniform in \(h\). Since
\(G_h(Y^a)=\Sigma_h(Y^a)^{-1}\), it follows that
\[
\det\{G_h(Y^a)\}^{1/2}\lesssim h^{-d_\star},
\quad
\operatorname{tr}G_h(Y^a)\lesssim h^{-2}
\]
with \(\Pb_a\)-probability one. Hence, for \(h\le1\),
\[
\det\{G_h(Y^a)\}^{1/2}
\{1+\operatorname{tr}G_h(Y^a)\}
\lesssim
h^{-(d_\star+2)}.
\]
Taking expectations and applying the first part gives
\[
H_h^s(\mathcal Y_0)
\le
H_h^s(\mathbb R^d)
\lesssim
h^{-(d_\star+2)}.
\]
\end{proof}

\subsection{Section~\ref{sec:inference}}

\subsubsection{Proof of Theorem~\ref{thm:dis-band}}

\begin{proof}
When \(\theta^\dagger\) is learned from an auxiliary sample independent of
the analysis sample, all statements below are understood conditionally on
that external geometry-training information, in probability. Let
\[
T_n
=
\sup_{y\in\mathcal Y_0}
\frac{
\sqrt n
\left|
\widehat p^{\mathrm{geo}}_{a,h,\theta^\dagger}(y)
-
p^{\mathrm{geo}}_{a,h,\theta^\dagger}(y)
\right|
}
{\widehat\sigma_{\theta^\dagger}(y)}.
\]
The stated band covers simultaneously if and only if
\(T_n\le\widehat c_{1-\alpha}\).

By the assumed uniform asymptotic linearity,
\[
\sup_{y\in\mathcal Y_0}
\left|
\sqrt n
\{\widehat p^{\mathrm{geo}}_{a,h,\theta^\dagger}(y)
-
p^{\mathrm{geo}}_{a,h,\theta^\dagger}(y)\}
-
\mathbb G_n(y)
\right|
=
o_\Pb(1),
\]
where
\[
\mathbb G_n(y)
=
\sqrt n(\Pn-\Pb)
\phi^{\mathrm{geo}}_h(Z;y,\eta_{\theta^\dagger}).
\]
Let
\[
S_n
=
\sup_{y\in\mathcal Y_0}
\frac{|\mathbb G_n(y)|}{\sigma_{\theta^\dagger}(y)}.
\]
On the event
\[
\inf_{y\in\mathcal Y_0}\sigma_{\theta^\dagger}(y)\ge c_\sigma,
\quad
\sup_{y\in\mathcal Y_0}
|\widehat\sigma_{\theta^\dagger}(y)
-
\sigma_{\theta^\dagger}(y)|
\le c_\sigma/2,
\]
both denominators are uniformly bounded away from zero. Hence
\[
\begin{aligned}
|T_n-S_n|
&\lesssim
\sup_{y\in\mathcal Y_0}
\left|
\sqrt n
\{\widehat p^{\mathrm{geo}}_{a,h,\theta^\dagger}(y)
-
p^{\mathrm{geo}}_{a,h,\theta^\dagger}(y)\}
-
\mathbb G_n(y)
\right| \\
&\quad+
\sup_{y\in\mathcal Y_0}|\mathbb G_n(y)|
\sup_{y\in\mathcal Y_0}
|\widehat\sigma_{\theta^\dagger}(y)
-
\sigma_{\theta^\dagger}(y)|.
\end{aligned}
\]
The assumed process approximation entails asymptotic tightness in
\(\ell^\infty(\mathcal Y_0)\), and hence
\(\sup_{y\in\mathcal Y_0}|\mathbb G_n(y)|=O_\Pb(1)\)
\citep{van1996weak}. Together with uniform variance consistency and
nondegeneracy, this yields
\[
T_n=S_n+o_\Pb(1).
\]

Define the studentized multiplier statistic
\[
\widehat T_n
=
\sup_{y\in\mathcal Y_0}
\frac{|\widehat{\mathbb Z}_n(y)|}
{\widehat\sigma_{\theta^\dagger}(y)}.
\]
By the assumed multiplier approximation, the conditional law of
\(\widehat T_n\) consistently approximates that of \(S_n\). Standard
Gaussian and multiplier-bootstrap approximation results, together with
continuity at the \((1-\alpha)\)-quantile, therefore imply
\citep{chernozhukov2013gaussian,chernozhukov2016empirical}
\[
\Pb\{T_n\le\widehat c_{1-\alpha}\mid\theta^\dagger\}
\xrightarrow{\Pb}
1-\alpha
\]
when \(\theta^\dagger\) is random. If \(\theta^\dagger\) is fixed, the same
argument gives
\[
\Pb\{T_n\le\widehat c_{1-\alpha}\}
\to
1-\alpha.
\]
Thus the stated band has asymptotic simultaneous coverage \(1-\alpha\) for
\(\{p^{\mathrm{geo}}_{a,h,\theta^\dagger}(y):y\in\mathcal Y_0\}\).
Conditional convergence also implies unconditional coverage because the
conditional coverage probabilities are bounded.
\end{proof}

\subsubsection{Proof of Corollary~\ref{cor:population-geometry-band}}

\begin{proof}
Let
\[
E_n
=
\left\{
\sup_{y\in\mathcal Y_0}
|p^{\mathrm{geo}}_{a,h,\widehat\theta}(y)
-
p^{\mathrm{geo}}_{a,h,\theta_0}(y)|
\le
\Delta_{\rm geom,n}
\right\}.
\]
By assumption, \(\Pb(E_n)\to1\). On \(E_n\), the triangle inequality implies
that uniform coverage of \(p^{\mathrm{geo}}_{a,h,\widehat\theta}\) by the
uninflated band entails uniform coverage of
\(p^{\mathrm{geo}}_{a,h,\theta_0}\) by the inflated band. Hence
Theorem~\ref{thm:dis-band} gives asymptotic coverage at least \(1-\alpha\).

If \(\Delta_{\rm geom,n}=o_\Pb(n^{-1/2})\), variance nondegeneracy and uniform
variance consistency give
\[
\sup_{y\in\mathcal Y_0}
\frac{
\sqrt n
|p^{\mathrm{geo}}_{a,h,\widehat\theta}(y)
-
p^{\mathrm{geo}}_{a,h,\theta_0}(y)|
}
{\widehat\sigma_{\widehat\theta}(y)}
=
o_\Pb(1).
\]
Thus the studentized suprema for the fitted- and population-geometry targets
differ by \(o_\Pb(1)\). Multiplier validity and quantile continuity in
Theorem~\ref{thm:dis-band} therefore imply validity of the uninflated band for
the population-geometry target.
\end{proof}

\subsubsection{Proof of Theorem~\ref{thm:dss-stein-band}}

\begin{proof}
The argument is identical to that of Theorem~\ref{thm:dis-band}, with the index
set \(\mathcal Y_0\) replaced by \(\mathcal G\) and the DIS influence function
replaced by \(\phi_{\Psi,\theta^\dagger}\). When \(\theta^\dagger\) is learned
from an auxiliary sample independent of the analysis sample, all statements
below are understood conditionally on that external geometry-training
information, in probability.

Let
\[
T_n^\Psi
=
\sup_{g\in\mathcal G}
\frac{
\sqrt n\,
|\widehat\Psi_{a,h,\theta^\dagger}(g)
-
\Psi_{a,h,\theta^\dagger}(g)|
}
{\widehat\sigma_{\theta^\dagger}(g)}.
\]
Also let
\[
\mathbb G_n^\Psi(g)
=
\sqrt n(\Pn-\Pb)
\phi_{\Psi,\theta^\dagger}
(Z;g,\eta^s_{\theta^\dagger}).
\]
On the event that
\(\inf_{g\in\mathcal G}\sigma_{\theta^\dagger}(g)\ge c_\sigma\) and
\(\sup_{g\in\mathcal G}
|\widehat\sigma_{\theta^\dagger}(g)
-\sigma_{\theta^\dagger}(g)|
\le c_\sigma/2\),
\[
\begin{aligned}
\left|
T_n^\Psi-
\sup_{g\in\mathcal G}
\frac{|\mathbb G_n^\Psi(g)|}
{\sigma_{\theta^\dagger}(g)}
\right|
&\lesssim
\sup_{g\in\mathcal G}
\left|
\sqrt n
\{\widehat\Psi_{a,h,\theta^\dagger}(g)
-
\Psi_{a,h,\theta^\dagger}(g)\}
-
\mathbb G_n^\Psi(g)
\right| \\
&\quad+
\sup_{g\in\mathcal G}|\mathbb G_n^\Psi(g)|
\sup_{g\in\mathcal G}
|\widehat\sigma_{\theta^\dagger}(g)
-
\sigma_{\theta^\dagger}(g)|.
\end{aligned}
\]
The assumed process approximation entails
\(\sup_{g\in\mathcal G}|\mathbb G_n^\Psi(g)|=O_\Pb(1)\)
\citep{van1996weak}. Uniform asymptotic linearity, variance consistency, and
nondegeneracy therefore yield
\[
T_n^\Psi
=
\sup_{g\in\mathcal G}
\frac{|\mathbb G_n^\Psi(g)|}
{\sigma_{\theta^\dagger}(g)}
+
o_\Pb(1).
\]
The assumed multiplier approximation and quantile continuity, together with
standard multiplier-bootstrap results
\citep{chernozhukov2013gaussian,chernozhukov2016empirical}, therefore give
\[
\Pb\{T_n^\Psi\le\widehat c'_{1-\alpha}\mid\theta^\dagger\}
\xrightarrow{\Pb}
1-\alpha
\]
when \(\theta^\dagger\) is random, and
\(\Pb\{T_n^\Psi\le\widehat c'_{1-\alpha}\}\to1-\alpha\) when it is fixed. Hence the
stated band has asymptotic simultaneous coverage \(1-\alpha\) over
\(\mathcal G\). Conditional convergence also implies unconditional coverage.
\end{proof}

\subsubsection{Proof of Corollary~\ref{cor:population-geometry-stein-band}}

\begin{proof}
The argument parallels that of
Corollary~\ref{cor:population-geometry-band}. On the event that the
fitted-geometry band covers uniformly and the stated drift envelope holds, the
triangle inequality yields the inflated band for
\(\Psi_{a,h,\theta_0}\). Hence Theorem~\ref{thm:dss-stein-band} gives
asymptotic simultaneous coverage at least \(1-\alpha\).

If \(\Delta_{\Psi,n}=o_\Pb(n^{-1/2})\), variance nondegeneracy and uniform
variance consistency imply that the studentized suprema for the fitted- and
population-geometry functionals differ by \(o_\Pb(1)\). The multiplier and
quantile-continuity conditions therefore establish validity of the uninflated
band.
\end{proof}

\end{document}